\documentclass[12pt]{article}
\usepackage{amsmath}
\usepackage{graphicx,psfrag,epsf}
\usepackage{enumerate}
\usepackage{natbib}
\allowdisplaybreaks
\usepackage{url} 

\newcommand{\blind}{0}

\usepackage[margin=1in, a4paper]{geometry}
\usepackage{adjustbox}
\usepackage{enumerate}
\usepackage{bm}
\usepackage{amsfonts}
\usepackage[colorlinks,citecolor=blue,urlcolor=blue,hyperfootnotes=false]%
{hyperref}
\usepackage{amsthm}
\usepackage{amssymb}
\usepackage{graphicx}
\usepackage{float}
\usepackage{booktabs}
\usepackage{threeparttable}
\usepackage{amsmath}
\numberwithin{equation}{section}
\usepackage{lscape}
\usepackage{scrextend}
\usepackage{relsize}
\usepackage{multicol}
\usepackage{chngcntr}
\usepackage{etoolbox}
\usepackage{amssymb}
\usepackage{hyperref}
\usepackage{tabularx}
\usepackage{mathrsfs}
\usepackage[english]{babel}
\usepackage{multirow}
\usepackage{booktabs}
\usepackage[labelfont=bf]{caption}
\usepackage{float}
\usepackage{titlesec}
\usepackage{array}
\usepackage{color}
\usepackage{afterpage}
\usepackage{framed}
\usepackage{pifont} 
\usepackage{float}
\usepackage{afterpage}%
\usepackage{comment}
\usepackage{multirow}
\usepackage{accents}
\usepackage{amssymb}
\usepackage{xcolor}
\usepackage{extpfeil}
\usepackage{natbib}
\setcitestyle{open={(},close={)}}

\newtheorem{theorem}{Theorem}[section]
\newtheorem{corollary}{Corollary}[section]

\newtheorem{lemma}{Lemma}[section]

\theoremstyle{definition} 
\newtheorem{assumption}{Assumption}[section]

\newtheorem{case}{Case}[section]

\addto\extrasenglish{%
   
}
\addto\extrasenglish{%
   
}
\addto\extrasenglish{%
  
}
\addto\extrasenglish{%
  
}
\addto\extrasenglish{%
  
}
\addto\extrasenglish{%
  
}
\addto\extrasenglish{%
  
}
\addto\extrasenglish{%
  
}

\hypersetup{
  linkcolor=blue,
  urlcolor=blue}

\begin{document}

\def\spacingset#1{\renewcommand{\baselinestretch}%
{#1}\small\normalsize} \spacingset{1}


\if0\blind
{
  \title{\bf Paired Sample Tests for High-dimensional Uncorrelatedness via Random Integration}
  \author{Shiyao Huang\thanks{Department of Business Statistics and Econometrics, Guanghua School of Management, Peking University, Beijing, 100871, China. Email: \texttt{2401111054@gsm.pku.edu.cn}.}\hspace{.2cm}\\
    Peking University\\
    and \\
    Xiaojun Song\thanks{Corresponding author: Department of Business Statistics and Econometrics, Guanghua School of Management, Peking University, Beijing, 100871, China. Email: \texttt{sxj@gsm.pku.edu.cn}. 
This work was supported by the National Natural Science Foundation of China [Grant Numbers 72373007 and 72333001]. The author also gratefully acknowledges the research support from the Center for Statistical Science of Peking University, and the Key Laboratory of Mathematical Economics and Quantitative Finance (Peking University) of the Ministry of Education, China.} \\
    Peking University}
  \maketitle
} \fi

\if1\blind
{
  \bigskip
  \bigskip
  \bigskip
  \begin{center}
    {\LARGE\bf Paired Sample Tests for High-dimensional Covariance Matrices via Random Integration}
\end{center}
  \medskip
} \fi

\bigskip
\begin{abstract}
This paper proposes a novel nonparametric test to assess the uncorrelatedness between two high-dimensional random vectors. We develop our test by generalizing the random integration proposed by \citet{jiang2023use,jiang2024nonparametric}, and the resulting test statistic estimates a weighted squared $\mathscr{L}_2$ norm of the covariance matrix. Asymptotic properties of the test statistic are derived by letting both the sample size $n$ and the dimension $p$ diverge to infinity. Under the null hypothesis of uncorrelatedness, our proposed test statistic is asymptotically normal with zero mean and unit variance, without requiring any specification of the relative magnitude regarding $n$ and $p$. 
Monte Carlo simulations demonstrate the good finite-sample performance of our proposed methods. Compared with many existing tests, our test statistic is more powerful at detecting ``weak but pervasive'' dependence while maintaining a comparable empirical size. The advantages of the proposed methods are further illustrated by an empirical analysis that assesses the correlation between DNA methylation and gene expression.
\end{abstract}

\noindent%
{\it Keywords:}  High-dimensional tests; Paired-sample tests; Random integration; Covariance structures; Gene-set testing
\vfill

\newpage
\spacingset{1.8} 

\section{Introduction}
\label{Introduction}
Measuring the level of association between a pair of random vectors is a fundamental problem in statistical analysis, with wide applications in economics \citep{gorecki2020independence}, biological studies \citep{zhang2024fast}, machine learning \citep{li2023deep}, and many other fields. To quantify dependence structures, a large body of literature has proposed various measures, including the Pearson correlation \citep{pearson1895vii}, rank correlations \citep{kendall1938new,spearman1961proof}, correlations based on distribution functions, density functions, and characteristic functions \citep{blum1961distribution,hoeffding1994non,szekely2007measuring}, kernel-based dependence measure \citep{gretton2005kernel}, and sign covariances \citep{bergsma2014consistent}. For recent development, see also \citet{berrett2020conditional} and \citet{shah2020hardness}.

Among the various approaches to measuring dependence, testing for independence is perhaps one of the most fundamental problems and has attracted considerable attention from researchers. For the fixed-dimensional case, methods for independence assessment include tests based on distance covariances \citep{szekely2007measuring,szekely2009brownian}, sign covariances \citep{bergsma2014consistent}, ball covariances \citep{pan2020ball}, angle covariances \citep{jiang2026robust}, the Hilbert-Schmidt Independence Criterion \citep{gretton2005measuring,gao2025studentized}, ranks \citep{weihs2018symmetric,deb2023multivariate}, and copulas \citep{kojadinovic2009tests,genest2019testing}. In recent decades, many efforts have been devoted to testing independence in the high-dimensional regime. For example, \citet{szekely2013distance} modified the distance correlation and established asymptotic theories by fixing the sample size while allowing the dimension to diverge to infinity. \citet{zhu2020distance} investigated distance- and reproducing-kernel-based dependence metrics and proposed aggregating marginal sample distances to detect nonlinear dependence. Later on, \citet{gao2021asymptotic} extended the framework of \citet{szekely2013distance} by allowing both the sample size and the dimension to diverge arbitrarily and further established convergence rates for the corresponding normal approximations. \citet{zhou2024rank} proposed tests for high-dimensional independence based on rank-based indices derived from Hoeffding's $D$, Blum--Kiefer--Rosenblatt's $R$, and Bergsma--Dassios--Yanagimoto's $\tau^*$. More recently, \citet{wang2026testing} developed rank-based max-type and sum-type tests, whereas \citet{jiang2026distribution} introduced the semi-Grothendieck's covariance for independence testing.

Despite their popularity, high-dimensional independence tests typically degenerate to tests for uncorrelatedness when the dimension $p$ grows much faster than the sample size $n$ \citep{zhu2020distance,gao2021asymptotic}. Moreover, \citet{gao2021asymptotic} showed that preserving the ability of distance correlation to detect nonlinear dependence requires the dimension to grow at a sufficiently slow rate relative to the sample size ($p^2/n\rightarrow 0$). Thus, in the high-dimensional regime, it is natural and reasonable to focus directly on testing uncorrelatedness when both $p$ and $n$ diverge in an arbitrary manner. However, the existing literature on high-dimensional uncorrelatedness remains rather limited. In this paper, we aim to bridge this gap by proposing a family of test statistics based on the random integration \citep{jiang2023use,jiang2024nonparametric}. Among existing works, the proposal most closely related to ours is that of \citet{yang2015independence}, who employed canonical correlation to test uncorrelatedness. Nevertheless, their analysis is limited to the regime where the dimension is at most comparable to the sample size ($p/n\rightarrow c\in(0,\infty)$), thereby excluding the high-dimensional, low-sample-size setting. In addition, their procedure assumes that the mean vectors of the underlying distributions are known, which is generally unrealistic in practice. In sharp contrast, our proposed test allows the dimensionality to grow much faster than the sample size ($p/n\rightarrow \infty$) and is self-centralized, requiring no prior knowledge of the mean vectors.

We construct our tests based on the Random Integration of Differences (RID). Originally introduced by \citet{jiang2023use,jiang2024nonparametric}, RID has been successfully applied to two-sample testing problems, including tests for the equality of mean vectors and variance matrices, yet its applicability to paired data remains unexplored. RID can be viewed as a generalization of traditional $\mathscr{L}_2$-norm-based tests, including those proposed by \citet{chen2010two}, Chen et al. (2010), and \citet{li2012two}. However, the standard $\mathscr{L}_2$ norm assigns equal weights to all entries, which may lead to a loss of power against ``weak but pervasive'' alternatives, where pairwise correlations are individually weak but collectively exhibit a globally pervasive dependence structure. To address this issue, RID introduces a weighting matrix to enhance empirical power and offer empirical researchers greater flexibility in constructing test statistics tailored to specific applications. Our proposed RID-based test for uncorrelatedness has the following advantages. First, it does not require a direct relationship between $p$ and $n$. Second, its limiting distribution is standard normal under the null hypothesis of uncorrelatedness, with critical values readily available. Third, compared with existing tests, it delivers comparable empirical size and is more powerful in detecting ``weak but pervasive'' dependence, which is commonly encountered in gene studies \citep{yang2010common}.

The rest of this paper is organized as follows. In \autoref{Testing procedures}, we introduce the RID and construct the main building block of our test. The asymptotic properties of the proposed test statistic under the null and alternative hypotheses are systematically analyzed in \autoref{Asymptotic theory}. \autoref{Simulations} examines the finite-sample performance of the proposed methods through numerical simulations. An empirical application examining the correlation between DNA methylation and gene expression levels is presented in \autoref{Empirical analysis}. \autoref{Concluding comments} concludes the paper. Proofs of theoretical results are relegated to the online supplementary material. Throughout this article, we use $\stackrel{d}{\longrightarrow}$ to denote convergence in distribution, $\stackrel{p}{\longrightarrow}$ to denote convergence in probability, and $\longrightarrow$ to denote convergence of sequences.

\section{Testing procedures}
\label{Testing procedures}

Suppose ${X}$ and ${Y}$ are two $p$-dimensional random vectors, with means ${\mu}_1 = \mathbb{E}({X})$, ${\mu}_2 = \mathbb{E}({Y})$ and variance matrices ${\Sigma}_1 = \mathbb{E}[({X} - {\mu}_1)({X} - {\mu}_1)']$, ${\Sigma}_2 = \mathbb{E}[({Y} - {\mu}_2)({Y} - {\mu}_2)']$, respectively. Let ${\Sigma}$ denote the covariance matrix ${\Sigma} = \mathbb{E}[({X} - {\mu}_1)({Y} - {\mu}_2)']$. In this paper, we aim to test whether ${X}$ and ${Y}$ are uncorrelated, i.e., 
\begin{align}
\mathcal{H}_0:\ {\Sigma}={O}_{p} \quad \text{against} \quad \mathcal{H}_1:\ {\Sigma}\neq {O}_{p},
\label{testing problem}
\end{align}
where ${O}_{p}$ is the $p\times p$ zero matrix.

We motivate our test based on an important equivalence. On one hand, for any ${\xi},{\eta}\in\mathbb{R}^p$, ${\Sigma}= {O}_p$ implies that ${\xi}^\prime {\Sigma}{\eta} = 0$. On the other hand, denote $\sigma_{ij}$ the $(i,j)$th element of ${\Sigma}$ and ${e}_i$ the $p$-dimensional vector with $i$th element equal to 1 and others equal to 0. Note that $\sigma_{ij} = {e}_i'{\Sigma}{e}_j$. Then, $ e_i^\prime{\Sigma} e_j=0\,\,\,\forall i,j=1,2,\ldots,p$ yields that ${\Sigma} = {O}_p$. As such, we derive the equivalence
\begin{align}
{\Sigma}={O}_{p}\iff {\xi}^\prime {\Sigma}{\eta}=0,\ \forall {\xi},{\eta}\in\mathbb{R}^p.
\label{equivalence}
\end{align}
Following the RID framework of \citet{jiang2023use,jiang2024nonparametric} and the equivalence in \eqref{equivalence}, testing $\mathcal{H}_0$ can be reformulated as assessing whether
\begin{align*}
Q := \int_{\mathbb{R}^{2p}} ({\xi}^\prime {\Sigma}{\eta})^2 w({\xi}) w({\eta}) d{\xi} d{\eta} = 0,
\end{align*}
where $w: \mathbb{R}^p \to [0,\infty)$ is a weight function satisfying $0 < \int_{\mathbb{R}^p} w({x}) d{x} < \infty$. $Q$ can be regarded as a Cram\'{e}r--von Mises type statistic \citep{anderson1952asymptotic} as it integrates over all possible directions. Suppose we have $n$ independent and identically distributed (i.i.d.) observations $\{({X}_i,{Y}_i)\}_{i=1}^n$ from $({X},{Y})$, a natural idea to build the test statistic may be first estimating ${\Sigma}$ via
\begin{align*}
\breve{{\Sigma}}_n = \frac{1}{n}\sum_{i=1}^n({X}_i - \bar{{X}})({Y}_i - \bar{{Y}}),\,\,\,\text{where}\,\,\,\bar{X} = \frac{1}{n}\sum_{i=1}^n{X}_i\,\,\,\text{and}\,\,\,\bar{Y} = \frac{1}{n}\sum_{i=1}^nY_i,
\end{align*}
and then constructing $\breve{Q}_n = \int_{\mathbb{R}^{2p}} ({\xi}^\prime \breve{{\Sigma}}_n{\eta})^2 w({\xi}) w({\eta}) d{\xi} d{\eta}$. However, without specifying any function structures of $w({x})$, the calculation of $\breve{Q}_n$ may be computationally burdensome, especially for large $p$. Luckily, the following \autoref{Th2.1} shows that, by restricting $w({x})$ as a product of $p$ univariate density functions, $Q$ admits a closed-form expression, thereby avoiding the need to calculate high-dimensional numerical integrations.

\begin{theorem}
\label{Th2.1}
Suppose $w({x})=\prod_{i=1}^pw_i(x_i)$, where $w_i(x_i)$ is a density function with mean $a_i=\int_{-\infty}^\infty tw_i(t)dt$ and variance $b_i^2=\int_{-\infty}^\infty (t-a_i)^2w_i(t)dt$ for $i=1,2,\ldots,p$. Then, 
\begin{align}
Q=\textnormal{tr}({W}{\Sigma}{W}{\Sigma}^\prime),
\label{simplified expression}
\end{align}
where ${W}={B}+{a}{a}^\prime$ with ${a}=(a_1,a_2,\ldots,a_p)^\prime$ and ${B}=\textnormal{diag}(b_1^2,b_2^2,\ldots,b_p^2)$.
\end{theorem}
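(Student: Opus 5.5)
The plan is to read the integral as an expectation. If $w$ is a product of univariate densities, then $w(\xi)w(\eta)\,d\xi\,d\eta$ is the law of two independent random vectors $\xi,\eta\in\mathbb{R}^p$. Each has independent coordinates with $\mathbb{E}(\xi_i)=a_i$ and $\mathrm{Var}(\xi_i)=b_i^2$, and likewise for $\eta$. So $Q=\mathbb{E}[(\xi'\Sigma\eta)^2]$, and the task reduces to computing the second moments of $\xi$ and $\eta$.

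The first step is to observe that independence of the coordinates gives
\begin{align*}
\mathbb{E}(\xi\xi')=\mathrm{Cov}(\xi)+\mathbb{E}(\xi)\mathbb{E}(\xi)'=B+aa'=W,
\end{align*}
and the same identity holds for $\eta$. The off-diagonal covariances vanish precisely because the coordinates are independent.

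Next, write the square as a trace:
\begin{align*}
(\xi'\Sigma\eta)^2=(\xi'\Sigma\eta)(\eta'\Sigma'\xi)=\mathrm{tr}(\xi\xi'\Sigma\eta\eta'\Sigma').
\end{align*}
Expectation and trace commute, and $\xi$ is independent of $\eta$, so
\begin{align*}
Q=\mathrm{tr}\big(\mathbb{E}(\xi\xi')\,\Sigma\,\mathbb{E}(\eta\eta')\,\Sigma'\big)=\mathrm{tr}(W\Sigma W\Sigma').
\end{align*}
An equivalent route is to expand $(\sum_{i,j}\xi_i\sigma_{ij}\eta_j)^2$ entrywise. One integrates coordinate by coordinate with Fubini, using $\int t\,w_i(t)\,dt=a_i$, $\int t^2w_i(t)\,dt=b_i^2+a_i^2$, and the fact that each $w_i$ integrates to one.

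There is no real obstacle. The only point needing care is justifying the exchange of integration and summation, which is Fubini/Tonelli. That requires the second moments $b_i^2+a_i^2$ to be finite, which is implicit in the statement because the variances $b_i^2$ are assumed to exist. The integrand is a finite sum of monomials of degree at most two in each of $\xi$ and $\eta$, so it is integrable under the product density.
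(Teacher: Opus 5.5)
Your proposal is correct and follows essentially the same argument as the paper. The paper also computes $\int \xi\xi' w(\xi)\,d\xi = B + aa' = W$ and then factors the double integral of $\xi'\Sigma\eta\eta'\Sigma'\xi$ into $\mathrm{tr}(W\Sigma W\Sigma')$; your probabilistic phrasing and your explicit Fubini/Tonelli justification are just a more careful rendering of that same computation.
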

\eqref{simplified expression} can be viewed as a weighted version of the squared $\mathscr{L}_2$ norm $\mathrm{tr}({\Sigma}{\Sigma}^\prime)$. However, both the subsequent theoretical analysis and numerical experiments demonstrate that the test based on $\mathrm{tr}({\Sigma}{\Sigma}^\prime)$ suffers from a severe loss of power against ``weak but pervasive'' alternatives. Under such scenarios, the individual elements of the covariance matrix are vanishingly small, yet the matrix itself is highly dense with few zero entries. Thus, incorporating a weighting matrix ${W}$ into the discrepancy measure allows different components of the covariance structure to contribute unequally to the test statistic, thereby enhancing the test's flexibility in adapting to various dependence structures and improving its ability to capture signals with different patterns. Following (\ref{simplified expression}), we decompose $Q$ as the linear combination of three expectations:
\begin{align*}
Q=&\textnormal{tr}\left({W}\mathbb{E}[({X}_i-{\mu}_1)({Y}_i-{\mu}_2)']{W}\mathbb{E}[({Y}_j-{\mu}_2)({X}_j-{\mu}_1)']\right)\\
=&\textnormal{tr}({W}\mathbb{E}({X}_i{Y}_i^\prime){W}\mathbb{E}({Y}_j{X}_j^\prime))-2\textnormal{tr}({W}\mathbb{E}({X}_i{Y}_i^\prime){W}{\mu}_2{\mu}_1^\prime)+\textnormal{tr}({W}{\mu}_1{\mu}_2^\prime{W}{\mu}_2{\mu}_1^\prime)\\
=&\mathbb{E}({X}_i^\prime{W}{X}_j{Y}_j^\prime{W}{Y}_i)-2\mathbb{E}({X}_i^\prime{W}{X}_j{Y}_j^\prime{W}{Y}_k)+\mathbb{E}({X}_i^\prime{W}{X}_j{Y}_k^\prime{W}{Y}_l)\\
:=& Q_1-2Q_2+Q_3,
\end{align*}
where $({X}_i,{Y}_i)$, $({X}_j,{Y}_j)$, $({X}_k,{Y}_k)$, $({X}_l,{Y}_l)$ are independent copies of $({X},{Y})$. Therefore, with the observed paired data $\{({X}_i,{Y}_i)\}_{i=1}^n$, we construct the estimator for $Q$ as $\hat{Q}_n=\hat{Q}_{n1}-2\hat{Q}_{n2}+\hat{Q}_{n3}$, where 
\begin{align*}
&\hat{Q}_{n1}=\frac{1}{n(n-1)}\sum_{i\neq j}^n {X}_i^\prime{W}{X}_j{Y}_j^\prime{W}{Y}_i,\ \ \hat{Q}_{n2}=\frac{1}{n(n-1)(n-2)}\sum_{i\neq j\neq k}^n {X}_i^\prime{W}{X}_j{Y}_j^\prime{W}{Y}_k,\\
&\hat{Q}_{n3}=\frac{1}{n(n-1)(n-2)(n-3)}\sum_{i\neq j\neq k\neq l}^n {X}_i^\prime{W}{X}_j{Y}_k^\prime{W}{Y}_l.
\end{align*}
Obviously, $\hat{Q}_n$ is an unbiased estimator for $Q$. Furthermore, it is worthwhile to note that $\hat{Q}_n$ is invariant to location shifts. In other words, with ${A}_i={X}_i-{\mu}_1$, ${B}_i={Y}_i-{\mu}_2$, $\hat{Q}_n$ can be automatically centralized as $\hat{Q}_n=\tilde{Q}_{n1}-2\tilde{Q}_{n2}+\tilde{Q}_{n3}$, where 
\begin{align*}
&\tilde{Q}_{n1}=\frac{1}{n(n-1)}\sum_{i\neq j}^n {A}_i^\prime{W}{A}_j{B}_j^\prime{W}{B}_i,\ \ \tilde{Q}_{n2}=\frac{1}{n(n-1)(n-2)}\sum_{i\neq j\neq k}^n {A}_i^\prime{W}{A}_j{B}_j^\prime{W}{B}_k,\\
&\tilde{Q}_{n3}=\frac{1}{n(n-1)(n-2)(n-3)}\sum_{i\neq j\neq k\neq l}^n {A}_i^\prime{W}{A}_j{B}_k^\prime{W}{B}_l.
\end{align*}
Therefore, our proposed methods require no prior information regarding the mean vectors, and thus overcome a key limitation of \citet{yang2015independence}. In the next section, we systematically investigate the asymptotic properties of $\hat{Q}_n$ and develop the corresponding test statistic.

\section{Asymptotic theory}
\label{Asymptotic theory}

In this section, we establish the asymptotic properties of $\hat{Q}_n$ under the null hypothesis $\mathcal{H}_0$ and the alternative hypothesis $\mathcal{H}_1$. We construct our final test statistic based on $\hat{Q}_n$ and demonstrate that it is asymptotically standard normal under the null hypothesis of uncorrelatedness.

\subsection{Asymptotic null distribution}
\label{Asymptotic null distribution}

To investigate the asymptotic properties of $\hat{Q}_n$, we make the following assumptions.

\begin{assumption}
\label{A1}
There exist two  $p \times d$ matrices $\Gamma_1$ and $\Gamma_2$, and $d$-dimensional random vectors $\{{Z}_{i}\}_{i=1}^n$, such that $ X_i = \mu_1 + \Gamma_1 {Z}_{i}$, $ Y_i = \mu_2 + \Gamma_2 {Z}_{i}$ for $i = 1, 2,\ldots, n$. $\Gamma_1$, $\Gamma_2$, and $\{{Z}_{i}\}_{i=1}^n$ satisfy:
\begin{enumerate}[(1)]
    \item $\Gamma_1 \Gamma_1^\prime = \Sigma_1$, $\Gamma_2 \Gamma_2^\prime = \Sigma_2$, and ${\Gamma}_1{\Gamma}_2^\prime={\Sigma}$, with $d \geqslant 2p$ and $\text{rank}({\Gamma}_1) = \text{rank}({\Gamma}_2) = p$.
    \item $\{{Z}_{i}\}_{i=1}^n$ are independent and identically distributed, with $\mathbb{E}({Z}_{i}) = 0_{d}$, $\text{Var}({Z}_{i}) =  I_{d}$, where $0_{d}$ is the $d$-dimensional zeros and $ I_{d}$ is the $d$-dimensional identity matrix.
    \item $\mathbb{E}(Z_{ij}^4) = 3 + \Delta$ for some constant $\Delta$, where $Z_{ij}$ is the $j$th component of ${Z}_{i}$. Each $Z_{ij}$ has a finite 8th moment. Also, for a positive integer $q$ and nonnegative integers $\zeta_j$'s such that $\sum_{j=1}^q \zeta_j \leq 8$,
    \begin{align}
    \mathbb{E}(Z_{ij_1}^{\zeta_1} Z_{ij_2}^{\zeta_2}\cdots Z_{ij_q}^{\zeta_q}) = \mathbb{E}(Z_{ij_1}^{\zeta_1})\mathbb{E}(Z_{ij_2}^{\zeta_2}) \cdots \mathbb{E}(Z_{ij_q}^{\zeta_q}),
    \label{A2(3)}
    \end{align}
    where $j_1, j_2,\ldots, j_q$ are mutually distinct indices.
\end{enumerate}
\end{assumption}

\begin{assumption}
\label{A2}
As $n\rightarrow \infty$, $p=p(n)\rightarrow\infty$, $\textnormal{tr}(({W}{\Sigma}_i)^2)\rightarrow \infty$, $i=1,2$, and 
\begin{align}
\textnormal{tr}(({W}{\Sigma}_i)^4)=o\big(\textnormal{tr}^2(({W}{\Sigma}_i)^2)\big), \ i=1,2.
\label{equation A2}
\end{align}
\end{assumption}

\autoref{A1} provides a highly general framework for generating the paired population $({X},{Y})$, since the dimension of ${Z}_i$ is allowed to be arbitrary provided that $d\geqslant 2p$. It is worth noting that \autoref{A1}(1) is stronger than the assumptions commonly imposed in the two-sample testing literature \citep{chen2010two,chen2010tests,li2012two,jiang2023use,jiang2024nonparametric}, which typically only require $d\geqslant p$. Here, $d\geqslant 2p$ arises mainly due to the paired-sample structure. To provide some intuition, recall that the positive definiteness of ${\Sigma}_1$ and ${\Sigma}_2$ requires the row vectors of each ${\Gamma}_j$, $j=1,2$, to be linearly independent. Moreover, under the null hypothesis, we have ${\Sigma}={O}_p$, implying that the row vectors of ${\Gamma}_1$ and ${\Gamma}_2$ must also be mutually linearly independent. Consequently, under the null hypothesis, there exist $2p$ linearly independent vectors in a $d$-dimensional space, which necessarily requires $d\geqslant 2p$. \autoref{A1}(2)--(3) only require some moment restrictions about ${Z}_i$, and the distribution of ${Z}_i$ is unspecified. (\ref{A2(3)}) provides a kind of pseudo-independence for each ${Z}_{ij}$ and reveals that the components of ${Z}_i$ are largely white noises. Note that the model provided in \autoref{A1} is a popular setup in high-dimensional tests; see, for instance, \citet{bai1996effect}, \citet{chen2010two}, Chen et al. (2010), \citet{li2012two}, and \citet{jiang2023use,jiang2024nonparametric}.

\autoref{A2} generalizes Assumption A2 in \citet{li2012two}, and does not require any relative magnitude regrading $p$ and $n$. To gain some insights into \autoref{A2}, consider a special case where $a_1=a_2=\ldots=a_p=a\neq 0$, $b_1=b_2=\ldots=b_p=b>0$, $r=a/b$, and $\lambda_1\leqslant \lambda_2\leqslant\ldots\leqslant \lambda_p$ and $\gamma_{11}\leqslant \gamma_{12}\leqslant\ldots\leqslant \gamma_{1p}$ are eigenvalues of ${W}$ and ${\Sigma}_1$, respectively. By some simple calculations, we obtain $\lambda_1=\ldots=\lambda_{p-1}=b^2$, $\lambda_p=b^2+pa^2$. Then, as shown in \citet{jiang2024nonparametric}, under such parametric setups for ${W}$, 
\begin{align*}
\frac{\text{tr}(( W \Sigma_1)^4)}{\text{tr}^2(( W \Sigma_1)^2)} 
\leq 
\frac{\lambda_1^4 \text{tr}( \Sigma_1^4) + (\lambda_p^4 - \lambda_1^4) \gamma_{1p}^4}{\lambda_1^4 \text{tr}^2( \Sigma_1^2)}
=
\frac{\text{tr}( \Sigma_1^4)}{\text{tr}^2( \Sigma_1^2)} 
+ 
\frac{[(1 + p r^2)^4 - 1] \gamma_{1p}^4}{\text{tr}^2( \Sigma_1^2)}.
\end{align*}
Therefore, as long as $\textnormal{tr}({\Sigma}_1^4)=o\big(\textnormal{tr}^2({\Sigma}_1^2)\big)$, $r^2=O(p^{-3/4})$, and $p^{1/2}\gamma_{1p}^2=o(\textnormal{tr}\big({\Sigma}_1^2)\big)$, \autoref{A2} holds for ${\Sigma}_1$. As pointed out in Chen et al. (2010), $\textnormal{tr}({\Sigma}_1^4)=o\big(\textnormal{tr}^2({\Sigma}_1^2)\big)$ is in fact a mild condition. As long as all eigenvalues of ${\Sigma}_1$ are bounded away from zero and infinity, $\textnormal{tr}({\Sigma}_1^4)=o\big(\textnormal{tr}^2({\Sigma}_1^2)\big)$ is trivially true when $p\rightarrow \infty$. Meanwhile, some commonly encountered variance matrices satisfy $\textnormal{tr}({\Sigma}_1^4)=o\big(\textnormal{tr}^2({\Sigma}_1^2)\big)$. For example, ${\Sigma}_1=(\rho^{|i-j|})_{p\times p}$, ${\Sigma}_1=(\sigma_i\sigma_j\rho^{|i-j|})_{p\times p}$, and ${\Sigma}_1=(\sigma_i\sigma_j\rho^{|i-j|}\mathbb{I}(|i-j|\leqslant d_0))_{p\times p}$, where $\rho\in(-1,1)$, $d_0$ is a predetermined positive integer, and $\mathbb{I}(\mathscr{A})$ is the indicator function for event $\mathscr{A}$. As shown in Chen et al. (2010), as long as all $\sigma_i$'s are bounded away from zero and infinity, all these three types of variance matrices satisfy $\textnormal{tr}({\Sigma}_1^4)=o\big(\textnormal{tr}^2({\Sigma}_1^2)\big)$. In addition, $p^{1/2}\gamma_{1p}^2=o\big(\textnormal{tr}^2({\Sigma}_1^2)\big)$ is also a commonly used assumption in the literature; see also \citet{wang2015high} and \citet{jiang2024nonparametric}.

We first establish some important results for the variance order of $\tilde{Q}_{nj}$, $j=1,2,3$, which hold regardless of the null hypothesis and facilitate us in deriving the leading term of $\textnormal{Var}(\hat{Q}_n)$. 

\begin{lemma}
\label{Lm3.1}
Suppose \autoref{A1} holds. Then, as $n,p\rightarrow\infty$,
\begin{align*}
\textnormal{Var}(\tilde{Q}_{n1})=&\frac{2}{n(n-1)}\textnormal{tr}(({W}{\Sigma}_1)^2)\textnormal{tr}(({W}{\Sigma}_2)^2)\\
&+O\Big(n^{-1}\textnormal{tr}\big(({W}{\Sigma}{W}{\Sigma}^\prime)^2+{W}{\Sigma}_1{W}{\Sigma}{W}{\Sigma}_2{W}{\Sigma}^\prime+{\Lambda}_3\circ {\Lambda}_3\big)\Big)\\
&+O\Big(n^{-2}\big[\textnormal{tr}^2({W}{\Sigma}{W}{\Sigma}^\prime)+\textnormal{tr}\big({\Lambda}_1^2\circ {\Lambda}_2^2+({\Lambda}_1\circ {\Lambda}_1)({\Lambda}_2\circ {\Lambda}_2)\big)\big]\Big),\\
\textnormal{Var}(\tilde{Q}_{n2})=&O\Big(n^{-2}\textnormal{tr}({W}{\Sigma}_1{W}{\Sigma}{W}{\Sigma}_2{W}{\Sigma}^\prime)+n^{-3}\big[\textnormal{tr}(({W}{\Sigma}_1)^2)\textnormal{tr}(({W}{\Sigma}_2)^2)+\textnormal{tr}({\Lambda}_1^2\circ {\Lambda}_2^2)\big]\Big),\\
\textnormal{Var}(\tilde{Q}_{n3})=&O\Big(n^{-4}\textnormal{tr}(({W}{\Sigma}_1)^2)\textnormal{tr}(({W}{\Sigma}_2)^2)\Big),
\end{align*}    
where ${\Lambda}_1={\Gamma}_1^\prime{W}{\Gamma}_1$, ${\Lambda}_2={\Gamma}_2^\prime{W}{\Gamma}_2$, $\Lambda_3={\Lambda}_1{\Lambda}_2$, and $\circ$ is the Hadamard Product such that for $k_1\times k_2$ matrices ${M}_1=(m_{1,ij})_{k_1\times k_2}$ and ${M}_2=(m_{2,ij})_{k_1\times k_2}$, ${M}_1\circ {M}_2=(m_{1,ij}m_{2,ij})_{k_1\times k_2}$.
\end{lemma}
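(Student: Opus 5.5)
Throughout I write $A_i=\Gamma_1Z_i$ and $B_i=\Gamma_2Z_i$, so that every summand of $\tilde{Q}_{n1},\tilde{Q}_{n2},\tilde{Q}_{n3}$ becomes a polynomial in the $Z$'s with coefficient matrices built from $\Gamma_1'W\Gamma_1=\Lambda_1$, $\Gamma_2'W\Gamma_2=\Lambda_2$ and $\Gamma_1'W\Gamma_2$. For example,
\[
A_i'WA_jB_j'WB_i=Z_i'\Lambda_1Z_jZ_j'\Lambda_2Z_i .
\]
The key identities used throughout are
\[
\mathrm{tr}(\Lambda_1^2)=\mathrm{tr}((W\Sigma_1)^2),\qquad \mathrm{tr}(\Lambda_1\Lambda_2)=\mathrm{tr}(W\Sigma W\Sigma'),
\]
and analogous ones, e.g. $\mathrm{tr}(\Lambda_1\Lambda_2\Lambda_1\Lambda_2)=\mathrm{tr}((W\Sigma W\Sigma')^2)$ and $\mathrm{tr}(\Lambda_1^2\Lambda_2^2)=\mathrm{tr}(W\Sigma_1W\Sigma W\Sigma_2W\Sigma')$. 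These come from cyclic invariance together with $\Gamma_1\Gamma_2'=\Sigma$. The moment computations rest only on Assumption \ref{A1}(2)--(3): the pseudo-independence (\ref{A2(3)}) up to order $8$ lets me evaluate mixed moments of $Z_i$ exactly as for independent coordinates, with the excess kurtosis $\Delta$ producing the Hadamard-product terms.

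For each $\tilde{Q}_{nk}$ I would write it as a U-statistic of order $m\in\{2,3,4\}$ (after symmetrization) and expand the variance by the number of shared indices $c$ between two summands. The $c$-shared covariance enters with weight $O(n^{-c})$.

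For $\tilde{Q}_{n1}$ with kernel $h(i,j)=Z_i'\Lambda_1Z_jZ_j'\Lambda_2Z_i$, I would compute three quantities.
\begin{enumerate}[(i)]
\item The mean: $\mathbb{E}h=\mathrm{tr}(\Lambda_1\Lambda_2)=\mathrm{tr}(W\Sigma W\Sigma')$, whose square gives an $n^{-2}\mathrm{tr}^2(W\Sigma W\Sigma')$ correction.
\item The fully shared second moment $\mathbb{E}h(1,2)^2$. Conditioning on $Z_2$, set $M=\Lambda_1Z_2Z_2'\Lambda_2$ and use the standard quartic-form formula
\[
\mathbb{E}(Z'MZ)^2=\mathrm{tr}^2(M)+\mathrm{tr}(M^2)+\mathrm{tr}(MM')+\Delta\,\mathrm{tr}(M\circ M).
\]
Taking expectation over $Z_2$ then produces the leading $2\,\mathrm{tr}(\Lambda_1^2)\mathrm{tr}(\Lambda_2^2)=2\,\mathrm{tr}((W\Sigma_1)^2)\mathrm{tr}((W\Sigma_2)^2)$ term, plus terms bounded by $\mathrm{tr}^2(\Lambda_1\Lambda_2)$ and by the Hadamard quantities $\mathrm{tr}(\Lambda_1^2\circ\Lambda_2^2)$ and $\mathrm{tr}((\Lambda_1\circ\Lambda_1)(\Lambda_2\circ\Lambda_2))$. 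These come with the factor $2/(n(n-1))$.
\item The one-index-shared term $\mathrm{Cov}(h(1,2),h(1,3))$. This equals the variance of $g(Z_1)=Z_1'\Lambda_1\Lambda_2Z_1=Z_1'\Lambda_3Z_1$. Its variance is
\[
\mathrm{tr}(\Lambda_3^2)+\mathrm{tr}(\Lambda_3\Lambda_3')+\Delta\,\mathrm{tr}(\Lambda_3\circ\Lambda_3),
\]
which yields the $n^{-1}$ terms $\mathrm{tr}((W\Sigma W\Sigma')^2)$, $\mathrm{tr}(W\Sigma_1W\Sigma W\Sigma_2W\Sigma')$ and $\mathrm{tr}(\Lambda_3\circ\Lambda_3)$.
\end{enumerate}

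For $\tilde{Q}_{n2}$ and $\tilde{Q}_{n3}$, the kernels have zero mean because $\mathbb{E}Z_k=0$ for an unshared index $k$ (respectively $k,l$). Hence every covariance term in which some index appears only once vanishes. In $\tilde{Q}_{n2}$, a nonzero covariance needs the lone indices $i,k$ of the two summands to be matched, so at least two indices are shared. The two-shared configuration, with $j$ not shared, gives $n^{-2}\mathrm{tr}(\Lambda_1^2\Lambda_2^2)=n^{-2}\mathrm{tr}(W\Sigma_1W\Sigma W\Sigma_2W\Sigma')$. The three-shared configuration gives the $n^{-3}$ bracket. In $\tilde{Q}_{n3}$, all four indices are singletons within a summand, so all must be matched. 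The variance is then
\[
O\big(n^{-4}\,\mathbb{E}(Z_1'\Lambda_1Z_2)^2(Z_3'\Lambda_2Z_4)^2\big)=O\big(n^{-4}\mathrm{tr}(\Lambda_1^2)\mathrm{tr}(\Lambda_2^2)\big).
\]
The remaining cross-matchings are bounded by the same quantity via Cauchy--Schwarz.

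The main obstacle is the bookkeeping in step (ii) for $\tilde{Q}_{n1}$ and in the three-shared case of $\tilde{Q}_{n2}$. There, eighth-order moments of a single $Z_i$ arise, and each pairing pattern, including the $\Delta$ and higher-cumulant contributions, must be identified with one of the listed traces or Hadamard terms. I would handle this by enumerating index coincidences coordinatewise using (\ref{A2(3)}). Any term that does not match a listed quantity exactly I would bound by Cauchy--Schwarz, e.g.
\[
\mathrm{tr}(\Lambda_1\Lambda_2\Lambda_1\Lambda_2)\le\mathrm{tr}(\Lambda_1^2\Lambda_2^2)\le\{\mathrm{tr}(\Lambda_1^4)\mathrm{tr}(\Lambda_2^4)\}^{1/2}\le\mathrm{tr}(\Lambda_1^2)\mathrm{tr}(\Lambda_2^2),
\]
using that $\Lambda_1,\Lambda_2$ are positive semidefinite. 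This absorbs stray terms into the displayed $O(\cdot)$ remainders.
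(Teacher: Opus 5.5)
Your route is the paper's. The paper writes $\tilde Q_{n1}=Q+\frac{2}{n}\sum_i\xi_i^\star+\frac{1}{n(n-1)}\sum_{i\neq j}\eta_{ij}^\star$ with $\xi_i=Z_i'\Lambda_3Z_i$, so that $\mathrm{Var}(\tilde Q_{n1})=\frac4n\mathbb{E}(\xi_i^{\star2})+\frac{2}{n(n-1)}\mathbb{E}(\eta_{ij}^{\star2})$. It evaluates both moments exactly with the same quartic-form formula, and it handles $\tilde Q_{n2}$ and $\tilde Q_{n3}$ through their degenerate Hoeffding decompositions, which is equivalent to your index-matching count. Your step (iii) and your arguments for $\tilde Q_{n2}$ and $\tilde Q_{n3}$ are correct as stated.

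\textbf{Error in step (ii).} Step (ii) gets wrong exactly the constant the lemma asserts. Take $M=\Lambda_1Z_2Z_2'\Lambda_2$.
\begin{itemize}
\item Only $\mathrm{tr}(MM')=(Z_2'\Lambda_1^2Z_2)(Z_2'\Lambda_2^2Z_2)$ produces the trace product. Its expectation is $\mathrm{tr}(\Lambda_1^2)\mathrm{tr}(\Lambda_2^2)+2\mathrm{tr}(\Lambda_1^2\Lambda_2^2)+\Delta\,\mathrm{tr}(\Lambda_1^2\circ\Lambda_2^2)$.
\item The other two terms satisfy $\mathrm{tr}^2(M)=\mathrm{tr}(M^2)=(Z_2'\Lambda_3Z_2)^2$. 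Each has expectation $\mathrm{tr}^2(\Lambda_3)+\mathrm{tr}(\Lambda_3^2)+\mathrm{tr}(\Lambda_3\Lambda_3')+\Delta\,\mathrm{tr}(\Lambda_3\circ\Lambda_3)$. This is pure remainder, and it vanishes under $\mathcal{H}_0$ because there $\Lambda_3=\Gamma_1'W\Sigma W\Gamma_2=O_d$.
\end{itemize}
So $\mathbb{E}h(1,2)^2$ contains $\mathrm{tr}(\Lambda_1^2)\mathrm{tr}(\Lambda_2^2)$ with coefficient $1$, not $2$. Precisely,
\begin{align*}
\mathrm{Var}\{h(1,2)\}=&\,\mathrm{tr}(\Lambda_1^2)\mathrm{tr}(\Lambda_2^2)+\mathrm{tr}^2(\Lambda_3)+2\,\mathrm{tr}(\Lambda_3^2)+4\,\mathrm{tr}(\Lambda_3\Lambda_3')\\
&+4\Delta\,\mathrm{tr}(\Lambda_3\circ\Lambda_3)+2\Delta\,\mathrm{tr}(\Lambda_1^2\circ\Lambda_2^2)+\Delta^2\,\mathrm{tr}\big((\Lambda_1\circ\Lambda_1)(\Lambda_2\circ\Lambda_2)\big).
\end{align*}
Multiplying by $2/(n(n-1))$ gives the stated leading term $\frac{2}{n(n-1)}\mathrm{tr}((W\Sigma_1)^2)\mathrm{tr}((W\Sigma_2)^2)$. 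Your coefficient $2$ would instead give $4/(n(n-1))$, which contradicts the lemma. The factor $2$ you had in mind comes from the one-sample case: there $\Lambda_1=\Lambda_2$, $M$ is symmetric and $\mathrm{tr}(M^2)=\mathrm{tr}(MM')$. That does not carry over to the paired kernel, where the cross term collapses into $\Lambda_3$.

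\textbf{Two smaller points.}
\begin{itemize}
\item In $\tilde Q_{n1}$ you must not use the last inequality of your Cauchy--Schwarz chain at order $n^{-2}$, because $O(n^{-2}\mathrm{tr}(\Lambda_1^2)\mathrm{tr}(\Lambda_2^2))$ is not a displayed remainder and would destroy the exact leading constant. You do not need it anyway: every term of $\mathrm{Var}\{h(1,2)\}$ above is already one of the listed quantities or dominated by an $n^{-1}$ one.
\item No eighth moments arise in this lemma. Each squared kernel, in $\tilde Q_{n1}$ and in every shared-index configuration of $\tilde Q_{n2}$, has degree at most $4$ in any single $Z_i$. Eighth moments enter only later, through $\mathbb{E}(\eta_{ij}^{\star4})$ in the martingale CLT.
\end{itemize}
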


\autoref{Lm3.1} is a very general result, as it does not rely on the null hypothesis ${\Sigma}={O}_{p}$ and the order condition given in \autoref{A2}. 
Note that under $\mathcal{H}_0$, ${\Sigma}={O}_{p}$ and ${\Lambda}_3={O}_{d}$. In addition, by Lemma 2(2) in the supplementary material, we have
\begin{align*}
\textnormal{tr}(({\Lambda}_1\circ {\Lambda}_1)({\Lambda}_2\circ {\Lambda}_2))\leqslant\textnormal{tr}({\Lambda}_1^2\circ {\Lambda}_2^2)\leqslant \sqrt{\textnormal{tr}({\Lambda}_1^4)\textnormal{tr}({\Lambda}_2^4)}=\sqrt{\textnormal{tr}(({W}{\Sigma}_1)^4)\textnormal{tr}(({W}{\Sigma}_2)^4)}.
\end{align*}
Thus, under $\mathcal{H}_0$ and \autoref{A2}, we get 
\begin{align}
\begin{split}
\textnormal{Var}(\tilde{Q}_{n1})=&\frac{2}{n(n-1)}\textnormal{tr}(({W}{\Sigma}_1)^2)\textnormal{tr}(({W}{\Sigma}_2)^2)+o\big(n^{-2}\textnormal{tr}(({W}{\Sigma}_1)^2)\textnormal{tr}(({W}{\Sigma}_2)^2)\big),\\
\textnormal{Var}(\tilde{Q}_{n2})=&o\big(n^{-2}\textnormal{tr}(({W}{\Sigma}_1)^2)\textnormal{tr}(({W}{\Sigma}_2)^2)\big),\\
\textnormal{Var}(\tilde{Q}_{n3})=&o\big(n^{-2}\textnormal{tr}(({W}{\Sigma}_1)^2)\textnormal{tr}(({W}{\Sigma}_2)^2)\big).
\end{split}
\label{variance_order}
\end{align}
Furthermore, by the Cauchy--Schwarz inequality, it holds that 
\begin{align*}
|\textnormal{Cov}(\tilde{Q}_{ni},\tilde{Q}_{nj})|\leqslant \sqrt{\textnormal{Var}(\tilde{Q}_{ni})\textnormal{Var}(\tilde{Q}_{nj})} \text{ for } i,j=1,2,3 \text{ and } i\neq j.
\end{align*}
Therefore, under $\mathcal{H}_0$, the variance of $\hat{Q}_n$ is given by
\begin{align*}
\textnormal{Var}(\hat{Q}_n)=\sigma_n^2+o(\sigma_n^2),\textnormal{ where } \sigma_n^2=\frac{2}{n(n-1)}\textnormal{tr}(({W}{\Sigma}_1)^2)\textnormal{tr}(({W}{\Sigma}_2)^2).
\end{align*}
Following similar results established in \citet{chen2010two}, Chen et al. (2010), \citet{li2012two}, \citet{jiang2023use}, and \citet{jiang2024nonparametric}, after being rescaled by $\sigma_n$, $\hat{Q}_n$ is asymptotically standard normal according to the Central Limit Theorem for martingale difference sequences \citep[p.58]{hall2014martingale}. We formally state this property in the following theorem.

\begin{theorem}
\label{Th3.1}
Suppose Assumptions \ref{A1}--\ref{A2} hold. Then, under $\mathcal{H}_0:{\Sigma}={O}_{p}$, as $n,p\rightarrow\infty$, 
\begin{align}
\frac{\hat{Q}_n}{\sigma_n}\stackrel{d}\longrightarrow\mathcal{N}(0,1).
\label{equation Th3.1}
\end{align}
\end{theorem}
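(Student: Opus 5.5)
Under $\mathcal{H}_0$, \eqref{variance_order} gives $\textnormal{Var}(\tilde{Q}_{n2})=o(\sigma_n^2)$ and $\textnormal{Var}(\tilde{Q}_{n3})=o(\sigma_n^2)$. Both also have mean zero, because $\mathbb{E}(\tilde{Q}_{n2})=\textnormal{tr}({W}{\Sigma}{W}\mathbb{E}({B}_j{A}_j^\prime)\cdots)$-type expressions all vanish when ${\Sigma}={O}_p$. Chebyshev's inequality then gives $(\hat{Q}_n-\tilde{Q}_{n1})/\sigma_n\stackrel{p}{\longrightarrow}0$. By Slutsky's theorem it therefore suffices to show $\tilde{Q}_{n1}/\sigma_n\stackrel{d}{\longrightarrow}\mathcal{N}(0,1)$.

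Write $\phi(i,j)={A}_i^\prime{W}{A}_j{B}_j^\prime{W}{B}_i$, so that $\tilde{Q}_{n1}=\frac{2}{n(n-1)}\sum_{j=2}^n\sum_{i<j}\phi(i,j)$. Under $\mathcal{H}_0$ this is a degenerate U-statistic:
\begin{align*}
\mathbb{E}[\phi(i,j)\mid {A}_i,{B}_i]={A}_i^\prime{W}\,\mathbb{E}({A}_j{B}_j^\prime)\,{W}{B}_i={A}_i^\prime{W}{\Sigma}{W}{B}_i=0.
\end{align*}
Set $V_{nj}=\frac{2}{n(n-1)\sigma_n}\sum_{i=1}^{j-1}\phi(i,j)$ and let $\mathcal{F}_{j}=\sigma\{({A}_1,{B}_1),\ldots,({A}_j,{B}_j)\}$. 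Then $\{V_{nj},\mathcal{F}_j\}$ is a martingale difference array and $\tilde{Q}_{n1}/\sigma_n=\sum_{j=2}^nV_{nj}$. I will apply the martingale CLT of \citet[p.58]{hall2014martingale}, which requires two conditions:
\begin{align*}
\text{(i)}\ \sum_{j=2}^n\mathbb{E}(V_{nj}^2\mid\mathcal{F}_{j-1})\stackrel{p}{\longrightarrow}1,\qquad \text{(ii)}\ \sum_{j=2}^n\mathbb{E}(V_{nj}^4)\longrightarrow0.
\end{align*}
Condition (ii) is a Lyapunov condition and implies the conditional Lindeberg condition.

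For (i), write ${A}_i={\Gamma}_1{Z}_i$ and ${B}_i={\Gamma}_2{Z}_i$. Then
\begin{align*}
\mathbb{E}(\phi(i,j)\phi(k,j)\mid {Z}_i,{Z}_k)=\mathbb{E}\big[{Z}_j^\prime{M}_{ik}{Z}_j\,{Z}_j^\prime{N}_{ik}{Z}_j\big],
\end{align*}
where ${M}_{ik}$ and ${N}_{ik}$ are symmetrized rank-one matrices built from ${\Gamma}_1^\prime{W}{A}_i$, ${\Gamma}_2^\prime{W}{B}_i$ and the same vectors with $k$ in place of $i$. The quadratic-form moment formula, valid under \eqref{A2(3)}, expresses this as $\textnormal{tr}({M}_{ik})\textnormal{tr}({N}_{ik})+2\textnormal{tr}({M}_{ik}{N}_{ik})+\Delta\,\textnormal{tr}({M}_{ik}\circ{N}_{ik})$. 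The expectation of the sum is $1+o(1)$; the diagonal terms $i=k$ produce exactly $\sigma_n^2$ through $\textnormal{tr}(\Lambda_1\Lambda_1)\textnormal{tr}(\Lambda_2\Lambda_2)$, plus corrections involving $\textnormal{tr}(\Lambda_1^2\circ\Lambda_2^2)$, and these corrections are $o$ by Lemma 2(2) and \autoref{A2}. It then remains to show that the variance of $\sum_j\mathbb{E}(V_{nj}^2\mid\mathcal{F}_{j-1})$ is $o(1)$. Expanding this variance into sums over index patterns, every term other than the leading one is bounded by $\textnormal{tr}(({W}{\Sigma}_1)^4)\textnormal{tr}(({W}{\Sigma}_2)^4)$, by its square root times $\textnormal{tr}(({W}{\Sigma}_1)^2)\textnormal{tr}(({W}{\Sigma}_2)^2)$, or by Hadamard-product traces of $\Lambda_1,\Lambda_2$. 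Each of these is $o\big(\textnormal{tr}^2(({W}{\Sigma}_1)^2)\textnormal{tr}^2(({W}{\Sigma}_2)^2)\big)$ by \eqref{equation A2}, after dividing by the appropriate power of $n$. This bookkeeping, which needs the eighth moments of ${Z}_{ij}$ together with the pseudo-independence \eqref{A2(3)}, is the main obstacle. Its key feature is that, because ${\Sigma}={\Gamma}_1{\Gamma}_2^\prime=O$, all cross terms coupling ${\Gamma}_1$ and ${\Gamma}_2$ through a single index vanish, so the calculation factorizes much as in \citet{li2012two} and \citet{jiang2024nonparametric}.

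For (ii), expand $\mathbb{E}(V_{nj}^4)$ into terms $\mathbb{E}\big(\phi(i_1,j)\phi(i_2,j)\phi(i_3,j)\phi(i_4,j)\big)$. Degeneracy kills every configuration in which some index $i_m$ appears exactly once, leaving only $i_1=\cdots=i_4$ and the paired configurations, of which there are $O(j^2)$. Bounding each surviving term with the eighth-moment formulas gives $\mathbb{E}\phi(i,j)^4=O\big(\textnormal{tr}^2(({W}{\Sigma}_1)^2)\textnormal{tr}^2(({W}{\Sigma}_2)^2)\big)$ and the analogous bound for the paired terms. The sum is then
\begin{align*}
\sum_{j=2}^n\mathbb{E}(V_{nj}^4)=O\Big(n^{-8}\sigma_n^{-4}\sum_{j=2}^n j^2\,\textnormal{tr}^2(({W}{\Sigma}_1)^2)\textnormal{tr}^2(({W}{\Sigma}_2)^2)\Big)=O(n^{-1})\to0.
\end{align*}
The martingale CLT then gives the claim \eqref{equation Th3.1}.
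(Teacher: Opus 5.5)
Your proposal is correct and follows essentially the same route as the paper. You discard $-2\tilde{Q}_{n2}+\tilde{Q}_{n3}$ using \eqref{variance_order}, Chebyshev and Slutsky, then apply the Hall--Heyde martingale CLT to the degenerate $U$-statistic $\tilde{Q}_{n1}$; you check the conditional-variance condition through its mean and variance (the bounds $\mathbb{E}(\Psi_{ij}^2)=o(\cdot)$ and $\mathbb{E}(\Psi_{ii}^2)=O(\cdot)$ of Lemma~\ref{Lemma5}) and conditional Lindeberg through fourth moments, exactly as in Lemma~\ref{Lemma6}.

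One small correction: $\mathbb{E}(\tilde{Q}_{n2})=\mathbb{E}(\tilde{Q}_{n3})=0$ holds for every $\Sigma$, not because $\Sigma=O_p$. Each summand contains an index appearing only once, whose centered factor (${A}_i$ or ${B}_l$) has mean zero and is independent of the rest.
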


Inspired by the asymptotic normality in (\ref{equation Th3.1}), we estimate $\sigma_n^2$ to derive a feasible test. For notational brevity, let $V_x=\textnormal{tr}(({W{\Sigma}}_1)^2)$ and $V_y=\textnormal{tr}(({W{\Sigma}}_2)^2)$. Then, estimating $\sigma_n^2$ amounts to estimating $V_x$ and $V_y$. Similar to the construction of $\hat{Q}_n$, note that $V_x$ can be decomposed as
\begin{align*}
V_x=&\textnormal{tr}({W}\mathbb{E}({X}_i{X}_i^\prime){W}\mathbb{E}({X}_j{X}_j^\prime))-2\textnormal{tr}({W}\mathbb{E}({X}_i{X}_i^\prime){W}{\mu}_1{\mu}_1^\prime)+\textnormal{tr}(({W}{\mu}_1{\mu}_1^\prime)^2)\\
=&\mathbb{E}\big[({X}_i^\prime{W}{X}_j)^2\big]-2\mathbb{E}({X}_i^\prime{W}{X}_j{X}_j^\prime{W}{X}_k)+\mathbb{E}({X}_i^\prime{W}{X}_j{X}_k^\prime{W}{X}_l),
\end{align*}
where ${X}_i$, ${X}_j$, ${X}_k$, ${X}_l$ are independent copies of ${X}$. Thus, an unbiased estimator for $V_x$ is given by $\hat{V}_x=\hat{V}_{x1}-2\hat{V}_{x2}+\hat{V}_{x3}$, where 
\begin{align*}
&\hat{V}_{x1}=\frac{1}{n(n-1)}\sum_{i\neq j}^n ({X}_i^\prime{W}{X}_j)^2,\ \ \hat{V}_{x2}=\frac{1}{n(n-1)(n-2)}\sum_{i\neq j\neq k}^n {X}_i^\prime{W}{X}_j{X}_j^\prime{W}{X}_k,\\
&\hat{V}_{x3}=\frac{1}{n(n-1)(n-2)(n-3)}\sum_{i\neq j\neq k\neq l}^n {X}_i^\prime{W}{X}_j{X}_k^\prime{W}{X}_l.
\end{align*}
Similarly, we estimate $V_y$ by $\hat{V}_y=\hat{V}_{y1}-2\hat{V}_{y2}+\hat{V}_{y3}$, where 
\begin{align*}
&\hat{V}_{y1}=\frac{1}{n(n-1)}\sum_{i\neq j}^n ({Y}_i^\prime{W}{Y}_j)^2,\ \ \hat{V}_{y2}=\frac{1}{n(n-1)(n-2)}\sum_{i\neq j\neq k}^n {Y}_i^\prime{W}{Y}_j{Y}_j^\prime{W}{Y}_k,\\
&\hat{V}_{y3}=\frac{1}{n(n-1)(n-2)(n-3)}\sum_{i\neq j\neq k\neq l}^n {Y}_i^\prime{W}{Y}_j{Y}_k^\prime{W}{Y}_l.
\end{align*}
After obtaining $\hat{V}_x$ and $\hat{V}_y$, the estimation of $\sigma_n^2$ is then constructed as $$\hat{\sigma}_n^2=\frac{2}{n(n-1)}\hat{V}_x\hat{V}_y.$$ 
The ratio consistency of $\hat{\sigma}_n^2$ is formally established in the next lemma.

\begin{lemma}
\label{Lm3.2}
Suppose Assumptions \ref{A1}--\ref{A2} hold. Then, as $n,p\rightarrow\infty$, 
\begin{align*}
\frac{\hat{\sigma}_n^2}{\sigma_n^2}\stackrel{p}\longrightarrow 1.
\end{align*}
\end{lemma}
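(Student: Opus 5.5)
Since $\hat{\sigma}_n^2/\sigma_n^2=(\hat V_x/V_x)(\hat V_y/V_y)$, it suffices to show $\hat V_x/V_x\stackrel{p}\longrightarrow1$; the argument for $\hat V_y$ is identical with ${\Gamma}_2$ in place of ${\Gamma}_1$. The continuous mapping theorem then gives the lemma. Because $\hat V_x$ is unbiased for $V_x$, Chebyshev's inequality reduces the claim to
\begin{align*}
\textnormal{Var}(\hat V_x)=o\big(V_x^2\big)=o\big(\textnormal{tr}^2(({W}{\Sigma}_1)^2)\big).
\end{align*}

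\textbf{Step 1: centring and splitting.} As with $\hat Q_n$, the estimator $\hat V_x$ is invariant to location shifts. We may therefore replace ${X}_i$ by ${A}_i={\Gamma}_1{Z}_i$ and write $\hat V_x=\tilde V_{x1}-2\tilde V_{x2}+\tilde V_{x3}$. By the Cauchy--Schwarz bound on covariances used before Theorem \ref{Th3.1}, it then suffices to bound each $\textnormal{Var}(\tilde V_{xj})$ separately.

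\textbf{Step 2: specialise Lemma \ref{Lm3.1}.} The key observation is that $\tilde V_{x1},\tilde V_{x2},\tilde V_{x3}$ coincide with $\tilde Q_{n1},\tilde Q_{n2},\tilde Q_{n3}$ when ${Y}$ is replaced by ${X}$. Formally, set ${\Gamma}_2={\Gamma}_1$, so that ${\Sigma}_2={\Sigma}={\Sigma}_1$ and ${\Lambda}_1={\Lambda}_2={\Lambda}$. Lemma \ref{Lm3.1} does not use $\mathcal{H}_0$, and its moment computations use only Assumption \ref{A1}(2)--(3), not the rank condition. Applying it gives
\begin{align*}
\textnormal{Var}(\tilde V_{x1})=O\Big(n^{-2}\textnormal{tr}^2(({W}{\Sigma}_1)^2)+n^{-1}\textnormal{tr}\big(({W}{\Sigma}_1)^4+{\Lambda}^2\circ{\Lambda}^2\big)+n^{-2}\textnormal{tr}\big({\Lambda}^2\circ{\Lambda}^2+({\Lambda}\circ{\Lambda})^2\big)\Big).
\end{align*}
The bounds for $\tilde V_{x2}$ and $\tilde V_{x3}$ are analogous but smaller. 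For the Hadamard terms, Lemma 2(2) of the supplement gives
\begin{align*}
\textnormal{tr}(({\Lambda}\circ{\Lambda})^2)\leqslant\textnormal{tr}({\Lambda}^2\circ{\Lambda}^2)\leqslant\textnormal{tr}({\Lambda}^4)=\textnormal{tr}(({W}{\Sigma}_1)^4).
\end{align*}
Hence every remainder is $O\big(n^{-1}\textnormal{tr}(({W}{\Sigma}_1)^4)+n^{-2}\textnormal{tr}^2(({W}{\Sigma}_1)^2)\big)$. Assumption \ref{A2} gives $\textnormal{tr}(({W}{\Sigma}_1)^4)=o(\textnormal{tr}^2(({W}{\Sigma}_1)^2))$, and $n^{-2}\to0$, so $\textnormal{Var}(\hat V_x)/V_x^2\to0$, as required.

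\textbf{Main obstacle.} The delicate point is justifying that Lemma \ref{Lm3.1} applies with ${\Gamma}_2={\Gamma}_1$. If its proof genuinely uses independence of ${\Gamma}_1$ and ${\Gamma}_2$ rows, I would instead redo the variance computation for the $U$-statistic $\tilde V_{x1}$ directly, via the Hoeffding decomposition. The kernel is $({Z}_i'{\Lambda}{Z}_j)^2$, and its projection is
\begin{align*}
h_1({Z}_i)=\mathbb{E}[({Z}_i'{\Lambda}{Z}_j)^2\mid {Z}_i]={Z}_i'{\Lambda}^2{Z}_i .
\end{align*}
Under Assumption \ref{A1}(3), its variance is
\begin{align*}
\textnormal{Var}(h_1)=2\textnormal{tr}({\Lambda}^4)+\Delta\,\textnormal{tr}({\Lambda}^2\circ{\Lambda}^2),
\end{align*}
which contributes $O(n^{-1}\textnormal{tr}(({W}{\Sigma}_1)^4))$. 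The degenerate part has second moment bounded by
\begin{align*}
\mathbb{E}({Z}_i'{\Lambda}{Z}_j)^4=O\big(\textnormal{tr}^2({\Lambda}^2)+\textnormal{tr}({\Lambda}^4)+\textnormal{tr}({\Lambda}^2\circ{\Lambda}^2)\big),
\end{align*}
using the eighth-moment condition, and it contributes $O(n^{-2}\cdot)$ of that. The terms $\tilde V_{x2}$ and $\tilde V_{x3}$ are treated in the same way. These have mean-zero projections of lower order, yielding bounds $O(n^{-1}\textnormal{tr}({\Lambda}^4)+n^{-3}\textnormal{tr}^2({\Lambda}^2))$ and $O(n^{-4}\textnormal{tr}^2({\Lambda}^2))$ respectively.
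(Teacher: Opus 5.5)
Your proposal is correct and follows the paper's proof essentially step for step. The paper likewise centres $\hat V_x$ as $\tilde V_{x1}-2\tilde V_{x2}+\tilde V_{x3}$, obtains the three variances by substituting ${\Sigma}_1$ for ${\Sigma}$ and ${\Sigma}_2$ in Lemma~\ref{Lm3.1}, bounds the Hadamard terms by $\textnormal{tr}(({W}{\Sigma}_1)^4)$, and concludes from unbiasedness and $\textnormal{Var}(\hat V_x/V_x)\to 0$. Your worry about applicability is unfounded: Assumption~\ref{A1} holds verbatim with ${\Gamma}_2={\Gamma}_1$, since the rank and $d\geqslant 2p$ conditions are unaffected, so your fallback Hoeffding computation is correct but unnecessary.
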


\autoref{Lm3.2} shows that $\hat{\sigma}_n^2$ is a consistent estimator for $\sigma_n^2$ regardless of the null hypothesis $\mathcal{H}_0$. Then, with the above preparations in hand, a feasible test statistic is obtained by gathering \autoref{Th3.1}, \autoref{Lm3.2}, and Slutsky's theorem.

\begin{theorem}
\label{Th3.2}
Suppose Assumptions \ref{A1}--\ref{A2} hold. Then, under $\mathcal{H}_0:{\Sigma}={O}_{p}$, as $n,p\rightarrow\infty$, 
\begin{align*}
\hat{T}_n:=\frac{\hat{Q}_n}{\hat\sigma_n}\stackrel{d}\longrightarrow\mathcal{N}(0,1).
\end{align*}
\end{theorem}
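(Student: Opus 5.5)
Theorem \ref{Th3.2} follows from Theorem \ref{Th3.1}, Lemma \ref{Lm3.2}, and Slutsky's theorem, so the proof is mostly bookkeeping. We write
\begin{align*}
\hat{T}_n=\frac{\hat{Q}_n}{\sigma_n}\cdot\Big(\frac{\hat{\sigma}_n^2}{\sigma_n^2}\Big)^{-1/2}.
\end{align*}
Under $\mathcal{H}_0$ and Assumptions \ref{A1}--\ref{A2}, Theorem \ref{Th3.1} gives $\hat{Q}_n/\sigma_n\stackrel{d}{\longrightarrow}\mathcal{N}(0,1)$. Lemma \ref{Lm3.2} gives $\hat{\sigma}_n^2/\sigma_n^2\stackrel{p}{\longrightarrow}1$ without using $\mathcal{H}_0$, so it applies in particular under the null.

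We need $\sigma_n>0$ for all large $n$. This holds because Assumption \ref{A2} forces $\textnormal{tr}((W\Sigma_i)^2)\to\infty$ for $i=1,2$, so $\sigma_n^2=2\{n(n-1)\}^{-1}\textnormal{tr}((W\Sigma_1)^2)\textnormal{tr}((W\Sigma_2)^2)$ is eventually strictly positive. Also, $W=B+aa'$ is positive semidefinite and $\Sigma_i$ is positive semidefinite. Hence $\textnormal{tr}((W\Sigma_i)^2)=\textnormal{tr}((W^{1/2}\Sigma_iW^{1/2})^2)\geqslant 0$, so each factor of $\sigma_n^2$ is nonnegative.

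Next, $\hat{\sigma}_n^2$ may be negative or zero with positive probability, because $\hat V_x$ and $\hat V_y$ are unbiased U-statistics and need not be nonnegative. We handle this by noting that $\hat\sigma_n^2/\sigma_n^2\stackrel{p}{\longrightarrow}1$ implies $P(\hat\sigma_n^2\le \sigma_n^2/2)\to0$. So on an event $E_n$ with $P(E_n)\to1$, $\hat\sigma_n>0$ and $\hat T_n$ is well defined. Off $E_n$ we define $\hat T_n$ arbitrarily; this does not affect convergence in distribution. On $E_n$, the continuous mapping theorem applied to $g(x)=x^{-1/2}$, which is continuous at $x=1$, yields $\sigma_n/\hat\sigma_n\stackrel{p}{\longrightarrow}1$.

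Slutsky's theorem then gives $\hat T_n=(\hat Q_n/\sigma_n)(\sigma_n/\hat\sigma_n)\stackrel{d}{\longrightarrow}\mathcal{N}(0,1)$. The one point needing care is this well-definedness of $\hat\sigma_n$ and the handling of the negligible event. The substantive difficulties, namely the martingale CLT and the ratio consistency, are already contained in Theorem \ref{Th3.1} and Lemma \ref{Lm3.2}.
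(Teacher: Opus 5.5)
Your proposal is correct and follows the paper's own argument: the paper proves Theorem~\ref{Th3.2} in one line by combining Theorem~\ref{Th3.1}, Lemma~\ref{Lm3.2}, and Slutsky's theorem. Your checks that $\sigma_n>0$ eventually and that the event $\{\hat\sigma_n^2\le 0\}$ has vanishing probability are details the paper leaves implicit, and you handle them correctly.
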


\autoref{Th3.2} establishes that the standardized statistic $\hat{T}_n$ converges in distribution to the standard normal distribution under the null hypothesis of uncorrelatedness. It should be emphasized, however, that our proposed procedure is inherently one-sided, since $Q\geqslant 0$ and thus departures from the null hypothesis are reflected through large values of $\hat{T}_n$. Accordingly, for a given significance level $\alpha$, we reject $\mathcal{H}_0$ whenever $\hat{T}_n > z_{1-\alpha}$, where $z_{1-\alpha}$ denotes the $(1-\alpha)$-quantile of the standard normal distribution. Such a one-sided rejection rule is commonly adopted in the literature for test statistics constructed from degenerate $U$-statistics; see also \citet{delgado2006consistent} and \citet{chen2010two}.

\subsection{Asymptotic power}
\label{Asymptotic power}
In this section, we investigate the asymptotic power of our proposed test. Before presenting theoretical results, we first introduce the assumptions we need under alternative hypotheses.

\begin{assumption}
\label{A3}
As $n\rightarrow \infty$, $p=p(n)\rightarrow\infty$, $\textnormal{tr}(({W}{\Sigma}_i)^2)\rightarrow \infty$, $i=1,2$, and 
\begin{flalign}
&(1)\ \textnormal{tr}^2({W}{\Sigma}{W}{\Sigma}^\prime)=o\big(\textnormal{tr}(({W}{\Sigma}_1)^2)\textnormal{tr}(({W}{\Sigma}_2)^2)\big).\label{A3 1}&&\\
&(2)\ \textnormal{tr}({W}{\Sigma}_1{W}{\Sigma}{W}{\Sigma}_2{W}{\Sigma}^\prime)=o\big(n^{-1}\textnormal{tr}(({W}{\Sigma}_1)^2)\textnormal{tr}(({W}{\Sigma}_2)^2)\big).
\label{A3 2}
\end{flalign}

\end{assumption}

\autoref{A3} is trivially true under the null hypothesis and guarantees that the leading term of $\textnormal{Var}(\hat{Q}_n)$ would not be affected under $\mathcal{H}_1$. To see the intuition behind this, note that $\textnormal{tr}({\Lambda}_3^2)=\textnormal{tr}(({W}{\Sigma}{W}{\Sigma}^\prime)^2)$, $\textnormal{tr}({\Lambda}_3{\Lambda}_3^\prime)=\textnormal{tr}({W}{\Sigma}_1{W}{\Sigma}{W}{\Sigma}_2{W}{\Sigma}^\prime)$, and $\max\{\textnormal{tr}({\Lambda}_3^2),\textnormal{tr}({\Lambda}_3\circ{\Lambda}_3)\}\leqslant \textnormal{tr}({\Lambda}_3{\Lambda}_3^\prime)$
by Lemma 2(1) in the supplementary material. Therefore, to control the overall variance magnitude  under $\mathcal{H}_1$, it suffices to focus only on $\textnormal{tr}^2({W}{\Sigma}{W}{\Sigma}^\prime)$ and $\textnormal{tr}({W}{\Sigma}_1{W}{\Sigma}{W}{\Sigma}_2{W}{\Sigma}^\prime)$. Furthermore, by \autoref{A2} and (S1.5) given in the supplementary material, it holds that
\begin{align*}
&\textnormal{tr}^2({W}{\Sigma}{W}{\Sigma}^\prime)=\textnormal{tr}^2({\Lambda}_1{\Lambda}_2)\leqslant \textnormal{tr}({\Lambda}_1^2)\textnormal{tr}({\Lambda}_2^2)=\textnormal{tr}(({W}{\Sigma}_1)^2)\textnormal{tr}(({W}{\Sigma}_2)^2),\\
&\textnormal{tr}({W}{\Sigma}_1{W}{\Sigma}{W}{\Sigma}_2{W}{\Sigma}^\prime)=\textnormal{tr}({\Lambda}_1^2{\Lambda}_2^2)\leqslant \sqrt{\textnormal{tr}({\Lambda}_1^4)\textnormal{tr}({\Lambda}_2^4)}=o\Big(\textnormal{tr}(({W}{\Sigma}_1)^2)\textnormal{tr}(({W}{\Sigma}_2)^2)\Big).
\end{align*}
Thus, the order assumptions (\ref{A3 1})--(\ref{A3 2}) are reasonable in the high-dimensional settings. 
Define the quantity
\begin{align*}
G_n({W};{\Gamma}_1,{\Gamma}_2):= \frac{Q}{\sigma_n} = \sqrt{\frac{n(n-1)}{2}\cdot\frac{\textnormal{tr}^2({W}{\Sigma}{W}{\Sigma}')}{\textnormal{tr}(({W}{\Sigma}_1)^2)\textnormal{tr}(({W}{\Sigma}_2)^2)}}\geqslant 0.
\end{align*}
The following theorem establishes the asymptotic normality after properly centralizing $\hat{Q}_n$ and provides a general framework to analyze the asymptotic power of our test.

\begin{theorem}
\label{Th3.3}
Suppose Assumptions \ref{A1}--\ref{A3} hold. Then, under $\mathcal{H}_1:{\Sigma}\neq {O}_{p}$, as $n,p\rightarrow \infty$,
\begin{align}
\frac{\hat{Q}_n-Q}{{\sigma}_n}\stackrel{d}\longrightarrow \mathcal{N}(0,1).
\label{euation Th3.3}
\end{align}
Furthermore,
\begin{align}
\lim_{n,p\rightarrow \infty} \Big[\mathbb{P}\left\{\hat{T}_n>z_{1-\alpha}\right\} - \Phi\big( G_n({W};{\Gamma}_1,{\Gamma}_2) -z_{1-\alpha}\big)\Big]=0.
\label{power function}
\end{align}
\end{theorem}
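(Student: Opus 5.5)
We will prove \eqref{euation Th3.3} by a Hoeffding-type decomposition of $\hat{Q}_n-Q$ into a linear (first-order) part and a degenerate second-order part. We then show that under Assumption \ref{A3} the linear part is negligible relative to $\sigma_n$, and that the degenerate part obeys the same martingale CLT as in the proof of Theorem \ref{Th3.1}.

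Working with the centralized representation $\hat{Q}_n=\tilde{Q}_{n1}-2\tilde{Q}_{n2}+\tilde{Q}_{n3}$, we write $A_i=\Gamma_1Z_i$ and $B_i=\Gamma_2Z_i$. We split
\begin{align*}
A_i'WA_jB_j'WB_i = \textnormal{tr}(A_iB_i'WB_jA_j'W).
\end{align*}
We then decompose $A_iB_i'=\Sigma+(A_iB_i'-\Sigma)$ inside $\tilde{Q}_{n1}$. This gives
\begin{align*}
\tilde{Q}_{n1}-Q=\frac{2}{n}\sum_{i}\big[\textnormal{tr}(W(A_iB_i'-\Sigma)W\Sigma')\big]+\frac{1}{n(n-1)}\sum_{i\neq j}\textnormal{tr}\big(W(A_iB_i'-\Sigma)W(B_jA_j'-\Sigma')\big).
\end{align*}
The first sum is an i.i.d.\ mean-zero linear term. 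Its variance is $O(n^{-1}\textnormal{tr}(W\Sigma_1W\Sigma W\Sigma_2W\Sigma')+n^{-1}\textnormal{tr}((W\Sigma W\Sigma')^2)+n^{-1}\textnormal{tr}(\Lambda_3\circ\Lambda_3))$, with the $\Delta$-term coming from the fourth moments in Assumption \ref{A1}(3). As explained after Assumption \ref{A3}, all three pieces are bounded by $\textnormal{tr}(\Lambda_3\Lambda_3')=\textnormal{tr}(W\Sigma_1W\Sigma W\Sigma_2W\Sigma')$. So by \eqref{A3 2} this variance is $o(\sigma_n^2)$, and the linear part is $o_p(\sigma_n)$ by Chebyshev.

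For $\tilde{Q}_{n2}$ and $\tilde{Q}_{n3}$ we note that $\mathbb{E}\tilde{Q}_{n2}=\mathbb{E}\tilde{Q}_{n3}=0$. Lemma \ref{Lm3.1}, combined with Assumption \ref{A2}, \eqref{A3 2}, and the bound $\textnormal{tr}(\Lambda_1^2\circ\Lambda_2^2)\leqslant\sqrt{\textnormal{tr}(\Lambda_1^4)\textnormal{tr}(\Lambda_2^4)}$, gives $\textnormal{Var}(\tilde{Q}_{n2})+\textnormal{Var}(\tilde{Q}_{n3})=o(\sigma_n^2)$. Hence both are $o_p(\sigma_n)$.

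It remains to treat the degenerate kernel
\begin{align*}
U_n=\frac{1}{n(n-1)}\sum_{i\neq j}h(Z_i,Z_j), \qquad h(Z_i,Z_j)=\textnormal{tr}(W(A_iB_i'-\Sigma)W(B_jA_j'-\Sigma')).
\end{align*}
First, Lemma \ref{Lm3.1} together with \eqref{A3 1} shows $\textnormal{Var}(U_n)=\sigma_n^2(1+o(1))$. The extra $n^{-2}\textnormal{tr}^2(W\Sigma W\Sigma')$ term is $o(\sigma_n^2)$ by \eqref{A3 1}, and the Hadamard terms are handled as under $\mathcal{H}_0$.

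Next, we write $U_n=\sum_{j=2}^n D_{nj}$ with $D_{nj}=\frac{2}{n(n-1)}\sum_{i<j}h(Z_i,Z_j)$, which is a martingale difference sequence with respect to $\sigma(Z_1,\dots,Z_j)$. We then apply the martingale CLT of \citet[p.58]{hall2014martingale}. This requires two conditions. The first is a Lyapunov-type condition, $\sum_j\mathbb{E}D_{nj}^4=o(\sigma_n^4)$. The second is convergence of the conditional variance, $\sum_j\mathbb{E}(D_{nj}^2\mid\mathcal{F}_{j-1})/\sigma_n^2\stackrel{p}{\to}1$.

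\textbf{Main obstacle.} Verifying these two conditions is the main obstacle, because of the eighth-moment expansions with $\Sigma\neq O_p$. Our plan is to mimic the null computation used for Theorem \ref{Th3.1}. Every new term that arises is a trace polynomial containing at least one factor $\Lambda_3$. We will bound each such term using $\textnormal{tr}(\Lambda_3\Lambda_3')$ and $\textnormal{tr}^2(\Lambda_1\Lambda_2)$, and hence, by \eqref{A3 1}--\eqref{A3 2}, show it is of smaller order than the null leading terms. Those leading terms are controlled by $\textnormal{tr}((W\Sigma_i)^4)=o(\textnormal{tr}^2((W\Sigma_i)^2))$.

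Combining the pieces with Slutsky's theorem yields \eqref{euation Th3.3}. For \eqref{power function}, we write
\begin{align*}
\mathbb{P}(\hat{T}_n>z_{1-\alpha})=\mathbb{P}\Big(\frac{\hat{Q}_n-Q}{\sigma_n}>\frac{\hat{\sigma}_n}{\sigma_n}z_{1-\alpha}-G_n\Big).
\end{align*}
Lemma \ref{Lm3.2} gives $\hat{\sigma}_n/\sigma_n\stackrel{p}{\to}1$. Since the limit law $\mathcal{N}(0,1)$ is continuous, Pólya's theorem gives uniform convergence of the distribution functions. Therefore
\begin{align*}
\mathbb{P}(\hat{T}_n>z_{1-\alpha})-\Phi(G_n-z_{1-\alpha})\to 0,
\end{align*}
regardless of whether $G_n$ stays bounded or diverges.
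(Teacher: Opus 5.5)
Your proposal is correct and is essentially the paper's proof. Your split of $\tilde{Q}_{n1}-Q$ is exactly the Hoeffding decomposition into $\frac{2}{n}\sum_i\xi_i^\star$ and $\frac{1}{n(n-1)}\sum_{i\neq j}\eta_{ij}^\star$, and the linear part and $\tilde{Q}_{n2},\tilde{Q}_{n3}$ are shown negligible via Assumption~\ref{A3}(2) and Lemma~\ref{Lm3.1} as you describe. The step you flag as the main obstacle is the supplement's martingale CLT (Lemma~\ref{Lemma6}, built on the moment bounds of Lemma~\ref{Lemma5}), which is proved under Assumptions~\ref{A1}--\ref{A3} with the bounds you propose: $\textnormal{tr}(\Lambda_3\Lambda_3')$ controlled via Assumption~\ref{A3}(2) and $Q^2=\textnormal{tr}^2(\Lambda_1\Lambda_2)$ via Assumption~\ref{A3}(1).

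The power statement uses the same Slutsky--P\'olya argument. The one difference is that the paper first moves $(\hat{\sigma}_n/\sigma_n-1)z_{1-\alpha}$ to the left-hand side, so P\'olya's theorem is applied at the deterministic point $z_{1-\alpha}-G_n(W;\Gamma_1,\Gamma_2)$; your random-threshold formulation implicitly needs this step.
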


(\ref{power function}) characterizes the asymptotic power of our proposed methods. The test statistic $\hat{T}_n$ possesses non-trivial power against the alternative hypothesis $\mathcal{H}_1$ provided that the limit of $G_n({W};{\Gamma}_1,{\Gamma}_2)$ exists and satisfies $\lim_{n,p\rightarrow\infty} G_n({W};{\Gamma}_1,{\Gamma}_2) > 0$. In addition, if $\lim_{n,p\rightarrow\infty} G_n({W};{\Gamma}_1,{\Gamma}_2) = \infty$, the proposed test will be asymptotically consistent. Furthermore, \autoref{Th3.3} provides a useful framework for analyzing the power of the proposed test under local alternatives based on the quantity
\begin{align*}
D({W};{\Gamma}_1,{\Gamma}_2)
=
\frac{
\textnormal{tr}({W\Sigma W\Sigma'})
}{
\sqrt{
\textnormal{tr}\bigl(({W}{\Sigma}_1)^2\bigr)
\textnormal{tr}\bigl(({W}{\Sigma}_2)^2\bigr)
}
}.
\end{align*}
To better interpret this, note that condition (\ref{A3 1}) requires $D({W};{\Gamma}_1,{\Gamma}_2)\rightarrow 0$. On the other hand, to guarantee nontrivial asymptotic power of $\hat{T}_n$, the condition $\lim_{n,p\rightarrow\infty} G_n({W};{\Gamma}_1,{\Gamma}_2)>0$
requires at least $nD({W};{\Gamma}_1,{\Gamma}_2)\rightarrow c>0$. Therefore, in the high-dimensional regime, the critical order of detectable local alternatives is given by $D({W};{\Gamma}_1,{\Gamma}_2)=c/n$
for some positive constant $c$, implying that the proposed test for high-dimensional uncorrelatedness is capable of detecting local alternatives converging to the null at a rate faster than the classical parametric rate $n^{-1/2}$ in fixed-dimensional settings; see also \citet{chen2010two} and \citet{jiang2024nonparametric} for a similar analysis.

Finally, we end this section by highlighting the role of the weighting matrix ${W}$. According to \autoref{Th3.3}, the asymptotic power of the proposed test is determined by the quantity $G_n({W};{\Gamma}_1,{\Gamma}_2)$, which naturally motivates a comparison between two weighting matrices ${W}_1$ and ${W}_2$ through the asymptotic relative efficiency (ARE), defined as
\begin{align*}
\textnormal{ARE}({W}_1,{W}_2)
:=
\frac{
G_n^2({W}_1;{\Gamma}_1,{\Gamma}_2)
}{
G_n^2({W}_2;{\Gamma}_1,{\Gamma}_2)
}.
\end{align*}
In particular, we are interested in whether a properly chosen weighting matrix ${W}$ can achieve higher asymptotic power than the identity matrix ${I}_p$, that is, whether $\lim_{n,p\to\infty}\textnormal{ARE}({W},{I}_p)>1$. Let $1_p$ denote the $p$-dimensional vector with all elements equal to 1. The following theorem shows that, under some mild conditions, a correctly specified weighting matrix ${W}$ with ${W}\neq {I}_p$ always yields power no worse than that obtained using ${I}_p$.
\begin{theorem}
\label{Th3.4}
Suppose Assumptions \ref{A1}--\ref{A3} hold with $a_1=a_2=\ldots=a_p=a\neq0$, $b_1=b_2=\ldots=b_p=b>0$, and $r=a/b$. Under the alternative hypothesis $\mathcal{H}_1:{\Sigma}\neq {O}_{p}$, as $n,p\rightarrow\infty$:

(1) If $pr^2\rightarrow 0$, then 
\begin{align*}
\lim_{n,p\rightarrow\infty} \textnormal{ARE}( W,{I}_p)=1.
\end{align*}

(2) If $p^{-1}\max\{r^2,pr^4\}(1_p'\Sigma1_p)^2/\textnormal{tr}({\Sigma\Sigma'})\rightarrow c\in(0,\infty]$ and for $i=1,2$, $\max\{r^2,pr^4\}1_p'{\Sigma}_i^21_p/\textnormal{tr}({\Sigma}_i^2)\rightarrow 0$, then
\begin{align*}
\lim_{n,p\rightarrow\infty} \textnormal{ARE}( W,{I}_p)>1.
\end{align*}
\end{theorem}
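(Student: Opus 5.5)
With equal parameters we have ${W}=b^2({I}_p+r^2 1_p1_p')$. The factor $b^2$ cancels in $G_n^2$, which is homogeneous of degree zero in ${W}$. So we may take ${W}={I}_p+r^2J$ with $J=1_p1_p'$. The plan is to write
\begin{align*}
\textnormal{ARE}({W},{I}_p)=\frac{\textnormal{tr}^2({W}{\Sigma}{W}{\Sigma}')}{\textnormal{tr}^2({\Sigma}{\Sigma}')}\cdot\prod_{i=1}^2\frac{\textnormal{tr}({\Sigma}_i^2)}{\textnormal{tr}(({W}{\Sigma}_i)^2)},
\end{align*}
expand each trace in powers of $r^2$, and compare each expansion with the $r=0$ term.

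The expansions use $J{M}J=(1_p'{M}1_p)J$. For the numerator,
\begin{align*}
\textnormal{tr}({W}{\Sigma}{W}{\Sigma}')=\textnormal{tr}({\Sigma}{\Sigma}')+r^2\big(1_p'{\Sigma}'{\Sigma}1_p+1_p'{\Sigma}{\Sigma}'1_p\big)+r^4(1_p'{\Sigma}1_p)^2.
\end{align*}
For the denominator factors,
\begin{align*}
\textnormal{tr}(({W}{\Sigma}_i)^2)=\textnormal{tr}({\Sigma}_i^2)+2r^2\,1_p'{\Sigma}_i^21_p+r^4(1_p'{\Sigma}_i1_p)^2.
\end{align*}
All added terms are nonnegative.

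\textbf{Part (1).} Bound the numerator corrections relative to $\textnormal{tr}({\Sigma}{\Sigma}')$. For the $r^2$ term, $1_p'{\Sigma}'{\Sigma}1_p\le \|1_p\|^2\lambda_{\max}({\Sigma}'{\Sigma})\le p\,\textnormal{tr}({\Sigma}{\Sigma}')$. For the $r^4$ term, Cauchy--Schwarz gives $(1_p'{\Sigma}1_p)^2\le p\cdot 1_p'{\Sigma}{\Sigma}'1_p\le p^2\textnormal{tr}({\Sigma}{\Sigma}')$. Hence the numerator ratio is $1+O(pr^2+p^2r^4)\to1$ when $pr^2\to0$. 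The same bounds, using $1_p'{\Sigma}_i^21_p\le p\,\textnormal{tr}({\Sigma}_i^2)$ and $(1_p'{\Sigma}_i1_p)^2\le p^2\textnormal{tr}({\Sigma}_i^2)$, show the denominator ratios tend to 1. Therefore the ARE tends to 1.

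\textbf{Part (2).} Start with the denominators. The $r^2$ correction is $2r^2\,1_p'{\Sigma}_i^21_p/\textnormal{tr}({\Sigma}_i^2)\to0$ by hypothesis, since $r^2\le\max\{r^2,pr^4\}$. For the $r^4$ correction, Cauchy--Schwarz gives $(1_p'{\Sigma}_i1_p)^2\le p\,1_p'{\Sigma}_i^21_p$, so the correction is at most $pr^4\,1_p'{\Sigma}_i^21_p/\textnormal{tr}({\Sigma}_i^2)\to0$. Thus both denominator ratios tend to 1.

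For the numerator, drop the nonnegative $r^2$ cross terms, which gives
\begin{align*}
\frac{\textnormal{tr}({W}{\Sigma}{W}{\Sigma}')}{\textnormal{tr}({\Sigma}{\Sigma}')}\ge 1+r^4\frac{(1_p'{\Sigma}1_p)^2}{\textnormal{tr}({\Sigma}{\Sigma}')}.
\end{align*}
When $\max\{r^2,pr^4\}=pr^4$, the hypothesis states exactly that this $r^4$ term tends to $c>0$. When instead $r^2>pr^4$, we only know $p^{-1}r^2(1_p'{\Sigma}1_p)^2/\textnormal{tr}({\Sigma}{\Sigma}')\to c$, so we bring in the $r^2$ cross term. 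By Cauchy--Schwarz, $(1_p'{\Sigma}1_p)^2\le p\,1_p'{\Sigma}{\Sigma}'1_p$. Hence the $r^2$ term satisfies
\begin{align*}
r^2\,\frac{1_p'{\Sigma}{\Sigma}'1_p}{\textnormal{tr}({\Sigma}{\Sigma}')}\ge p^{-1}r^2\,\frac{(1_p'{\Sigma}1_p)^2}{\textnormal{tr}({\Sigma}{\Sigma}')},
\end{align*}
which tends to $c$. So in both regimes the numerator ratio is eventually at least $1+c'$ for some $c'>0$, along subsequences if needed. Squaring it and dividing by denominator factors tending to 1 gives $\liminf\textnormal{ARE}>1$.

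The main obstacle is handling the mixed $\max\{r^2,pr^4\}$ condition cleanly. The two regimes may alternate along the sequence, so the argument should pass to subsequences on which one regime holds, or use the combined bound
\begin{align*}
\textnormal{numerator ratio}-1\ge \tfrac12\max\{r^2,pr^4\}\,p^{-1}(1_p'{\Sigma}1_p)^2/\textnormal{tr}({\Sigma}{\Sigma}'),
\end{align*}
which follows from the two displays above. One also has to interpret the limit as existing, or as a $\liminf$, in the case $c=\infty$, where the ARE diverges.
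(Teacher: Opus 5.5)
Your proposal is correct and follows essentially the same route as the paper. Both arguments expand $\textnormal{ARE}(W,I_p)$ in powers of $r^2$ and use the Cauchy--Schwarz bounds $(1_p'\Sigma1_p)^2\le p\,1_p'\Sigma\Sigma'1_p$ and $(1_p'\Sigma_i1_p)^2\le p\,1_p'\Sigma_i^21_p$. For part (2) the paper, like you, bounds the numerator correction below by a multiple of $p^{-1}\max\{r^2,pr^4\}(1_p'\Sigma1_p)^2/\textnormal{tr}(\Sigma\Sigma')$ and the denominator corrections above by $(2r^2+pr^4)\,1_p'\Sigma_i^21_p/\textnormal{tr}(\Sigma_i^2)$.

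Your combined bound is in fact what the paper's displayed lower bound should read: the paper prints $p^{-1}(2r^2+r^4)$ where $p^{-1}(2r^2+pr^4)$ is needed. Your remark that the conclusion should be read as a $\liminf$ (in particular when $c=\infty$) is a fair tightening of the statement.
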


\autoref{Th3.4} provides important practical guidance for the construction of the weighting matrix ${W}$. In particular, \autoref{Th3.4}(1) indicates that ${W}$ should at least satisfy $pr^2\to c_1\in(0,\infty]$ to achieve asymptotic power improvement. Moreover, the conditions imposed in \autoref{Th3.4}(2) are generally mild and can be satisfied in a wide range of settings. For illustration, let $0<\gamma_{i1}\leqslant \gamma_{i2}\leqslant \cdots \leqslant \gamma_{ip}$ denote the eigenvalues of ${\Sigma}_i$ for $i=1,2$. By straightforward calculations, we have $1_p'{\Sigma}_i^21_p/\textnormal{tr}({\Sigma}_i^2)
\leqslant
\gamma_{ip}^2/\gamma_{i1}^2.$
Hence, if the condition number $\gamma_{ip}/\gamma_{i1}$ converges to a positive constant, together with $r\to0$ and $pr^4\to0$, it follows immediately that $\max\{r^2,pr^4\}1_p'{\Sigma}_i^21_p/\textnormal{tr}({\Sigma}_i^2)\rightarrow 0$. Furthermore, note that $p^{-1}\max\{r^2,pr^4\}(1_p'\Sigma1_p)^2/\textnormal{tr}({\Sigma\Sigma'})\rightarrow c\in(0,\infty]$
is equivalent to
\begin{align}
\max\{pr^2,p^2r^4\}
\frac{
\textnormal{tr}^2({\Sigma}1_p1_p')
}{
\textnormal{tr}(1_p1_p'1_p1_p')
\textnormal{tr}({\Sigma}{\Sigma}')
}
\to c\in(0,\infty].
\label{simplified convergence}
\end{align}
Condition \eqref{simplified convergence} essentially requires ${\Sigma}$ and $1_p1_p'$ to be ``highly correlated'' in the $\mathscr{L}_2$-norm sense. In particular, if ${\Sigma}=c_2 1_p1_p'$ for some arbitrarily small constant $c_2>0$, then \eqref{simplified convergence} holds automatically whenever $pr^2\to c_1\in(0,\infty]$. 
That is, when the dependence is weak ($c_2>0$ can be arbitrarily small and can even converge to 0) but pervasive ($\Sigma$ contains no zero entries), employing $W$ is expected to be more powerful if $pr^2\rightarrow c_1\in(0,\infty]$. The importance of the weighting matrix $W$ is further illustrated in the next section.

\section{Simulations}
\label{Simulations}

In this section, we conduct Monte Carlo simulations to investigate the finite-sample performance of the proposed test. For the weighting matrix $W$, we set
$a_1 = a_2 = \cdots = a_p = a$ and $b_1 = b_2 = \cdots = b_p = b > 0$. 
Under this specification, the weighting matrix can be written as $W = b^2 I_p + a^2 1_p 1_p'$, where $1_p$ denotes a $p$-dimensional vector of ones. We consider the following four configurations for $W$:
\begin{align*}
&W_1:  a = 0.1 p^{-1/2}, \; b = 1; \,\,\,
W_2: a = 0.1 p^{-1}, \; b = 1;\\
&W_3:  a = 0.1 p^{-2}, \; b = 1; \,\,\,
W_4: a = 0, \; b = 1.
\end{align*}
Let $\hat{T}_n(W_i)$ denote the test statistic constructed using $W_i$, $i = 1, \ldots, 4$. 
According to \autoref{Th3.4}(1), the asymptotic power of $\hat{T}_n(W_2)$ and $\hat{T}_n(W_3)$ is expected to be similar to that of $\hat{T}_n(W_4)$ since $pr^2\rightarrow 0$, where $r=a/b$. In contrast, $\hat{T}_n(W_1)$ is designed to better capture ``weak but pervasive'' alternatives and therefore ought to deliver higher empirical power in such settings. Following \autoref{A1}(1), we set $d = 2p$ to ensure the existence of $X$ and $Y$. The data are generated via $X_i = \mu_1 + \Gamma_1 Z_i$ and $Y_i = \mu_2 + \Gamma_2 Z_i$, $i = 1,2,\ldots,n$, where $\{Z_i\}_{i=1}^n$ are independent $2p$-dimensional random vectors with i.i.d. components $Z_{ij}$. We consider the following four data-generating processes (DGPs) for each $Z_{ij}$:
\begin{align*}
&\text{DGP1: } \mathcal{N}(0,1);\,\,\,\text{DGP2: } \sqrt{\frac{3}{5}}t_5;\,\,\,\text{DGP3: } \frac{1}{\sqrt{8}}(\chi^2_4 - 4);\,\,\,\text{DGP4: } \Gamma(4,0.5) - 2,
\end{align*}
where $t_v$ is the $t$-distribution with $v$ degrees of freedom, $\chi_s^2$ is the $\chi^2$ distribution with $s$ degrees of freedom, and $\Gamma(u,v)$ is the gamma distribution with a shape parameter $u$ and a rate parameter $v$. All DGPs are properly standardized such that $\mathbb{E}(Z_i) = 0_{2p}$ and $\text{Var}(Z_i) = I_{2p}$. Moreover, note that the first two DGPs are symmetric around 0, while the latter two are skewed to the right. Therefore, our DGPs include both symmetric and asymmetric distributions. The sample size is set to $n=50$, and the empirical rejection rate is computed from 1000 replications. The significance level is fixed at $\alpha=0.05$, and the corresponding one-sided critical value is $z_{1-\alpha}=1.6449$. The dimension is chosen as $p\in\{100,200,500,1000\}$ to reflect the high-dimensional regime.

To make a comparison with existing tests, we consider comparing the numerical performance of our proposed tests with (1) the fixed-$n$ distance correlation $t$-test \citep[SR]{szekely2013distance}, (2) the distance-based and RKHS-based tests of \citet[ZZYS]{zhu2020distance}, (3) the diverging-$n$ distance correlation test \citep[GFLS]{gao2021asymptotic}, and (4) the rank-based max-sum tests proposed by \citet[WLF]{wang2026testing}. For notational convenience, we use $\text{ZZYS}_{\mathrm{d}}$ and $\text{ZZYS}_{\mathrm{h}}$ to denote the ZZYS tests based on distance covariance and Hilbert--Schmidt covariance, respectively. For the WLF procedure, we employ $\text{WLF}_{\mathrm{sum}}$ and $\text{WLF}_{\mathrm{max}}$ to denote the corresponding sum-type and max-type test statistics. Under the null hypothesis, the critical values of these competing tests are readily available, without the need for bootstrap methods. In the following, we first consider the empirical size in \autoref{Simulation 1}, and then study the empirical power in \autoref{Simulation 2}. 

\subsection{Simulation 1}
\label{Simulation 1}
We consider the following two scenarios:
\begin{align*}
&\text{S1.1: } \mu_1= 0_p,\mu_2 = 1_p;\,\,\Gamma_1 = (I_p,O_p),\Gamma_2 = (O_p,(e^{-|i-j|})_{p\times p});\\
&\text{S1.2: } \mu_1= 1_p,\mu_2 = 0.5\times 1_p;\,\,\Gamma_1 = ((e^{-|i-j|})_{p\times p},O_p),\Gamma_2 = (O_p,2\times (e^{-|i-j|})_{p\times p}).
\end{align*}
In both scenarios, we have $\Sigma=\Gamma_1\Gamma_2'=O_p$, implying that $X$ and $Y$ are uncorrelated. Moreover, both $\Sigma_1=\Gamma_1\Gamma_1'$ and $\Sigma_2=\Gamma_2\Gamma_2'$ are positive definite, ensuring the validity of the simulation setups. The numerical results under scenarios S1.1--S1.2 are summarized in \autoref{S1.1--S1.2}.

Overall, the proposed tests exhibit satisfactory size control under both symmetric and asymmetric distributions, indicating that their finite-sample performance is relatively stable across different distributional settings. In addition, under the null hypothesis of uncorrelatedness, the choice of the weighting matrix $W$ has little impact on the empirical size. Compared with existing tests, the statistics $\hat{T}_n(W_i)$ maintain the empirical size close to the nominal level and perform comparably well. In sharp contrast, $\mathrm{WLF}_{\max}$ tends to be conservative, with empirical size substantially below the nominal level in both scenarios. 
Taken together, these findings suggest that our proposed methods are particularly suitable for high-dimensional settings with relatively small sample sizes.

\begin{table}[H]
\vspace{1em}
\renewcommand\arraystretch{0.85}
\centering
\caption{Empirical rejection rates under scenarios S1.1--S1.2.} 
\resizebox{\linewidth}{!}{
\begin{tabular}{ccccccccccccc}
\hline
Scenario               & DGP                   & $p$  & $\hat{T}_n(W_1)$ & $\hat{T}_n(W_2)$ & $\hat{T}_n(W_3)$ & $\hat{T}_n(W_4)$ & $\text{SR}$ & $\text{ZZYS}_{\text{d}}$ & $\text{ZZYS}_{\text{h}}$ & $\text{GFLS}$ & $\text{WLF}_{\text{sum}}$ & $\text{WLF}_{\text{max}}$ \\ \hline
\multirow[t]{16}{*}{S1.1} & \multirow[t]{4}{*}{DGP1} & 100  & 0.0610           & 0.0590           & 0.0590           & 0.0590           & 0.0620      & 0.0540                   & 0.0520                   & 0.0650        & 0.0610                    & 0.0040                    \\
                       &                       & 200  & 0.0500           & 0.0500           & 0.0500           & 0.0500           & 0.0430      & 0.0540                   & 0.0460                   & 0.0470        & 0.0410                    & 0.0020                    \\
                       &                       & 500  & 0.0500           & 0.0540           & 0.0540           & 0.0540           & 0.0450      & 0.0430                   & 0.0470                   & 0.0500        & 0.0510                    & 0.0010                    \\
                       &                       & 1000 & 0.0490           & 0.0530           & 0.0530           & 0.0530           & 0.0490      & 0.0410                   & 0.0360                   & 0.0560        & 0.0520                    & 0.0000                    \\ \cline{2-13} 
                       & \multirow[t]{4}{*}{DGP2} & 100  & 0.0600           & 0.0570           & 0.0570           & 0.0570           & 0.0520      & 0.0480                   & 0.0440                   & 0.0600        & 0.0550                    & 0.0100                    \\
                       &                       & 200  & 0.0600           & 0.0530           & 0.0530           & 0.0530           & 0.0570      & 0.0580                   & 0.0540                   & 0.0590        & 0.0590                    & 0.0030                    \\
                       &                       & 500  & 0.0660           & 0.0660           & 0.0660           & 0.0660           & 0.0630      & 0.0460                   & 0.0400                   & 0.0640        & 0.0540                    & 0.0000                    \\
                       &                       & 1000 & 0.0530           & 0.0500           & 0.0500           & 0.0500           & 0.0450      & 0.0620                   & 0.0610                   & 0.0490        & 0.0530                    & 0.0000                    \\ \cline{2-13} 
                       & \multirow[t]{4}{*}{DGP3} & 100  & 0.0550           & 0.0520           & 0.0520           & 0.0520           & 0.0510      & 0.0550                   & 0.0510                   & 0.0560        & 0.0490                    & 0.0080                    \\
                       &                       & 200  & 0.0550           & 0.0590           & 0.0600           & 0.0600           & 0.0590      & 0.0550                   & 0.0420                   & 0.0620        & 0.0540                    & 0.0000                    \\
                       &                       & 500  & 0.0530           & 0.0360           & 0.0360           & 0.0360           & 0.0360      & 0.0380                   & 0.0410                   & 0.0380        & 0.0390                    & 0.0010                    \\
                       &                       & 1000 & 0.0530           & 0.0550           & 0.0550           & 0.0550           & 0.0490      & 0.0560                   & 0.0540                   & 0.0560        & 0.0560                    & 0.0000                    \\ \cline{2-13} 
                       & \multirow[t]{4}{*}{DGP4} & 100  & 0.0540           & 0.0500           & 0.0500           & 0.0500           & 0.0510      & 0.0470                   & 0.0460                   & 0.0550        & 0.0440                    & 0.0050                    \\
                       &                       & 200  & 0.0520           & 0.0500           & 0.0500           & 0.0500           & 0.0460      & 0.0530                   & 0.0480                   & 0.0520        & 0.0480                    & 0.0040                    \\
                       &                       & 500  & 0.0690           & 0.0530           & 0.0530           & 0.0530           & 0.0520      & 0.0560                   & 0.0490                   & 0.0540        & 0.0510                    & 0.0010                    \\
                       &                       & 1000 & 0.0580           & 0.0580           & 0.0580           & 0.0580           & 0.0570      & 0.0500                   & 0.0540                   & 0.0600        & 0.0560                    & 0.0000                    \\ \hline
\multirow[t]{16}{*}{S1.2} & \multirow[t]{4}{*}{DGP1} & 100  & 0.0660           & 0.0650           & 0.0640           & 0.0640           & 0.0630      & 0.0470                   & 0.0460                   & 0.0670        & 0.0540                    & 0.0020                    \\
                       &                       & 200  & 0.0520           & 0.0510           & 0.0510           & 0.0510           & 0.0450      & 0.0410                   & 0.0460                   & 0.0500        & 0.0400                    & 0.0010                    \\
                       &                       & 500  & 0.0520           & 0.0540           & 0.0540           & 0.0540           & 0.0430      & 0.0510                   & 0.0550                   & 0.0490        & 0.0460                    & 0.0020                    \\
                       &                       & 1000 & 0.0460           & 0.0500           & 0.0500           & 0.0500           & 0.0510      & 0.0430                   & 0.0480                   & 0.0530        & 0.0460                    & 0.0010                    \\ \cline{2-13} 
                       & \multirow[t]{4}{*}{DGP2} & 100  & 0.0500           & 0.0510           & 0.0510           & 0.0510           & 0.0490      & 0.0490                   & 0.0620                   & 0.0510        & 0.0470                    & 0.0030                    \\
                       &                       & 200  & 0.0410           & 0.0430           & 0.0430           & 0.0430           & 0.0420      & 0.0530                   & 0.0450                   & 0.0470        & 0.0460                    & 0.0020                    \\
                       &                       & 500  & 0.0620           & 0.0550           & 0.0550           & 0.0550           & 0.0530      & 0.0550                   & 0.0440                   & 0.0560        & 0.0570                    & 0.0000                    \\
                       &                       & 1000 & 0.0650           & 0.0670           & 0.0670           & 0.0670           & 0.0640      & 0.0600                   & 0.0650                   & 0.0700        & 0.0600                    & 0.0000                    \\ \cline{2-13} 
                       & \multirow[t]{4}{*}{DGP3} & 100  & 0.0540           & 0.0550           & 0.0550           & 0.0550           & 0.0550      & 0.0510                   & 0.0500                   & 0.0590        & 0.0450                    & 0.0070                    \\
                       &                       & 200  & 0.0510           & 0.0450           & 0.0440           & 0.0440           & 0.0380      & 0.0450                   & 0.0420                   & 0.0430        & 0.0390                    & 0.0010                    \\
                       &                       & 500  & 0.0580           & 0.0540           & 0.0540           & 0.0540           & 0.0520      & 0.0530                   & 0.0440                   & 0.0550        & 0.0570                    & 0.0000                    \\
                       &                       & 1000 & 0.0480           & 0.0560           & 0.0560           & 0.0560           & 0.0530      & 0.0490                   & 0.0560                   & 0.0550        & 0.0490                    & 0.0000                    \\ \cline{2-13} 
                       & \multirow[t]{4}{*}{DGP4} & 100  & 0.0480           & 0.0530           & 0.0530           & 0.0530           & 0.0510      & 0.0430                   & 0.0440                   & 0.0520        & 0.0520                    & 0.0010                    \\
                       &                       & 200  & 0.0640           & 0.0680           & 0.0670           & 0.0670           & 0.0640      & 0.0570                   & 0.0390                   & 0.0680        & 0.0520                    & 0.0010                    \\
                       &                       & 500  & 0.0550           & 0.0530           & 0.0530           & 0.0530           & 0.0490      & 0.0490                   & 0.0520                   & 0.0530        & 0.0540                    & 0.0010                    \\
                       &                       & 1000 & 0.0590           & 0.0620           & 0.0620           & 0.0620           & 0.0570      & 0.0610                   & 0.0550                   & 0.0600        & 0.0600                    & 0.0000                    \\ \hline
\end{tabular}}
\label{S1.1--S1.2}
\end{table}

\subsection{Simulation 2}
\label{Simulation 2}

Next, we consider the empirical power of our methods. For the rest of this section, we fix $\mu_1 = 1_p$ and $\mu_2 = 0.5\times 1_p$. Consider the following four scenarios:
\begin{align*}
&\text{S2.1: } \Gamma_1 = (I_p,O_p),\Gamma_2 = (M_1,I_p);\\
&\text{S2.2: } \Gamma_1 = (I_p,0.5M_2),\Gamma_2 = (0.5M_2,I_p);\\
&\text{S2.3: } \Gamma_1 = (I_p,O_p),\Gamma_2 = (M_3,I_p - M_3);\\
&\text{S2.4: } \Gamma_1 = (I_p,O_p),\Gamma_2 = (M_4,I_p),
\end{align*}
where $M_1 = (p^{-3/5}\mathbb{I}(|i-j|\leqslant \sqrt{p}))_{p\times p}$, $M_2 = (e^{-|i-j|}\mathbb{I}(i\leqslant 20)\mathbb{I}(j\leqslant 20))_{p\times p}$, $M_3 = p^{-1}1_p1_p'$, and
\begin{align*}
M_4 = \left(\begin{matrix}
p^{-3/4}1_{p/2}1_{p/2}' & O_{p/2}\\
O_{p/2} & p^{-3/4}1_{p/2}1_{p/2}'
\end{matrix}\right).
\end{align*}
Specifically, in S2.1, the covariance matrix $\Sigma$ has a banded structure, with the band width increasing at the rate of $\sqrt{p}$. In S2.2, dependence is present only in the first 20 components. S2.3 corresponds to a ``weak but pervasive'' alternative, and S2.4 represents a ``block-pervasive'' structure. The simulation results are summarized in Tables \ref{S2.1--S2.2}--\ref{S2.3--S2.4}.

We first analyze scenario S2.1. Comparing the empirical power of $\hat{T}_n$ under different weighting matrices, we observe that $\hat{T}_n(W_1)$ consistently achieves the highest power. In contrast, the empirical power of $\hat{T}_n(W_i)$, $i=2,3,4$, is substantially lower and decreases rapidly as the dimension $p$ increases. Meanwhile, the performance of $\hat{T}_n(W_1)$ remains relatively stable across different dimensions, highlighting the importance of the condition $pr^2 \to c>0$. Furthermore, the empirical powers of $\hat{T}_n(W_2)$ and $\hat{T}_n(W_3)$ are almost identical to that of $\hat{T}_n(W_4)$, which validates the conclusion of \autoref{Th3.4}(1): once $pr^2 \to 0$, modifying the weighting matrix $W$ no longer lead to any asymptotic power improvement. Compared with existing tests, our proposed test statistic $\hat{T}_n(W_1)$ still delivers the highest power, while SR, $\text{ZZYS}_{\text{d}}$, $\text{ZZYS}_{\text{h}}$, GFLS, and $\text{WLF}_{\textnormal{sum}}$ suffer from severe power loss as $p$ increases. Furthermore, the empirical power of $\text{WLF}_{\textnormal{max}}$ is even below the nominal size, which suggests that a larger sample size may be necessary for $\text{WLF}_{\textnormal{max}}$ to achieve nontrivial power under such simulation settings.

Next, we turn to scenario S2.2. In this setting, the proposed tests exhibit performance comparable to SR and GFLS, while clearly outperforming ZZYS and $\text{WLF}_{\mathrm{sum}}$. The only competitor that dominates our methods is $\text{WLF}_{\mathrm{max}}$, whose empirical power remains close to 1 across different values of $p$. This phenomenon can be explained by the design of $\text{WLF}_{\mathrm{max}}$. As the dimension $p$ increases, the overall signal strength dilutes, leading to a loss of power for most tests. In contrast, $\text{WLF}_{\mathrm{max}}$ focuses on the strongest signal. Since the covariance structure among the first 20 components of $X$ and $Y$ is fixed, the largest signal remains unchanged. Consequently, the power of $\text{WLF}_{\mathrm{max}}$ is unaffected by increasing dimensionality and stays close to 1. Nevertheless, when considering both size control and power performance across all scenarios, our proposed procedures remain a more reliable and competitive alternative to $\text{WLF}_{\mathrm{max}}$.

Finally, we consider scenarios S2.3--S2.4. Under the ``weak but pervasive'' and ``block-pervasive'' alternatives, the advantage of $\hat{T}_n(W_1)$ becomes more evident. In particular, $\hat{T}_n(W_1)$ achieves nontrivial empirical power across all DGPs, whereas all other tests exhibit substantially lower power, with empirical power in S2.3 remaining close to the nominal size. Moreover, as the dimension $p$ increases, the empirical power of $\hat{T}_n(W_1)$ improves markedly, while other tests suffer from a rapid decline in power. These findings suggest that, when applying our proposed methods, the condition $pr^2 \to 0$ should be satisfied to ensure power gains across a broad class of alternatives. Overall, the above simulation results highlight the robustness and effectiveness of our proposed methods across a wide range of scenarios, indicating that the proposed test $\hat{T}_n$ is a strong competitor to existing methods for testing the uncorrelatedness of a pair of high-dimensional random vectors.

\begin{table}[H]
\vspace{1em}
\renewcommand\arraystretch{0.85}
\centering
\caption{Empirical rejection rates under scenarios S2.1--S2.2.} 
\resizebox{\linewidth}{!}{
\begin{tabular}{ccccccccccccc}
\hline
Scenario               & DGP                   & $p$  & $\hat{T}_n(W_1)$ & $\hat{T}_n(W_2)$ & $\hat{T}_n(W_3)$ & $\hat{T}_n(W_4)$ & $\text{SR}$ & $\text{ZZYS}_{\text{d}}$ & $\text{ZZYS}_{\text{h}}$ & $\text{GFLS}$ & $\text{WLF}_{\text{sum}}$ & $\text{WLF}_{\text{max}}$ \\ \hline
\multirow[t]{16}{*}{S2.1} & \multirow[t]{4}{*}{DGP1} & 100  & 0.9390           & 0.7590           & 0.7560           & 0.7560           & 0.7340      & 0.5900                   & 0.3120                   & 0.7430        & 0.6660                    & 0.0070                    \\
                       &                       & 200  & 0.8770           & 0.4860           & 0.4860           & 0.4860           & 0.4680      & 0.3440                   & 0.1890                   & 0.4830        & 0.3970                    & 0.0000                    \\
                       &                       & 500  & 0.8200           & 0.2180           & 0.2180           & 0.2180           & 0.2070      & 0.1730                   & 0.1030                   & 0.2200        & 0.1920                    & 0.0010                    \\
                       &                       & 1000 & 0.8610           & 0.1280           & 0.1280           & 0.1280           & 0.1220      & 0.1060                   & 0.0810                   & 0.1310        & 0.1000                    & 0.0010                    \\ \cline{2-13} 
                       & \multirow[t]{4}{*}{DGP2} & 100  & 0.9490           & 0.7700           & 0.7690           & 0.7690           & 0.7470      & 0.6780                   & 0.2870                   & 0.7630        & 0.7420                    & 0.0080                    \\
                       &                       & 200  & 0.8780           & 0.4780           & 0.4750           & 0.4750           & 0.4650      & 0.4180                   & 0.1830                   & 0.4750        & 0.4410                    & 0.0030                    \\
                       &                       & 500  & 0.8320           & 0.2210           & 0.2210           & 0.2210           & 0.2170      & 0.1780                   & 0.0870                   & 0.2220        & 0.1810                    & 0.0010                    \\
                       &                       & 1000 & 0.8550           & 0.1420           & 0.1420           & 0.1420           & 0.1320      & 0.1070                   & 0.0720                   & 0.1440        & 0.1360                    & 0.0000                    \\ \cline{2-13} 
                       & \multirow[t]{4}{*}{DGP3} & 100  & 0.9610           & 0.8160           & 0.8160           & 0.8160           & 0.8330      & 0.7730                   & 0.4750                   & 0.8420        & 0.8070                    & 0.0080                    \\
                       &                       & 200  & 0.8610           & 0.4760           & 0.4750           & 0.4750           & 0.4690      & 0.4370                   & 0.2670                   & 0.4880        & 0.4800                    & 0.0010                    \\
                       &                       & 500  & 0.8220           & 0.2340           & 0.2340           & 0.2340           & 0.2330      & 0.2310                   & 0.1500                   & 0.2450        & 0.2170                    & 0.0010                    \\
                       &                       & 1000 & 0.8570           & 0.1530           & 0.1530           & 0.1530           & 0.1430      & 0.1440                   & 0.1060                   & 0.1500        & 0.1490                    & 0.0010                    \\ \cline{2-13} 
                       & \multirow[t]{4}{*}{DGP4} & 100  & 0.9360           & 0.7740           & 0.7700           & 0.7700           & 0.7680      & 0.6490                   & 0.3830                   & 0.7760        & 0.7010                    & 0.0040                    \\
                       &                       & 200  & 0.8460           & 0.4680           & 0.4680           & 0.4680           & 0.4670      & 0.3830                   & 0.2120                   & 0.4770        & 0.4180                    & 0.0020                    \\
                       &                       & 500  & 0.8280           & 0.2170           & 0.2170           & 0.2170           & 0.2150      & 0.1850                   & 0.1140                   & 0.2210        & 0.1920                    & 0.0000                    \\
                       &                       & 1000 & 0.8860           & 0.1250           & 0.1250           & 0.1250           & 0.1210      & 0.1260                   & 0.1050                   & 0.1250        & 0.1300                    & 0.0000                    \\ \hline
\multirow[t]{16}{*}{S2.2} & \multirow[t]{4}{*}{DGP1} & 100  & 1.0000           & 1.0000           & 1.0000           & 1.0000           & 1.0000      & 1.0000                   & 0.9440                   & 1.0000        & 0.9980                    & 1.0000                    \\
                       &                       & 200  & 0.9860           & 0.9850           & 0.9850           & 0.9850           & 0.9830      & 0.8740                   & 0.5240                   & 0.9830        & 0.7600                    & 1.0000                    \\
                       &                       & 500  & 0.5200           & 0.5230           & 0.5230           & 0.5230           & 0.5130      & 0.2810                   & 0.1450                   & 0.5240        & 0.2110                    & 1.0000                    \\
                       &                       & 1000 & 0.2290           & 0.2350           & 0.2350           & 0.2350           & 0.2230      & 0.1340                   & 0.0890                   & 0.2380        & 0.1290                    & 0.9940                    \\ \cline{2-13} 
                       & \multirow[t]{4}{*}{DGP2} & 100  & 1.0000           & 1.0000           & 1.0000           & 1.0000           & 1.0000      & 1.0000                   & 0.9450                   & 1.0000        & 0.9990                    & 1.0000                    \\
                       &                       & 200  & 0.9860           & 0.9800           & 0.9800           & 0.9800           & 0.9810      & 0.9240                   & 0.5550                   & 0.9830        & 0.7660                    & 1.0000                    \\
                       &                       & 500  & 0.5600           & 0.5550           & 0.5550           & 0.5550           & 0.5470      & 0.3510                   & 0.1580                   & 0.5620        & 0.2530                    & 1.0000                    \\
                       &                       & 1000 & 0.2290           & 0.2420           & 0.2410           & 0.2410           & 0.2320      & 0.1420                   & 0.0990                   & 0.2410        & 0.1120                    & 1.0000                    \\ \cline{2-13} 
                       & \multirow[t]{4}{*}{DGP3} & 100  & 1.0000           & 1.0000           & 1.0000           & 1.0000           & 1.0000      & 1.0000                   & 0.9580                   & 1.0000        & 1.0000                    & 1.0000                    \\
                       &                       & 200  & 0.9820           & 0.9820           & 0.9820           & 0.9820           & 0.9810      & 0.9030                   & 0.5090                   & 0.9810        & 0.7880                    & 1.0000                    \\
                       &                       & 500  & 0.5250           & 0.5340           & 0.5340           & 0.5340           & 0.5160      & 0.3300                   & 0.1570                   & 0.5260        & 0.2620                    & 1.0000                    \\
                       &                       & 1000 & 0.2360           & 0.2530           & 0.2530           & 0.2530           & 0.2340      & 0.1540                   & 0.1050                   & 0.2440        & 0.1350                    & 1.0000                    \\ \cline{2-13} 
                       & \multirow[t]{4}{*}{DGP4} & 100  & 1.0000           & 1.0000           & 1.0000           & 1.0000           & 1.0000      & 1.0000                   & 0.9430                   & 1.0000        & 0.9990                    & 1.0000                    \\
                       &                       & 200  & 0.9880           & 0.9850           & 0.9850           & 0.9850           & 0.9830      & 0.8830                   & 0.5140                   & 0.9840        & 0.7990                    & 1.0000                    \\
                       &                       & 500  & 0.5250           & 0.5340           & 0.5340           & 0.5340           & 0.5230      & 0.3060                   & 0.1700                   & 0.5310        & 0.2500                    & 1.0000                    \\
                       &                       & 1000 & 0.2200           & 0.2280           & 0.2280           & 0.2280           & 0.2150      & 0.1560                   & 0.1170                   & 0.2290        & 0.1220                    & 1.0000                    \\ \hline
\end{tabular}}
\label{S2.1--S2.2}
\end{table}

\begin{table}[H]
\vspace{1em}
\renewcommand\arraystretch{0.85}
\centering
\caption{Empirical rejection rates under scenarios S2.3--S2.4.} 
\resizebox{\linewidth}{!}{
\begin{tabular}{ccccccccccccc}
\hline
Scenario               & DGP                   & $p$  & $\hat{T}_n(W_1)$ & $\hat{T}_n(W_2)$ & $\hat{T}_n(W_3)$ & $\hat{T}_n(W_4)$ & $\text{SR}$ & $\text{ZZYS}_{\text{d}}$ & $\text{ZZYS}_{\text{h}}$ & $\text{GFLS}$ & $\text{WLF}_{\text{sum}}$ & $\text{WLF}_{\text{max}}$ \\ \hline
\multirow[t]{16}{*}{S2.3} & \multirow[t]{4}{*}{DGP1} & 100  & 0.3820           & 0.1160           & 0.1130           & 0.1130           & 0.1050      & 0.0840                   & 0.0610                   & 0.1100        & 0.1010                    & 0.0060                    \\
                       &                       & 200  & 0.4480           & 0.0700           & 0.0690           & 0.0690           & 0.0630      & 0.0520                   & 0.0530                   & 0.0710        & 0.0600                    & 0.0010                    \\
                       &                       & 500  & 0.6830           & 0.0490           & 0.0490           & 0.0490           & 0.0460      & 0.0470                   & 0.0530                   & 0.0490        & 0.0480                    & 0.0000                    \\
                       &                       & 1000 & 0.9050           & 0.0570           & 0.0570           & 0.0570           & 0.0560      & 0.0590                   & 0.0590                   & 0.0600        & 0.0590                    & 0.0000                    \\ \cline{2-13} 
                       & \multirow[t]{4}{*}{DGP2} & 100  & 0.4030           & 0.1250           & 0.1240           & 0.1240           & 0.1220      & 0.1140                   & 0.0800                   & 0.1260        & 0.1220                    & 0.0030                    \\
                       &                       & 200  & 0.4460           & 0.0770           & 0.0770           & 0.0770           & 0.0630      & 0.0730                   & 0.0620                   & 0.0730        & 0.0700                    & 0.0020                    \\
                       &                       & 500  & 0.6750           & 0.0700           & 0.0700           & 0.0700           & 0.0640      & 0.0440                   & 0.0510                   & 0.0680        & 0.0560                    & 0.0020                    \\
                       &                       & 1000 & 0.9000           & 0.0620           & 0.0620           & 0.0620           & 0.0560      & 0.0460                   & 0.0480                   & 0.0630        & 0.0530                    & 0.0000                    \\ \cline{2-13} 
                       & \multirow[t]{4}{*}{DGP3} & 100  & 0.3950           & 0.1280           & 0.1260           & 0.1260           & 0.1270      & 0.0960                   & 0.0730                   & 0.1330        & 0.0960                    & 0.0090                    \\
                       &                       & 200  & 0.4500           & 0.0850           & 0.0850           & 0.0850           & 0.0860      & 0.0740                   & 0.0710                   & 0.0910        & 0.0600                    & 0.0010                    \\
                       &                       & 500  & 0.6820           & 0.0560           & 0.0560           & 0.0560           & 0.0520      & 0.0440                   & 0.0570                   & 0.0570        & 0.0320                    & 0.0020                    \\
                       &                       & 1000 & 0.9090           & 0.0640           & 0.0640           & 0.0640           & 0.0600      & 0.0560                   & 0.0580                   & 0.0660        & 0.0360                    & 0.0000                    \\ \cline{2-13} 
                       & \multirow[t]{4}{*}{DGP4} & 100  & 0.4080           & 0.0990           & 0.0990           & 0.0990           & 0.1060      & 0.1000                   & 0.0840                   & 0.1090        & 0.0980                    & 0.0030                    \\
                       &                       & 200  & 0.4530           & 0.0780           & 0.0780           & 0.0780           & 0.0720      & 0.0630                   & 0.0590                   & 0.0770        & 0.0700                    & 0.0020                    \\
                       &                       & 500  & 0.6900           & 0.0630           & 0.0630           & 0.0630           & 0.0550      & 0.0370                   & 0.0340                   & 0.0630        & 0.0590                    & 0.0010                    \\
                       &                       & 1000 & 0.9070           & 0.0580           & 0.0580           & 0.0580           & 0.0560      & 0.0550                   & 0.0520                   & 0.0600        & 0.0520                    & 0.0000                    \\ \hline
\multirow[t]{16}{*}{S2.4} & \multirow[t]{4}{*}{DGP1} & 100  & 0.8970           & 0.4760           & 0.4740           & 0.4740           & 0.4330      & 0.3410                   & 0.1560                   & 0.4490        & 0.3710                    & 0.0080                    \\
                       &                       & 200  & 0.9460           & 0.3040           & 0.2990           & 0.2990           & 0.2810      & 0.2200                   & 0.1450                   & 0.2950        & 0.2460                    & 0.0000                    \\
                       &                       & 500  & 0.9960           & 0.1690           & 0.1680           & 0.1680           & 0.1630      & 0.1420                   & 0.0920                   & 0.1700        & 0.1690                    & 0.0010                    \\
                       &                       & 1000 & 1.0000           & 0.1250           & 0.1230           & 0.1230           & 0.1180      & 0.1080                   & 0.0920                   & 0.1250        & 0.0990                    & 0.0010                    \\ \cline{2-13} 
                       & \multirow[t]{4}{*}{DGP2} & 100  & 0.8950           & 0.5070           & 0.5010           & 0.5010           & 0.4770      & 0.4610                   & 0.2140                   & 0.4900        & 0.4800                    & 0.0040                    \\
                       &                       & 200  & 0.9610           & 0.3170           & 0.3110           & 0.3110           & 0.2990      & 0.2860                   & 0.1570                   & 0.3150        & 0.3080                    & 0.0010                    \\
                       &                       & 500  & 0.9970           & 0.2010           & 0.1990           & 0.1990           & 0.1890      & 0.1750                   & 0.0970                   & 0.1960        & 0.1820                    & 0.0020                    \\
                       &                       & 1000 & 1.0000           & 0.1250           & 0.1240           & 0.1240           & 0.1170      & 0.1110                   & 0.0830                   & 0.1250        & 0.1370                    & 0.0000                    \\ \cline{2-13} 
                       & \multirow[t]{4}{*}{DGP3} & 100  & 0.8840           & 0.4650           & 0.4630           & 0.4630           & 0.5000      & 0.4520                   & 0.2840                   & 0.5170        & 0.4390                    & 0.0050                    \\
                       &                       & 200  & 0.9520           & 0.3050           & 0.3040           & 0.3040           & 0.3220      & 0.2980                   & 0.2000                   & 0.3390        & 0.2990                    & 0.0020                    \\
                       &                       & 500  & 0.9950           & 0.1820           & 0.1790           & 0.1790           & 0.1900      & 0.1800                   & 0.1140                   & 0.2010        & 0.1290                    & 0.0010                    \\
                       &                       & 1000 & 1.0000           & 0.1280           & 0.1260           & 0.1260           & 0.1370      & 0.1230                   & 0.0960                   & 0.1420        & 0.0880                    & 0.0010                    \\ \cline{2-13} 
                       & \multirow[t]{4}{*}{DGP4} & 100  & 0.8910           & 0.4990           & 0.4920           & 0.4920           & 0.5070      & 0.4000                   & 0.2280                   & 0.5140        & 0.4350                    & 0.0050                    \\
                       &                       & 200  & 0.9590           & 0.3140           & 0.3110           & 0.3110           & 0.3280      & 0.2590                   & 0.1620                   & 0.3310        & 0.2730                    & 0.0040                    \\
                       &                       & 500  & 0.9990           & 0.1840           & 0.1820           & 0.1820           & 0.1810      & 0.1650                   & 0.1100                   & 0.1870        & 0.1330                    & 0.0010                    \\
                       &                       & 1000 & 1.0000           & 0.1480           & 0.1470           & 0.1470           & 0.1440      & 0.1140                   & 0.0960                   & 0.1550        & 0.1100                    & 0.0000                    \\ \hline
\end{tabular}}
\label{S2.3--S2.4}
\end{table}

\section{Empirical analysis}
\label{Empirical analysis}
In this section, we apply the proposed tests to examine the correlation between DNA methylation and gene expression in cancer. DNA methylation is an important epigenetic modification that regulates gene expression and has been shown to play an important role in cancer development and progression \citep{noushmehr2010identification,kim2017dna,su2018homeobox} as well as a variety of other diseases \citep{chen2014correlation}. The role of DNA methylation in cancer is particularly crucial, with numerous studies analyzing cancer-specific CpG hypermethylation that suppresses the expression of tumor suppressors and hypomethylation that activates oncogene expression \citep{sproul2012tissue,jones2012functions,schubeler2015function}. Consequently, assessing the dependence between DNA methylation and gene expression provides a natural and scientifically meaningful application for the proposed methods.

The data we use are obtained from The Cancer Genome Atlas (TCGA), a comprehensive cancer genomics initiative that provides rich information for analyzing various molecular aspects of cancer genetics. In particular, TCGA provides matched DNA methylation and gene expression data from the same patients, making it convenient to investigate the correlation between epigenetic regulation and transcriptional activity. Following \citet{rauschenberger2016testing}, we use the prostate cancer dataset, which is publicly available at \url{https://linkedomics.org/data_download/TCGA-PRAD/}. The dataset originally contains gene-level DNA methylation and gene expression measurements for 17,854 genes. To reduce noise and improve measurement reliability, we focus on genes with sufficient variability and expression levels. To this end, we exclude genes with either variance below the 20th percentile or mean expression level below the 10th percentile. Note that such preprocessing is standard in high-dimensional genomic data analysis; see also \citet{wang2026testing}. Finally, the dimensionality is reduced to $p=13685$, and the sample size is $n=497$. 

We first apply our proposed test $\hat{T}_n$ together with existing tests SR, ZZYS, GFLS, and WLF to the full sample to examine the correlation between DNA methylation and gene expression. The tuning parameters for all test statistics are set identically to those described in \autoref{Simulations}. All tests considered reject the null hypothesis of uncorrelatedness even at the 1\% significance level, providing strong evidence of a significant association between DNA methylation and gene expression. 

Nevertheless, because both the sample size and the dimensionality of the full dataset are extremely large, even relatively weak dependence signals may be detected with high probability. It is therefore of interest to examine whether the observed dependence structure persists in smaller subsamples, thereby assessing the robustness of the underlying correlation. For this purpose, we randomly draw subsamples with sample size $n=30$ and dimensions $p\in\{50,75,100,150\}$, and apply the same testing procedures to each subsample. For each $p$, the subsampling procedure is repeated 300 times, with both observations and variables randomly selected in each replication. \autoref{empirical p values} reports the maximum $p$-value among the 300 replications for each test. For most test statistics, the largest $p$-values across different choices of $p$ remain below the conventional 5\% significance level, suggesting that the underlying dependence structure persists even in relatively small subsamples. The only exception is $\text{WLF}_{\max}$, for which the maximum $p$-value fails to reject the null hypothesis. This finding is consistent with the numerical evidence reported in \autoref{Simulations}, where $\text{WLF}_{\max}$ is observed to be more conservative under dense alternatives. Overall, our analysis provides another statistical perspective on the mechanism of DNA methylation and offers valuable insights for future biological research.

\begin{table}[htbp]
\vspace{1em}
\renewcommand\arraystretch{0.85}
\centering
\caption{Testing $p$-values for subsamples of the prostate cancer dataset.} 
\resizebox{\linewidth}{!}{
\begin{tabular}{ccccccccccc}
\hline
$p$ & $\hat{T}_n(W_1)$ & $\hat{T}_n(W_2)$ & $\hat{T}_n(W_3)$ & $\hat{T}_n(W_4)$ & $\text{SR}$ & $\text{ZZYS}_{\text{d}}$ & $\text{ZZYS}_{\text{h}}$ & $\text{GFLS}$ & $\text{WLF}_{\text{sum}}$ & $\text{WLF}_{\text{max}}$ \\ \hline
50  & $0.0208$         & $0.0227$         & $0.0228$         & $0.0228$         & $0.0107$    & $0.0017$                 & $0.0146$                 & $0.0087$      & $0.0023$                  & $0.6130$                  \\
75  & $0.0022$         & $0.0012$         & $0.0012$         & $0.0012$         & $0.0004$    & $<0.0001$                & $0.0004$                 & $0.0003$      & $0.0001$                  & $0.4555$                  \\
100 & $0.0095$         & $0.0008$         & $0.0007$         & $0.0007$         & $0.0010$    & $<0.0001$                & $<0.0001$                & $0.0007$      & $<0.0001$                 & $0.7605$                  \\
150 & $0.0001$         & $<0.0001$        & $<0.0001$        & $<0.0001$        & $<0.0001$   & $<0.0001$                & $0.0028$                 & $<0.0001$     & $<0.0001$                 & $0.7341$                  \\ \hline
\end{tabular}}
\label{empirical p values}
\end{table}

\section{Conclusion}
\label{Concluding comments}
This paper proposes a novel paired-sample test for high-dimensional covariance matrices. We demonstrate that under the null hypothesis of uncorrelatedness, the proposed test statistic $\hat{T}_n$ is asymptotically standard normal as $n, p \rightarrow \infty$. Consequently, our framework circumvents the need for computationally intensive resampling or bootstrapping procedures to determine critical values. Furthermore, we establish general theoretical results showing that a properly specified weighting matrix $W$ can substantially enhance empirical power. Extensive numerical simulations validate the robust finite-sample performance of our methods. Compared with existing tests, the proposed test, equipped with a properly specified $W$, exhibits comparable size control and power improvement against ``weak but pervasive'' alternatives.

Note that the testing procedure of this paper can be further extended in several ways. For example, testing the equality of mean vectors or variance matrices for paired data. Testing the parametric specification or block-diagonal structure of the variance matrices may also be considered. Given the flexibility in designing the weighting matrix $W$, the resulting RID-based test statistics can adapt to different high-dimensional structures and may therefore exhibit improved sensitivity to certain alternatives. A data-driven selection procedure for the weighting matrix $W$ is also worth studying. A detailed investigation of these issues is beyond the scope of this article and is left for future research.

\setlength{\bibsep}{0.5em} 
\spacingset{1}
\bibliographystyle{chicago}
\bibliography{reference}

@book{hall2014martingale,
  title={Martingale limit theory and its application},
  author={Hall, Peter and Heyde, Christopher C},
  year={1980},
  publisher={Academic press}
}

@article{jiang2024nonparametric,
  title={Nonparametric two-sample tests of high dimensional mean vectors via random integration},
  author={Jiang, Yunlu and Wang, Xueqin and Wen, Canhong and Jiang, Yukang and Zhang, Heping},
  journal={Journal of the American Statistical Association},
  volume={119},
  number={545},
  pages={701--714},
  year={2024},
  publisher={Taylor \& Francis}
}

@article{chen2010two,
  title={A two-sample test for high-dimensional data with applications to gene-set testing},
  author={Chen, Song Xi and Qin, Ying-Li},
  year={2010}
}

@article{li2012two,
  title={Two sample tests for high-dimensional covariance matrices},
  author={Li, Jun and Chen, Song Xi},
  year={2012}
}

@article{bai1996effect,
  title={Effect of high dimension: by an example of a two sample problem},
  author={Bai, Zhidong and Saranadasa, Hewa},
  journal={Statistica Sinica},
  pages={311--329},
  year={1996},
  publisher={JSTOR}
}

@article{chen2010tests,
  title={Tests for high-dimensional covariance matrices},
  author={Chen, Song Xi and Zhang, Li-Xin and Zhong, Ping-Shou},
  journal={Journal of the American Statistical Association},
  volume={105},
  number={490},
  pages={810--819},
  year={2010},
  publisher={Taylor \& Francis}
}

@article{wang2015high,
  title={A high-dimensional nonparametric multivariate test for mean vector},
  author={Wang, Lan and Peng, Bo and Li, Runze},
  journal={Journal of the American Statistical Association},
  volume={110},
  number={512},
  pages={1658--1669},
  year={2015},
  publisher={Taylor \& Francis}
}

@article{jiang2023use,
  title={Use of random integration to test equality of high dimensional covariance matrices},
  author={Jiang, Yunlu and Wen, Canhong and Jiang, Yukang and Wang, Xueqin and Zhang, Heping},
  journal={Statistica Sinica},
  volume={33},
  number={4},
  pages={2359--2380},
  year={2023},
  publisher={JSTOR}
}

@article{gorecki2020independence,
  title={Independence test and canonical correlation analysis based on the alignment between kernel matrices for multivariate functional data},
  author={G{\'o}recki, Tomasz and Krzy{\'s}ko, Miros{\l}aw and Wo{\l}y{\'n}ski, Waldemar},
  journal={Artificial Intelligence Review},
  volume={53},
  number={1},
  pages={475--499},
  year={2020},
  publisher={Springer}
}

@article{zhang2024fast,
  title={A fast and accurate kernel-based independence test with applications to high-dimensional and functional data},
  author={Zhang, Jin-Ting and Zhu, Tianming},
  journal={Journal of Multivariate Analysis},
  volume={202},
  pages={105320},
  year={2024},
  publisher={Elsevier}
}

@article{li2023deep,
  title={Deep feature screening: Feature selection for ultra high-dimensional data via deep neural networks},
  author={Li, Kexuan and Wang, Fangfang and Yang, Lingli and Liu, Ruiqi},
  journal={Neurocomputing},
  volume={538},
  pages={126186},
  year={2023},
  publisher={Elsevier}
}

@article{pearson1895vii,
  title={VII. Note on regression and inheritance in the case of two parents},
  author={Pearson, Karl},
  journal={proceedings of the royal society of London},
  volume={58},
  number={347-352},
  pages={240--242},
  year={1895},
  publisher={The Royal Society London}
}

@article{kendall1938new,
  title={A new measure of rank correlation},
  author={Kendall, Maurice G},
  journal={Biometrika},
  volume={30},
  number={1-2},
  pages={81--93},
  year={1938},
  publisher={Oxford University Press}
}

@article{spearman1961proof,
  title={The proof and measurement of association between two things.},
  author={Spearman, Charles},
  year={1961},
  publisher={Appleton-Century-Crofts}
}

@incollection{hoeffding1994non,
  title={A non-parametric test of independence},
  author={Hoeffding, Wassily},
  booktitle={The Collected Works of Wassily Hoeffding},
  pages={214--226},
  year={1994},
  publisher={Springer}
}

@book{blum1961distribution,
  title={Distribution free tests of independence based on the sample distribution function},
  author={Blum, Julius R and Kiefer, Jack and Rosenblatt, Murray},
  year={1961},
  publisher={Sandia Corporation}
}

@article{szekely2007measuring,
  title={Measuring and testing dependence by correlation of distances},
  author={Sz{\'e}kely, G{\'a}bor J and Rizzo, Maria L and Bakirov, Nail K},
  year={2007}
}

@article{gretton2005kernel,
  title={Kernel methods for measuring independence},
  author={Gretton, Arthur and Herbrich, Ralf and Smola, Alexander and Bousquet, Olivier and Sch{\"o}lkopf, Bernhard},
  year={2005},
  publisher={MIT Press}
}

@article{bergsma2014consistent,
  title={A consistent test of independence based on a sign covariance related to Kendall's tau},
  author={Bergsma, Wicher and Dassios, Angelos},
  journal={Bernoulli},
  pages={1006--1028},
  year={2014},
  publisher={JSTOR}
}

@article{berrett2020conditional,
  title={The conditional permutation test for independence while controlling for confounders},
  author={Berrett, Thomas B and Wang, Yi and Barber, Rina Foygel and Samworth, Richard J},
  journal={Journal of the Royal Statistical Society Series B: Statistical Methodology},
  volume={82},
  number={1},
  pages={175--197},
  year={2020},
  publisher={Oxford University Press}
}

@article{shah2020hardness,
  title={The hardness of conditional independence testing and the generalised covariance measure},
  author={Shah, Rajen D and Peters, Jonas},
  year={2020}
}

@article{szekely2009brownian,
  title={Brownian distance covariance},
  author={Sz{\'e}kely, G{\'a}bor J and Rizzo, Maria L},
  year={2009}
}

@article{pan2020ball,
  title={Ball covariance: A generic measure of dependence in Banach space},
  author={Pan, Wenliang and Wang, Xueqin and Zhang, Heping and Zhu, Hongtu and Zhu, Jin},
  journal={Journal of the American Statistical Association},
  year={2020},
  publisher={Taylor \& Francis}
}

@article{jiang2026robust,
  title={Robust independence test of functional variables via Gupta angle covariance},
  author={Jiang, Yunlu and Yang, Yukun and Jiang, Ruizhe},
  journal={Statistics},
  pages={1--29},
  year={2026},
  publisher={Taylor \& Francis}
}

@article{weihs2018symmetric,
  title={Symmetric rank covariances: a generalized framework for nonparametric measures of dependence},
  author={Weihs, Luca and Drton, Mathias and Meinshausen, Nicolai},
  journal={Biometrika},
  volume={105},
  number={3},
  pages={547--562},
  year={2018},
  publisher={Oxford University Press}
}

@inproceedings{gretton2005measuring,
  title={Measuring statistical dependence with Hilbert-Schmidt norms},
  author={Gretton, Arthur and Bousquet, Olivier and Smola, Alex and Sch{\"o}lkopf, Bernhard},
  booktitle={International conference on algorithmic learning theory},
  pages={63--77},
  year={2005},
  organization={Springer}
}

@article{gao2025studentized,
  title={Studentized tests of independence: Random-lifter approach},
  author={Gao, Zhe and Wang, Roulin and Wang, Xueqin and Zhang, Heping},
  journal={The Annals of Statistics},
  volume={53},
  number={2},
  pages={703--723},
  year={2025},
  publisher={Institute of Mathematical Statistics}
}

@article{deb2023multivariate,
  title={Multivariate rank-based distribution-free nonparametric testing using measure transportation},
  author={Deb, Nabarun and Sen, Bodhisattva},
  journal={Journal of the American Statistical Association},
  volume={118},
  number={541},
  pages={192--207},
  year={2023},
  publisher={Taylor \& Francis}
}

@article{kojadinovic2009tests,
  title={Tests of independence among continuous random vectors based on Cram{\'e}r--von Mises functionals of the empirical copula process},
  author={Kojadinovic, Ivan and Holmes, Mark},
  journal={Journal of Multivariate Analysis},
  volume={100},
  number={6},
  pages={1137--1154},
  year={2009},
  publisher={Elsevier}
}

@article{genest2019testing,
  title={Testing for independence in arbitrary distributions},
  author={Genest, Christian and Ne{\v{s}}lehov{\'a}, Johanna G and R{\'e}millard, Bruno and Murphy, Orla A},
  journal={Biometrika},
  volume={106},
  number={1},
  pages={47--68},
  year={2019},
  publisher={Oxford University Press}
}

@article{szekely2013distance,
  title={The distance correlation t-test of independence in high dimension},
  author={Sz{\'e}kely, G{\'a}bor J and Rizzo, Maria L},
  journal={Journal of Multivariate Analysis},
  volume={117},
  pages={193--213},
  year={2013},
  publisher={Elsevier}
}

@article{gao2021asymptotic,
  title={Asymptotic distributions of high-dimensional distance correlation inference},
  author={Gao, Lan and Fan, Yingying and Lv, Jinchi and Shao, Qi-Man},
  journal={Annals of statistics},
  volume={49},
  number={4},
  pages={1999},
  year={2021}
}

@article{zhu2020distance,
  title={Distance-based and RKHS-based dependence metrics in high dimension},
  author={Zhu, Changbo and Zhang, Xianyang and Yao, Shun and Shao, Xiaofeng},
  journal={The Annals of Statistics},
  volume={48},
  number={6},
  pages={3366--3394},
  year={2020},
  publisher={JSTOR}
}

@article{zhou2024rank,
  title={Rank-based indices for testing independence between two high-dimensional vectors},
  author={Zhou, Yeqing and Xu, Kai and Zhu, Liping and Li, Runze},
  journal={Annals of statistics},
  volume={52},
  number={1},
  pages={184},
  year={2024}
}

@article{wang2026testing,
  title={Testing independence between high-dimensional random vectors using rank-based max-sum tests},
  author={Wang, Hongfei and Liu, Binghui and Feng, Long},
  journal={Scandinavian Journal of Statistics},
  year={2026},
  publisher={Wiley Online Library}
}

@article{jiang2026distribution,
  title={Distribution-free robust independence test for high-dimensional data via semi-Grothendieck’s covariance},
  author={Jiang, Ruizhe and Huang, Xiaowen and Gao, Zhe and Jiang, Yunlu},
  journal={Journal of Multivariate Analysis},
  pages={105655},
  year={2026},
  publisher={Elsevier}
}

@article{yang2015independence,
  title={Independence test for high dimensional data based on regularized canonical correlation coefficients},
  author={Yang, Yanrong and Pan, Guangming},
  year={2015}
}

@article{anderson1952asymptotic,
  title={Asymptotic theory of certain goodness of fit criteria based on stochastic processes},
  author={Anderson, Theodore W and Darling, Donald A},
  journal={The annals of mathematical statistics},
  pages={193--212},
  year={1952},
  publisher={JSTOR}
}

@article{delgado2006consistent,
  title={Consistent tests of conditional moment restrictions},
  author={Delgado, Miguel A and Dom{\'\i}nguez, Manuel A and Lavergne, Pascal},
  journal={Annales d'{\'E}conomie et de Statistique},
  pages={33--67},
  year={2006},
  publisher={JSTOR}
}

@article{yang2010common,
  title={Common SNPs explain a large proportion of the heritability for human height},
  author={Yang, Jian and Benyamin, Beben and McEvoy, Brian P and Gordon, Scott and Henders, Anjali K and Nyholt, Dale R and Madden, Pamela A and Heath, Andrew C and Martin, Nicholas G and Montgomery, Grant W and others},
  journal={Nature genetics},
  volume={42},
  number={7},
  pages={565--569},
  year={2010},
  publisher={Nature Publishing Group US New York}
}

@article{noushmehr2010identification,
  title={Identification of a CpG island methylator phenotype that defines a distinct subgroup of glioma},
  author={Noushmehr, Houtan and Weisenberger, Daniel J and Diefes, Kristin and Phillips, Heidi S and Pujara, Kanan and Berman, Benjamin P and Pan, Fei and Pelloski, Christopher E and Sulman, Erik P and Bhat, Krishna P and others},
  journal={Cancer cell},
  volume={17},
  number={5},
  pages={510--522},
  year={2010},
  publisher={Elsevier}
}

@article{kim2017dna,
  title={DNA methylation: an epigenetic mark of cellular memory},
  author={Kim, Mirang and Costello, Joseph},
  journal={Experimental \& molecular medicine},
  volume={49},
  number={4},
  pages={e322--e322},
  year={2017},
  publisher={Nature Publishing Group}
}

@article{su2018homeobox,
  title={Homeobox oncogene activation by pan-cancer DNA hypermethylation},
  author={Su, Jianzhong and Huang, Yung-Hsin and Cui, Xiaodong and Wang, Xinyu and Zhang, Xiaotian and Lei, Yong and Xu, Jianfeng and Lin, Xueqiu and Chen, Kaifu and Lv, Jie and others},
  journal={Genome biology},
  volume={19},
  number={1},
  pages={108},
  year={2018},
  publisher={Springer}
}

@article{chen2014correlation,
  title={Correlation between DNA methylation and gene expression in the brains of patients with bipolar disorder and schizophrenia},
  author={Chen, Chao and Zhang, Chunling and Cheng, Lijun and Reilly, James L and Bishop, Jeffrey R and Sweeney, John A and Chen, Hua-Yun and Gershon, Elliot S and Liu, Chunyu},
  journal={Bipolar disorders},
  volume={16},
  number={8},
  pages={790--799},
  year={2014},
  publisher={Wiley Online Library}
}

@article{sproul2012tissue,
  title={Tissue of origin determines cancer-associated CpG island promoter hypermethylation patterns},
  author={Sproul, Duncan and Kitchen, Robert R and Nestor, Colm E and Dixon, J Michael and Sims, Andrew H and Harrison, David J and Ramsahoye, Bernard H and Meehan, Richard R},
  journal={Genome biology},
  volume={13},
  number={10},
  pages={R84},
  year={2012},
  publisher={Springer}
}

@article{schubeler2015function,
  title={Function and information content of DNA methylation},
  author={Sch{\"u}beler, Dirk},
  journal={Nature},
  volume={517},
  number={7534},
  pages={321--326},
  year={2015},
  publisher={Nature Publishing Group UK London}
}

@article{jones2012functions,
  title={Functions of DNA methylation: islands, start sites, gene bodies and beyond},
  author={Jones, Peter A},
  journal={Nature reviews genetics},
  volume={13},
  number={7},
  pages={484--492},
  year={2012},
  publisher={Nature Publishing Group UK London}
}

@article{rauschenberger2016testing,
  title={Testing for association between RNA-Seq and high-dimensional data},
  author={Rauschenberger, Armin and Jonker, Marianne A and van de Wiel, Mark A and Menezes, Ren{\'e}e X},
  journal={BMC bioinformatics},
  volume={17},
  number={1},
  pages={118},
  year={2016},
  publisher={Springer}
}
\spacingset{1.8}

\newpage
\appendix
 \begin{center}
    {\LARGE\bf Paired Sample Tests for High-dimensional Uncorrelatedness via Random Integration}

    {\Large\bf Shiyao Huang and Xiaojun Song\\
Peking University}

    {\large\bf Online Supplementary Material}
\end{center}

\section{Preliminary lemmas}
\label{Preliminary lemmas}
For notational brevity, for each $i,j,t=1,2,\ldots,n$ and $i\neq j$, define $\xi_{i}={Z}_i^\prime
\Lambda
_3{Z}_i$, $\xi^\star_i=\xi_i-Q$, $\eta_{ij}={Z}_i^\prime{\Lambda}_1{Z}_j{Z}_j^\prime{\Lambda}_2{Z}_i$, and $\eta_{ij}^\star=\eta_{ij}-\xi_i-\xi_j+Q$. Then, define the conditional expectation $h_i=\mathbb{E}(\eta_{ij}\xi_{j}|{Z}_i)$, $d_{it}=\mathbb{E}(\eta_{ij}\eta_{tj}|{Z}_i,{Z}_t)$ and $\Psi_{it}=\mathbb{E}(\eta_{ij}^\star\eta_{tj}^\star|{Z}_i,{Z}_t)$.

\begin{lemma}
\label{Lemma1}
Let ${M}_1$ and ${M}_2$ denote any $d\times d$ matrices and ${M}$ denote a $d$-dimensional nonnegative and symmetric matrix. Then, under Assumption~3.1, the following statements hold.
\begin{align*}
&(1) \ \mathbb{E}({Z}_i^\prime{M}_1{Z}_i{Z}_i^\prime{M}_2{Z}_i)=\textnormal{tr}({M}_1)\textnormal{tr}({M}_2)+\textnormal{tr}({M}_1{M}_2)+\textnormal{tr}({M}_1{M}_2^\prime)+\Delta \textnormal{tr}({M}_1\circ {M}_2).\\
&(2) \ \mathbb{E}\big[\big({Z}_i^\prime{M}{Z}_i-\textnormal{tr}({M})\big)^4\big]\leqslant K\textnormal{tr}^2({M}^2).\\
&(3) \ \mathbb{E}[({Z}_i^\prime{M}{Z}_i)^2]\leqslant K\textnormal{tr}^2({M}),
\end{align*}
where the constant $K$ only depends on ${Z}$.
\end{lemma}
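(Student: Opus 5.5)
The plan is to expand quadratic forms in the coordinates of $Z_i$ and exploit the pseudo-independence condition \eqref{A2(3)} of \autoref{A1}(3). This condition lets every mixed moment of order at most $8$ with distinct indices factorize. Write $Z_i=(z_1,\ldots,z_d)'$, $M_1=(m^{(1)}_{ab})$ and $M_2=(m^{(2)}_{ab})$.

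For part (1), we have
\[
Z_i'M_1Z_iZ_i'M_2Z_i=\sum_{a,b,c,e}m^{(1)}_{ab}m^{(2)}_{ce}z_az_bz_cz_e .
\]
Since $\mathbb{E}z_a=0$ and $\mathbb{E}z_a^2=1$, factorization shows that $\mathbb{E}(z_az_bz_cz_e)$ is nonzero only in the following configurations:
\begin{enumerate}[(i)]
\item $a=b\neq c=e$, contributing $1$;
\item $a=c\neq b=e$, contributing $1$;
\item $a=e\neq b=c$, contributing $1$;
\item $a=b=c=e$, contributing $3+\Delta$.
\end{enumerate}
Summing the three pairings over all indices, with the diagonal included, gives
\[
\textnormal{tr}(M_1)\textnormal{tr}(M_2)+\textnormal{tr}(M_1M_2')+\textnormal{tr}(M_1M_2).
\]
This overcounts the diagonal $a=b=c=e$ three times, so the diagonal contributes $3+\Delta-3=\Delta$ times $\sum_a m^{(1)}_{aa}m^{(2)}_{aa}=\textnormal{tr}(M_1\circ M_2)$. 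This yields (1) directly.

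For part (2), I would write $Z_i'MZ_i-\textnormal{tr}(M)=\sum_a m_{aa}(z_a^2-1)+2\sum_{a<b}m_{ab}z_az_b=:S_1+S_2$ and bound $\mathbb{E}S_1^4$ and $\mathbb{E}S_2^4$ separately, using $(x+y)^4\leqslant 8(x^4+y^4)$.

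\textbf{Term $S_1$.} Its fourth moment involves eighth-order moments. These are allowed because $\sum\zeta_j\leqslant 8$ and each $z_a$ has a finite $8$th moment. The variables $z_a^2-1$ are centered and pairwise factorizing, so only index patterns where every index appears at least twice survive. This gives
\[
\mathbb{E}S_1^4\leqslant K\Big(\sum_a m_{aa}^2\Big)^2\leqslant K\,\textnormal{tr}^2(M^2).
\]

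\textbf{Term $S_2$.} Expanding the fourth power gives a sum over four pairs $(a_k,b_k)$. A term survives only if every index appears an even number of times, or appears with enough multiplicity to be nonzero; in any case the product factorizes with bounded moments. The surviving patterns are of two kinds. The "paired" patterns, where the edges coincide in pairs, give $(\sum_{a,b}m_{ab}^2)^2=\textnormal{tr}^2(M^2)$. The "cycle" patterns, where the four edges form a 4-cycle, give $\textnormal{tr}(M^4)\leqslant\textnormal{tr}^2(M^2)$. Configurations with a repeated edge, such as $m_{ab}^4$, are also bounded by $\textnormal{tr}^2(M^2)$. Hence $\mathbb{E}S_2^4\leqslant K\,\textnormal{tr}^2(M^2)$, with $K$ depending only on moments of $Z$ up to order $8$ through $\Delta$ and $\mathbb{E}z^8$.

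\textbf{Main obstacle.} The main obstacle is this combinatorial bookkeeping for $S_2$: enumerating all index-coincidence graphs on four edges and checking that each is dominated by $\textnormal{tr}^2(M^2)$. I would organize it by the multigraph formed by the four edges $\{a_k,b_k\}$. Each vertex must have even degree, or degree at least $2$, so the graph is a union of doubled edges or a 4-cycle, up to degenerate merges. Each type is then bounded via Cauchy--Schwarz by $\|M\|_F^4$.

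For part (3), since $M$ is nonnegative definite, we have $\textnormal{tr}(M^2)\leqslant\textnormal{tr}^2(M)$ and $\textnormal{tr}(M\circ M)\leqslant\textnormal{tr}^2(M)$. Applying (1) with $M_1=M_2=M$ then gives
\[
\mathbb{E}[(Z_i'MZ_i)^2]\leqslant(3+|\Delta|)\,\textnormal{tr}^2(M),
\]
which is (3).
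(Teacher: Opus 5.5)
Part (1) is the same coordinate expansion the paper uses. For (2) and (3) you take a different route. The paper does not prove (2); it cites Proposition A.1(iii) of Chen et al.\ (2010). It then derives (3) from (2) by Jensen's inequality: $\bigl(\mathbb{E}[(Z_i'MZ_i-\textnormal{tr}(M))^2]\bigr)^2\le K\,\textnormal{tr}^2(M^2)$, hence $\mathbb{E}[(Z_i'MZ_i)^2]\le\textnormal{tr}^2(M)+\sqrt{K}\,\textnormal{tr}(M^2)\le(1+\sqrt{K})\,\textnormal{tr}^2(M)$.

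You instead prove (2) directly through the split $S_1+S_2$. You obtain (3) from the exact identity in (1) with $M_1=M_2=M$, namely $\textnormal{tr}^2(M)+2\,\textnormal{tr}(M^2)+\Delta\,\textnormal{tr}(M\circ M)$, together with $m_{aa}\ge 0$. Your argument for (3) is the more economical one. It needs only fourth moments, no longer depends on (2), and gives the explicit constant $3+|\Delta|$, which is all the later uses of (3) require. Your argument for (2) makes the lemma self-contained. It also shows that eighth moments enter only through $\mathbb{E}(z_a^2-1)^4$ in $S_1$. For $K$ to be free of $d$ you need $\sup_a\mathbb{E}z_a^8<\infty$, which is implicit in the paper's claim that $K$ depends only on $Z$.

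One correction to your bookkeeping for $S_2$. The four-edge multigraphs with no loops and all degrees at least $2$ are:
\begin{itemize}
\item a quadruple edge;
\item two doubled edges, disjoint or sharing a vertex;
\item a $4$-cycle;
\item a triangle on $\{u,v,w\}$ with the edge $\{v,w\}$ doubled.
\end{itemize}
The last pattern is neither a union of doubled edges nor a merge of a $4$-cycle. It does not vanish, because Assumption~3.1 permits $\mathbb{E}z^3\neq 0$ (skewed $Z$ is allowed). Its contribution is a bounded multiple of $(\mathbb{E}z^3)^2\sum m_{uv}m_{uw}m_{vw}^2$. It is harmless, because $\sum_{v,w}m_{vw}^2\sum_u|m_{uv}||m_{uw}|\le\textnormal{tr}(M^2)\max_v(M^2)_{vv}\le\textnormal{tr}^2(M^2)$ and $|\mathbb{E}z^3|\le(3+\Delta)^{3/4}$. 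With this case added, your proof of (2) goes through.
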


\begin{proof}[\textbf{Proof}] 
 For (1), Let ${M}_1=(m_{1,ij})_{d\times d}$ and ${M}_2=(m_{2,ij})_{d\times d}$, it then follows that
\begin{align}
\mathbb{E}({Z}_i^\prime{M}_1{Z}_i{Z}_i^\prime{M}_2{Z}_i)=\sum_{j_1,l_1=1}^{d}\sum_{j_2,l_2=1}^{d}m_{1,j_1l_1}m_{2,j_2l_2}\mathbb{E}(Z_{ij_1}Z_{il_1}Z_{ij_2}Z_{il_2}).
\label{Lemma1 1}
\end{align}
Note that under Assumption~3.1, 
\begin{align}
\begin{split}
\mathbb{E}(Z_{ij_1}Z_{il_1}Z_{ij_2}Z_{il_2})=
\begin{cases}
1, \renewcommand{\arraystretch}{0.7}\begin{array}{l l}
    &\text{if $(j_1=l_1)\neq (j_2=l_2)$} \\
    &\text{or $(j_1=j_2)\neq (l_1=l_2)$}\\
    &\text{or $(j_1=l_2)\neq (j_2=l_1)$}
\end{array} ,\\
3+\Delta, \textnormal{if } j_1=l_1=j_2=l_2,\\
0,\textnormal{otherwise}.
\end{cases}.
\end{split}
\label{Lemma1 2}
\end{align}
Gathering (\ref{Lemma1 1}) and (\ref{Lemma1 2}), the results directly follow:
\begin{align*}
\mathbb{E}({Z}_i^\prime{M}_1{Z}_i{Z}_i^\prime{M}_2{Z}_i)=&\sum_{j_1,l_1=1}^{d}\sum_{j_2,l_2=1}^{d}m_{1,j_1l_1}m_{2,j_2l_2}\mathbb{E}(Z_{ij_1}Z_{il_1}Z_{ij_2}Z_{il_2})\\
=&\sum_{j,l=1}^{d}m_{1,jj}m_{2,ll}+\sum_{j,l=1}^{d}m_{1,jl}m_{2,jl}\sum_{j,l=1}^{d}m_{1,jl}m_{2,lj}+\Delta\sum_{j=1}^{d}m_{1,jj}m_{2,jj}\\
=&\textnormal{tr}({M}_1)\textnormal{tr}({M}_2)+\textnormal{tr}({M}_1{M}_2)+\textnormal{tr}({M}_1{M}_2^\prime)+\Delta \textnormal{tr}({M}_1\circ {M}_2).
\end{align*}

(2) can be found in Proposition A.1 (iii) in Chen et al. (2010), so we omit the detailed proof. For (3), by Jensen's inequality and (2), we have
\begin{align*}
\Big(\mathbb{E}\big[\big({Z}_i^\prime{M}{Z}_i-\textnormal{tr}({M})\big)^2\big]\Big)^2\leqslant \mathbb{E}\big[\big({Z}_i^\prime{M}{Z}_i-\textnormal{tr}({M})\big)^4\big]\leqslant K\textnormal{tr}^2({M}^2).
\end{align*}
Thus, 
\begin{align*}
\sqrt{K}\textnormal{tr}({M}^2)\geqslant \mathbb{E}\big[\big({Z}_i^\prime{M}{Z}_i-\textnormal{tr}({M})\big)^2\big] =\mathbb{E}[({Z}_i^\prime{M}{Z}_i)^2]-\textnormal{tr}^2({M}),
\end{align*}
which implies that
\begin{align}
\mathbb{E}[({Z}_i^\prime{M}{Z}_i)^2]\leqslant  \textnormal{tr}^2({M})+\sqrt{K}\textnormal{tr}({M}^2).
\label{111}
\end{align}
Let $\lambda^{(m)}_i$, $i=1,2,\ldots,d$ denote the eigenvalues of ${M}$. Since ${M}$ is symmetric and nonnegative, we get $\lambda^{(m)}_i\geqslant 0$, $i=1,2,\ldots,d$. Then, using the singular value decomposition for nonnegative and symmetric matrices, we have
\begin{align}
\begin{split}
\textnormal{tr}({M}^2)=\sum_{i=1}^{d}(\lambda^{(m)}_i)^2\leqslant\left(\sum_{i=1}^{d}\lambda^{(m)}_i\right)^2=\textnormal{tr}^2({M}).
\end{split}
\label{222}
\end{align}
Gathering (\ref{111})--(\ref{222}) yields that
\begin{align*}
\mathbb{E}[({Z}_i^\prime{M}{Z}_i)^2]\leqslant  \textnormal{tr}^2({M})+\sqrt{K}\textnormal{tr}({M}^2)\leqslant (\sqrt{K}+1)\textnormal{tr}^2({M}),
\end{align*}
which completes our proof.
\end{proof}

\begin{lemma}
\label{Lemma2}
The following statements hold.
\begin{flalign*}
&(1)\ \max\{\textnormal{tr}({\Lambda}_3^2),\textnormal{tr}({\Lambda}_3\circ{\Lambda}_3)\}\leqslant \textnormal{tr}({\Lambda}_3{\Lambda}_3^\prime).&&\\
&(2)\ \textnormal{tr}(({\Lambda}_1\circ {\Lambda}_1)({\Lambda}_2\circ {\Lambda}_2))\leqslant\textnormal{tr}({\Lambda}_1^2\circ {\Lambda}_2^2)\leqslant \sqrt{\textnormal{tr}({\Lambda}_1^4)\textnormal{tr}({\Lambda}_2^4)}.\\
&(3)\ |\textnormal{tr}({\Lambda}_3^4)|\leqslant \textnormal{tr}({\Lambda}_3^2{\Lambda}_3^{\prime2})\leqslant \textnormal{tr}(({\Lambda}_3{\Lambda}_3^\prime)^2)\leqslant \textnormal{tr}^2({\Lambda}_3{\Lambda}_3^\prime).
\end{flalign*}
\end{lemma}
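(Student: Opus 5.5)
All three parts come from the Cauchy--Schwarz inequality for the Frobenius inner product $\langle {M}_1,{M}_2\rangle_F=\textnormal{tr}({M}_1^\prime{M}_2)$, plus two elementary facts:
\begin{itemize}
\item the diagonal of a square matrix carries no more squared mass than the whole matrix;
\item ${\Lambda}_1={\Gamma}_1^\prime{W}{\Gamma}_1$ and ${\Lambda}_2={\Gamma}_2^\prime{W}{\Gamma}_2$ are symmetric and nonnegative definite, because ${W}={B}+{a}{a}^\prime$ is.
\end{itemize}
Throughout, write ${\Lambda}_1=(\alpha_{ij})$, ${\Lambda}_2=(\beta_{ij})$ and ${A}={\Lambda}_3=(c_{ij})$.

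For (1), we have $\textnormal{tr}({A}^2)=\langle {A}^\prime,{A}\rangle_F$. Cauchy--Schwarz gives $\langle {A}^\prime,{A}\rangle_F\le \|{A}^\prime\|_F\|{A}\|_F=\textnormal{tr}({A}{A}^\prime)$. For the Hadamard term, $\textnormal{tr}({A}\circ{A})=\sum_i c_{ii}^2$. This is at most $\sum_{i,j}c_{ij}^2=\textnormal{tr}({A}{A}^\prime)$.

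For (2), expand both traces entrywise, using the symmetry of ${\Lambda}_1$ and ${\Lambda}_2$. First,
\begin{align*}
\textnormal{tr}(({\Lambda}_1\circ{\Lambda}_1)({\Lambda}_2\circ{\Lambda}_2))=\sum_{i,j}\alpha_{ij}^2\beta_{ij}^2 .
\end{align*}
Second,
\begin{align*}
\textnormal{tr}({\Lambda}_1^2\circ{\Lambda}_2^2)=\sum_i({\Lambda}_1^2)_{ii}({\Lambda}_2^2)_{ii}=\sum_i\Big(\sum_j\alpha_{ij}^2\Big)\Big(\sum_k\beta_{ik}^2\Big).
\end{align*}
This double sum contains every term $\alpha_{ij}^2\beta_{ij}^2$ (take $k=j$) plus further nonnegative terms, which gives the first inequality. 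For the second inequality, apply Cauchy--Schwarz over $i$:
\begin{align*}
\sum_i({\Lambda}_1^2)_{ii}({\Lambda}_2^2)_{ii}\le\Big(\sum_i({\Lambda}_1^2)_{ii}^2\Big)^{1/2}\Big(\sum_i({\Lambda}_2^2)_{ii}^2\Big)^{1/2}.
\end{align*}
Since ${\Lambda}_1^2$ is symmetric, $\sum_i({\Lambda}_1^2)_{ii}^2\le\|{\Lambda}_1^2\|_F^2=\textnormal{tr}({\Lambda}_1^4)$, and likewise for ${\Lambda}_2$.

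For (3), the first step is $|\textnormal{tr}({A}^4)|=|\langle ({A}^2)^\prime,{A}^2\rangle_F|\le\|{A}^2\|_F^2=\textnormal{tr}({A}^2{A}^{\prime2})$. The second step is the only place where some care is needed: the cyclic shift must pair two symmetric matrices. Moving the last factor to the front gives $\textnormal{tr}({A}{A}{A}^\prime{A}^\prime)=\textnormal{tr}(({A}^\prime{A})({A}{A}^\prime))$. Cauchy--Schwarz then bounds this by
\begin{align*}
\textnormal{tr}^{1/2}(({A}^\prime{A})^2)\,\textnormal{tr}^{1/2}(({A}{A}^\prime)^2)=\textnormal{tr}(({A}{A}^\prime)^2),
\end{align*}
because ${A}^\prime{A}$ and ${A}{A}^\prime$ have the same nonzero eigenvalues. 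Finally, ${A}{A}^\prime$ is nonnegative definite, so with eigenvalues $\lambda_i\ge 0$ we get $\textnormal{tr}(({A}{A}^\prime)^2)=\sum_i\lambda_i^2\le(\sum_i\lambda_i)^2=\textnormal{tr}^2({A}{A}^\prime)$. This is the same argument as in (\ref{222}).
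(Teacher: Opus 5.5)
Your proposal is correct and is essentially the paper's argument. The paper proves one entrywise Cauchy--Schwarz bound, $\max\{|\textnormal{tr}(M_1\circ M_2)|,|\textnormal{tr}(M_1M_2)|\}\leqslant\sqrt{\textnormal{tr}(M_1M_1^\prime)\textnormal{tr}(M_2M_2^\prime)}$, which is your Frobenius inner-product bound combined with the diagonal-subsum observation. It applies this bound to $\Lambda_3$ in (1), to $\Lambda_1^2,\Lambda_2^2$ in (2) after the same comparison $\sum_{j,k}\lambda_{1,jk}^2\lambda_{2,jk}^2\leqslant\sum_{j,k,l}\lambda_{1,jl}^2\lambda_{2,jk}^2$, and to the pair $\Lambda_3\Lambda_3^\prime$, $\Lambda_3^\prime\Lambda_3$ in (3), finishing with (\ref{222}). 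The only cosmetic difference is order: you apply Cauchy--Schwarz over the diagonal first and then bound by the Frobenius norm, whereas the paper first enlarges to $\sum_{i,j}|m_{1,ij}m_{2,ji}|$ and then applies Cauchy--Schwarz.
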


\begin{proof}[\textbf{Proof}]
Let ${M}_1=(m_{1,ij})_{k\times k}$ and ${M}_2=(m_{2,ij})_{k\times k}$ denote any $k\times k$ matrices. Note that 
\begin{align*}
&|\textnormal{tr}({M}_1\circ {M}_2)|=\left|\sum_{i=1}^km_{1,ii}m_{2,ii}\right|\leqslant \sum_{i=1}^k|m_{1,ii}m_{2,ii}|\leqslant \sum_{i,j=1}^k|m_{1,ij}m_{2,ji}|,\\
&|\textnormal{tr}({M}_1{M}_2)|=\left|\sum_{i,j=1}^km_{1,ij}m_{2,ji}\right|\leqslant \sum_{i,j=1}^k|m_{1,ij}m_{2,ji}|.
\end{align*}
Then, using the Cauchy--Schwarz inequality, we have
\begin{align}
\begin{split}
\max\{|\textnormal{tr}({M}_1\circ {M}_2)|,|\textnormal{tr}({M}_1{M}_2)|\}\leqslant &\sum_{i,j=1}^k|m_{1,ij}m_{2,ji}|\\
\leqslant &\sqrt{\left(\sum_{i,j=1}^km_{1,ij}^2\right)\left(\sum_{i,j=1}^km_{2,ij}^2\right)}\\
=&\sqrt{\textnormal{tr}({M}_1{M}_1^\prime)\textnormal{tr}({M}_2{M}_2^\prime)}.
\end{split}
\label{Cauchy inequality}
\end{align}
By (\ref{Cauchy inequality}), (1) directly follows.

For (2), let $\lambda_{1,ij}$ and $\lambda_{2,ij}$ denote the elements of ${\Lambda}_1$ and ${\Lambda}_2$, respectively. By (\ref{Cauchy inequality}), we have
\begin{align*}
\textnormal{tr}(({\Lambda}_1\circ {\Lambda}_1)({\Lambda}_2\circ {\Lambda}_2))=\sum_{j,k=1}^{d}\lambda_{1,jk}^2\lambda_{2,jk}^2\leqslant \sum_{j,k,l=1}^{d}\lambda_{1,jl}^2\lambda_{2,jk}^2=\textnormal{tr}({\Lambda}_1^2\circ {\Lambda}_2^2)\leqslant \sqrt{\textnormal{tr}({\Lambda}_1^4)\textnormal{tr}({\Lambda}_2^4)}.
\end{align*}

For (3), again, using (\ref{Cauchy inequality}), we obtain
\begin{align*}
|\textnormal{tr}({\Lambda}_3^4)|\leqslant\textnormal{tr}({\Lambda}_3^2{\Lambda}_3^{\prime 2})=\textnormal{tr}({\Lambda}_3{\Lambda}_3^\prime{\Lambda}_3^\prime{\Lambda}_3)\leqslant\sqrt{\textnormal{tr}(({\Lambda}_3{\Lambda}_3^\prime)^2)\textnormal{tr}(({\Lambda}_3^\prime{\Lambda}_3)^2)}=\textnormal{tr}(({\Lambda}_3{\Lambda}_3^\prime)^2).
\end{align*}
Note that ${\Lambda}_3{\Lambda}_3^\prime$ is symmetric and nonnegative. Therefore, applying the inequality (\ref{222}) to ${\Lambda}_3{\Lambda}_3^\prime$, we get $\textnormal{tr}(({\Lambda}_3{\Lambda}_3^\prime)^2)\leqslant \textnormal{tr}^2({\Lambda}_3{\Lambda}_3^\prime)$, which then completes our proof.
\end{proof}

\begin{lemma}
\label{Lemma3}
Under Assumption~3.1, for $i,t=1,2,\ldots,n$, the following statements hold.
\begin{align*}
&(1)\ d_{it}={Z}_i^\prime{\Lambda}_3{Z}_i{Z}_t^\prime{\Lambda}_3{Z}_t+{Z}_i^\prime{\Lambda}_3{Z}_t{Z}_t^\prime{\Lambda}_3{Z}_i+{Z}_i^\prime{\Lambda}_1^2{Z}_t{Z}_t^\prime{\Lambda}_2^2{Z}_i+\Delta\textnormal{tr}({\Lambda}_1{Z}_i{Z}_i^\prime{\Lambda}_2\circ {\Lambda}_1{Z}_t{Z}_t^\prime{\Lambda}_2).\\
&(2)\ h_i=\textnormal{tr}({\Lambda}_3){Z}_i^\prime{\Lambda}_3{Z}_i+{Z}_i^\prime{\Lambda}_3^2{Z}_i+{Z}_i^\prime{\Lambda}_1^2{\Lambda}_2^2{Z}_i+\Delta\textnormal{tr}({\Lambda}_1{Z}_i{Z}_i^\prime{\Lambda}_2\circ{\Lambda}_3).\\
&(3)\ \Psi_{it}=d_{it}-h_i-h_t-\xi_i^\star\xi_{t}^\star+\mathbb{E}(\xi_j^2).
\end{align*}
\end{lemma}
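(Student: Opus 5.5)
The plan is to compute all three identities by conditioning on the fixed vectors and applying Lemma~\ref{Lemma1}(1) to the single independent vector ${Z}_j$.

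\textbf{Part (1).} Write $\eta_{ij}={Z}_j^\prime{\Lambda}_2{Z}_i{Z}_i^\prime{\Lambda}_1{Z}_j$, using symmetry of ${\Lambda}_1,{\Lambda}_2$. This is a quadratic form ${Z}_j^\prime{M}_i{Z}_j$ with ${M}_i={\Lambda}_2{Z}_i{Z}_i^\prime{\Lambda}_1$. Likewise $\eta_{tj}={Z}_j^\prime{M}_t{Z}_j$. Conditional on $({Z}_i,{Z}_t)$, Lemma~\ref{Lemma1}(1) applied to ${Z}_j$ gives
\[
d_{it}=\textnormal{tr}({M}_i)\textnormal{tr}({M}_t)+\textnormal{tr}({M}_i{M}_t)+\textnormal{tr}({M}_i{M}_t^\prime)+\Delta\,\textnormal{tr}({M}_i\circ{M}_t).
\]
Each term is then evaluated by cyclicity.
\begin{itemize}
\item[] First, $\textnormal{tr}({M}_i)={Z}_i^\prime{\Lambda}_1{\Lambda}_2{Z}_i={Z}_i^\prime{\Lambda}_3{Z}_i$, which gives the first summand.
\item[] Second, $\textnormal{tr}({M}_i{M}_t)={Z}_i^\prime{\Lambda}_1{\Lambda}_2{Z}_t\,{Z}_t^\prime{\Lambda}_1{\Lambda}_2{Z}_i={Z}_i^\prime{\Lambda}_3{Z}_t{Z}_t^\prime{\Lambda}_3{Z}_i$.
\item[] Third, ${M}_t^\prime={\Lambda}_1{Z}_t{Z}_t^\prime{\Lambda}_2$, so $\textnormal{tr}({M}_i{M}_t^\prime)={Z}_i^\prime{\Lambda}_1^2{Z}_t\,{Z}_t^\prime{\Lambda}_2^2{Z}_i$.
\item[] Fourth, the Hadamard term depends only on diagonals. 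Since $\textnormal{tr}(A\circ B)=\textnormal{tr}(A^\prime\circ B^\prime)$, it can be rewritten as $\Delta\textnormal{tr}({\Lambda}_1{Z}_i{Z}_i^\prime{\Lambda}_2\circ{\Lambda}_1{Z}_t{Z}_t^\prime{\Lambda}_2)$.
\end{itemize}

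\textbf{Part (2).} Here $h_i=\mathbb{E}(\eta_{ij}\xi_j\mid{Z}_i)$ with $\xi_j={Z}_j^\prime{\Lambda}_3{Z}_j$. Apply Lemma~\ref{Lemma1}(1) with ${M}_1={M}_i$ and ${M}_2={\Lambda}_3$. The terms come out as follows.
\begin{itemize}
\item[] The product of traces is $\textnormal{tr}({\Lambda}_3){Z}_i^\prime{\Lambda}_3{Z}_i$.
\item[] $\textnormal{tr}({M}_i{\Lambda}_3)={Z}_i^\prime{\Lambda}_1{\Lambda}_1{\Lambda}_2{\Lambda}_2{Z}_i={Z}_i^\prime{\Lambda}_1^2{\Lambda}_2^2{Z}_i$.
\item[] ${\Lambda}_3^\prime={\Lambda}_2{\Lambda}_1$, so $\textnormal{tr}({M}_i{\Lambda}_3^\prime)={Z}_i^\prime{\Lambda}_1{\Lambda}_2{\Lambda}_1{\Lambda}_2{Z}_i={Z}_i^\prime{\Lambda}_3^2{Z}_i$.
\item[] The Hadamard term is again handled by transposing both factors.
\end{itemize}

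\textbf{Part (3).} Expand $\eta_{ij}^\star\eta_{tj}^\star=(\eta_{ij}-\xi_j-\xi_i^\star)(\eta_{tj}-\xi_j-\xi_t^\star)$, using $\eta_{ij}^\star=\eta_{ij}-\xi_j-\xi_i+Q=\eta_{ij}-\xi_j-\xi_i^\star$. Then take conditional expectations term by term, using three facts:
\begin{itemize}
\item[] $\mathbb{E}(\eta_{ij}\mid{Z}_i)=\textnormal{tr}({M}_i)=\xi_i$, so $\mathbb{E}(\eta_{ij}-\xi_j\mid{Z}_i)=\xi_i-Q=\xi_i^\star$;
\item[] $\mathbb{E}(\eta_{ij}\xi_j\mid{Z}_i)=h_i$;
\item[] $\mathbb{E}(\xi_j)=Q$, because $\textnormal{tr}({\Lambda}_3)=\textnormal{tr}({W}{\Sigma}{W}{\Sigma}^\prime)=Q$ by Theorem~\ref{Th2.1}.
\end{itemize}
Collecting the terms gives
\[
\Psi_{it}=d_{it}-h_i-h_t+\mathbb{E}\xi_j^2-\xi_i^\star\xi_t^\star-\xi_t^\star\xi_i^\star+\xi_i^\star\xi_t^\star ,
\]
which is exactly the claimed expression.

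\textbf{Main obstacle.} The step most likely to go wrong is the bookkeeping in (1)--(2): applying Lemma~\ref{Lemma1}(1) to non-symmetric ${M}_i$, and matching $\textnormal{tr}({M}_1{M}_2)$ versus $\textnormal{tr}({M}_1{M}_2^\prime)$ to the correct ${\Lambda}$ products. The Hadamard term must also be reconciled with its transposed form.
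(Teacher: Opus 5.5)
Your proposal is correct and follows essentially the same route as the paper. Conditional on the fixed vectors, you apply Lemma~\ref{Lemma1}(1) to the independent $Z_j$ and simplify by cyclicity, using the fact that $\textnormal{tr}(A\circ B)$ depends only on diagonals; your choice of $\Lambda_2Z_iZ_i'\Lambda_1$ instead of the paper's $\Lambda_1Z_iZ_i'\Lambda_2$ changes nothing for a quadratic form. In (3) the paper groups $\eta_{ij}^\star=(\eta_{ij}-\xi_i)-(\xi_j-Q)$ rather than your $(\eta_{ij}-\xi_j)-\xi_i^\star$, which is only cosmetic, and in (2) transposing both Hadamard factors actually produces $\Lambda_3'$, which is harmless because $\Lambda_3$ and $\Lambda_3'$ have the same diagonal.
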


\begin{proof}[\textbf{Proof}]
By applying \autoref{Lemma1}(1) to $d_{it}$ and $h_i$, we get
\begin{align*}
d_{it}=&\mathbb{E}(\eta_{ij}\eta_{tj}|{Z}_i,{Z}_t)\\
=&\mathbb{E}({Z}_j^\prime{\Lambda}_1{Z}_i{Z}_i^\prime{\Lambda}_2{Z}_j{Z}_j^\prime{\Lambda}_1{Z}_t{Z}_t^\prime{\Lambda}_2{Z}_j|{Z}_i,{Z}_t)\\
=&\textnormal{tr}({\Lambda}_1{Z}_i{Z}_i^\prime{\Lambda}_2)\textnormal{tr}({\Lambda}_1{Z}_t{Z}_t^\prime{\Lambda}_2)+\textnormal{tr}({\Lambda}_1{Z}_i{Z}_i^\prime{\Lambda}_2{\Lambda}_1{Z}_t{Z}_t^\prime{\Lambda}_2)+\textnormal{tr}({\Lambda}_1{Z}_i{Z}_i^\prime{\Lambda}_2{\Lambda}_2{Z}_t{Z}_t^\prime{\Lambda}_1)\\
&+\Delta\textnormal{tr}({\Lambda}_1{Z}_i{Z}_i^\prime{\Lambda}_2\circ {\Lambda}_1{Z}_t{Z}_t^\prime{\Lambda}_2)\\
=&{Z}_i^\prime{\Lambda}_3{Z}_i{Z}_t^\prime{\Lambda}_3{Z}_t+{Z}_i^\prime{\Lambda}_3{Z}_t{Z}_t^\prime{\Lambda}_3{Z}_i+{Z}_i^\prime{\Lambda}_1^2{Z}_t{Z}_t^\prime{\Lambda}_2^2{Z}_i+\Delta\textnormal{tr}({\Lambda}_1{Z}_i{Z}_i^\prime{\Lambda}_2\circ {\Lambda}_1{Z}_t{Z}_t^\prime{\Lambda}_2)
\end{align*}
and 
\begin{align*}
h_i=&\mathbb{E}(\eta_{ij}\xi_j|{Z}_i)\\
=&\mathbb{E}({Z}_j^\prime{\Lambda}_1{Z}_i{Z}_i^\prime{\Lambda}_2{Z}_j{Z}_j^\prime{\Lambda}_3{Z}_j|{Z}_i)\\
=&\textnormal{tr}({\Lambda}_1{Z}_i{Z}_i^\prime{\Lambda}_2)\textnormal{tr}({\Lambda}_3)+\textnormal{tr}({\Lambda}_1{Z}_i{Z}_i^\prime{\Lambda}_2{\Lambda}_3)+\textnormal{tr}({\Lambda}_1{Z}_i{Z}_i^\prime{\Lambda}_2{\Lambda}_3^\prime)+\Delta\textnormal{tr}({\Lambda}_1{Z}_i{Z}_i^\prime{\Lambda}_2\circ{\Lambda}_3)\\
=&\textnormal{tr}({\Lambda}_3){Z}_i^\prime{\Lambda}_3{Z}_i+{Z}_i^\prime{\Lambda}_3^2{Z}_i+{Z}_i^\prime{\Lambda}_1^2{\Lambda}_2^2{Z}_i+\Delta\textnormal{tr}({\Lambda}_1{Z}_i{Z}_i^\prime{\Lambda}_2\circ{\Lambda}_3).
\end{align*}   
Finally, by straightforward calculations, we obtain
\begin{align*}
\Psi_{it}=&\mathbb{E}(\eta_{ij}^\star\eta_{tj}^\star|{Z}_i,{Z}_t)\\
=&\mathbb{E}\big[\big((\eta_{ij}-\xi_i)-(\xi_j-Q)\big)\big((\eta_{tj}-\xi_t)-(\xi_j-Q)\big)|{Z}_i,{Z}_t\big]\\
=&\mathbb{E}\big[(\eta_{ij}-\xi_i)(\eta_{tj}-\xi_t)|{Z}_i,{Z}_t\big]-\mathbb{E}\big[(\eta_{ij}-\xi_i)(\xi_{j}-Q)|{Z}_i\big]\\
&-\mathbb{E}\big[(\eta_{tj}-\xi_t)(\xi_{j}-Q)|{Z}_t\big]+\textnormal{Var}(\xi_j)\\
=&d_{it}-h_i-h_t-\xi_i^\star\xi_{t}^\star+\mathbb{E}(\xi_j^2),
\end{align*}
which completes the proof.
\end{proof}

\begin{lemma}
\label{Lemma4}
Under Assumption~3.1, for $i,j=1,2,\ldots,n$ and $i\neq j$, the following statements hold.
\begin{align*}
(1)\ \mathbb{E}(\xi_i^{\star 2})=&\textnormal{tr}(({W}{\Sigma}{W}{\Sigma}^\prime)^2)+\textnormal{tr}({W}{\Sigma}_1{W}{\Sigma}{W}{\Sigma}_2{W}{\Sigma}^\prime)+\Delta\textnormal{tr}({\Lambda}_3\circ {\Lambda}_3).\\
(2)\ \mathbb{E}(\eta_{ij}^{\star 2})=&\textnormal{tr}(({W}{\Sigma}_1)^2)\textnormal{tr}(({W}{\Sigma}_2)^2)+\textnormal{tr}^2({W}{\Sigma}{W}{\Sigma}^\prime)+2\textnormal{tr}({W}{\Sigma}_1{W}{\Sigma}{W}{\Sigma}_2{W}{\Sigma}^\prime)\\
&+2\Delta\textnormal{tr}({\Lambda}_3\circ{\Lambda}_3)+2\Delta\textnormal{tr}({\Lambda}_1^2\circ{\Lambda}_2^2)+\Delta^2\textnormal{tr}\big(({\Lambda}_1\circ {\Lambda}_1)({\Lambda}_2\circ {\Lambda}_2)\big).
\end{align*}
\end{lemma}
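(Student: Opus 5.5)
Both identities reduce to fourth-moment computations for quadratic forms in $Z_i$, using \autoref{Lemma1}(1) together with the identities
\[
Q=\textnormal{tr}(\Lambda_3)=\textnormal{tr}(WΣWΣ'),\quad \textnormal{tr}(\Lambda_1^2)=\textnormal{tr}((WΣ_1)^2),\quad \textnormal{tr}(\Lambda_2^2)=\textnormal{tr}((WΣ_2)^2),
\]
\[
\textnormal{tr}(\Lambda_3^2)=\textnormal{tr}((WΣWΣ')^2),\quad \textnormal{tr}(\Lambda_3\Lambda_3')=\textnormal{tr}(\Lambda_1^2\Lambda_2^2)=\textnormal{tr}(WΣ_1WΣWΣ_2WΣ').
\]
Each of these follows by cyclic permutation of traces after substituting $\Lambda_k=\Gamma_k'W\Gamma_k$, $\Gamma_1\Gamma_1'=Σ_1$, $\Gamma_2\Gamma_2'=Σ_2$ and $\Gamma_1\Gamma_2'=Σ$.

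For (1), note that $\mathbb{E}(\xi_i)=\textnormal{tr}(\Lambda_3)=Q$, so $\mathbb{E}(\xi_i^{\star2})=\mathbb{E}(\xi_i^2)-Q^2$. Applying \autoref{Lemma1}(1) with $M_1=M_2=\Lambda_3$ (which need not be symmetric) gives
\[
\mathbb{E}(\xi_i^2)=Q^2+\textnormal{tr}(\Lambda_3^2)+\textnormal{tr}(\Lambda_3\Lambda_3')+\Delta\textnormal{tr}(\Lambda_3\circ\Lambda_3).
\]
Subtracting $Q^2$ and translating with the identities above gives (1).

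For (2), the first step is to note that $\mathbb{E}(\eta_{ij}\mid Z_i)=Z_i'\Lambda_1\mathbb{E}(Z_jZ_j')\Lambda_2Z_i=\xi_i$, and symmetrically $\mathbb{E}(\eta_{ij}\mid Z_j)=\xi_j$. Hence $\mathbb{E}(\eta_{ij}\xi_i)=\mathbb{E}(\xi_i^2)$, $\mathbb{E}(\eta_{ij}\xi_j)=\mathbb{E}(\xi_j^2)$ and $\mathbb{E}(\xi_i\xi_j)=Q^2$. Expanding $\eta_{ij}^\star=(\eta_{ij}-\xi_i)-(\xi_j-Q)$ then yields
\[
\mathbb{E}(\eta_{ij}^{\star2})=\mathbb{E}(\eta_{ij}^2)-2\mathbb{E}(\xi_i^2)+Q^2 .
\]

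For $\mathbb{E}(\eta_{ij}^2)$, condition on $Z_i$ and integrate over $Z_j$. This is exactly \autoref{Lemma3}(1) with $Z_t=Z_i$, since its derivation only integrates out $Z_j$:
\[
\mathbb{E}(\eta_{ij}^2\mid Z_i)=2(Z_i'\Lambda_3Z_i)^2+Z_i'\Lambda_1^2Z_iZ_i'\Lambda_2^2Z_i+\Delta\sum_k(\Lambda_1Z_i)_k^2(\Lambda_2Z_i)_k^2 .
\]
The three terms are then averaged over $Z_i$ as follows.
\begin{itemize}
\item The first term is $2\mathbb{E}(\xi_i^2)$ and cancels against $-2\mathbb{E}(\xi_i^2)$.
\item The second term is handled by \autoref{Lemma1}(1) with $M_1=\Lambda_1^2$, $M_2=\Lambda_2^2$. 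It gives $\textnormal{tr}(\Lambda_1^2)\textnormal{tr}(\Lambda_2^2)+2\textnormal{tr}(\Lambda_1^2\Lambda_2^2)+\Delta\textnormal{tr}(\Lambda_1^2\circ\Lambda_2^2)$.
\item For the third term, write $a_k,b_k$ for the $k$th rows of $\Lambda_1,\Lambda_2$ and use the fourth-moment formula $\mathbb{E}[(a'Z)^2(b'Z)^2]=\|a\|^2\|b\|^2+2(a'b)^2+\Delta\sum_l a_l^2b_l^2$, which follows from (\ref{Lemma1 2}). Summing over $k$ gives $\textnormal{tr}(\Lambda_1^2\circ\Lambda_2^2)+2\textnormal{tr}(\Lambda_3\circ\Lambda_3)+\Delta\textnormal{tr}((\Lambda_1\circ\Lambda_1)(\Lambda_2\circ\Lambda_2))$, and this sum is then multiplied by $\Delta$.
\end{itemize}

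Collecting the pieces, the $2\mathbb{E}(\xi_i^2)$ cancels and the $+Q^2$ remains, supplying $\textnormal{tr}^2(WΣWΣ')$. The result is
\[
\mathbb{E}(\eta_{ij}^{\star2})=\textnormal{tr}(\Lambda_1^2)\textnormal{tr}(\Lambda_2^2)+Q^2+2\textnormal{tr}(\Lambda_3\Lambda_3')+2\Delta\textnormal{tr}(\Lambda_3\circ\Lambda_3)+2\Delta\textnormal{tr}(\Lambda_1^2\circ\Lambda_2^2)+\Delta^2\textnormal{tr}((\Lambda_1\circ\Lambda_1)(\Lambda_2\circ\Lambda_2)).
\]
Translating via the identities above gives (2).

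The main obstacle is the bookkeeping in the $\Delta$-term, $\textnormal{tr}(\Lambda_1Z_iZ_i'\Lambda_2\circ\Lambda_1Z_iZ_i'\Lambda_2)$. One must identify its diagonal entries as $(\Lambda_1Z_i)_k(\Lambda_2Z_i)_k$, using the symmetry of $\Lambda_1$ and $\Lambda_2$. One must also verify that the pairing patterns in (\ref{Lemma1 2}) produce the Hadamard-type traces with the correct multiplicities.
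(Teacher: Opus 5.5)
Your proposal is correct and takes essentially the same route as the paper. Part (1) is Lemma~1(1) applied with $M_1=M_2=\Lambda_3$. Part (2) reduces to $\mathbb{E}(\eta_{ij}^{\star2})=\mathbb{E}(\eta_{ij}^2)-2\mathbb{E}(\xi_i^2)+Q^2$, where $\mathbb{E}(\eta_{ij}^2)=\mathbb{E}(d_{ii})$ is evaluated termwise from Lemma~3(1), with the same Hadamard bookkeeping for the $\Delta$-term. The only cosmetic difference is that you get this reduction by expanding $\eta_{ij}^\star$ directly, whereas the paper passes through $\mathbb{E}(\Psi_{ii})$ and Lemma~3(3).
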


\begin{proof}[\textbf{Proof}]
By \autoref{Lemma1}(1), we have
\begin{align}
\begin{split}
\mathbb{E}(\xi_i^2)=&\mathbb{E}({Z}_i^\prime{\Lambda}_3{Z}_i{Z}_i^\prime{\Lambda}_3{Z}_i)\\
=&\textnormal{tr}^2({\Lambda}_3)+\textnormal{tr}({\Lambda}_3{\Lambda}_3^\prime)+\textnormal{tr}({\Lambda}_3^2)+\Delta\textnormal{tr}({\Lambda}_3\circ {\Lambda}_3)\\
=&Q^2+\textnormal{tr}(({W}{\Sigma}{W}{\Sigma}^\prime)^2)+\textnormal{tr}({W}{\Sigma}_1{W}{\Sigma}{W}{\Sigma}_2{W}{\Sigma}^\prime)+\Delta\textnormal{tr}({\Lambda}_3\circ {\Lambda}_3).
\end{split}
\label{Lemma4 1}
\end{align}
Then, (1) immediately yields by gathering (\ref{Lemma4 1}) and $\mathbb{E}(\xi_i^{\star 2})=\mathbb{E}(\xi_i^2)-Q^2$.

For (2), by the Law of Iterated Expectations, we get $\mathbb{E}(\eta_{ij}^{\star 2})=\mathbb{E}(\Psi_{ii})$. Then, following \autoref{Lemma1}(1) and \autoref{Lemma3}, we obtain
\begin{align}
\begin{split}
\mathbb{E}(d_{ii})=&2\mathbb{E}({Z}_i^\prime{\Lambda}_3{Z}_i{Z}_i^\prime{\Lambda}_3{Z}_i)+\mathbb{E}({Z}_i^\prime{\Lambda}_1^2{Z}_i{Z}_i^\prime{\Lambda}_2^2{Z}_i)+\Delta\mathbb{E}\big(\textnormal{tr}({\Lambda}_1{Z}_i{Z}_i^\prime{\Lambda}_2\circ {\Lambda}_1{Z}_i{Z}_i^\prime{\Lambda}_2)\big)\\
=&2\mathbb{E}(\xi_i^2)+\textnormal{tr}({\Lambda}_1^2)\textnormal{tr}({\Lambda}_2^2)+2\textnormal{tr}({\Lambda}_1^2{\Lambda}_2^2)+\Delta\textnormal{tr}({\Lambda}_1^2\circ {\Lambda}_2^2)+\Delta\mathbb{E}\big(\textnormal{tr}({\Lambda}_1{Z}_i{Z}_i^\prime{\Lambda}_2\circ {\Lambda}_1{Z}_i{Z}_i^\prime{\Lambda}_2)\big).
\end{split}
\label{Lemma4 2}
\end{align}
For the last term in (\ref{Lemma4 2}), note that the $(k,k)$th element of ${\Lambda}_1{Z}_i{Z}_i^\prime{\Lambda}_2$ is $\sum_{l,t=1}^{d}\lambda_{1,kl}\lambda_{2,kt}Z_{il}Z_{it}$, where $\lambda_{i,jk}$ denotes the ${j,k}$th element of ${\Lambda}_i$, $i=1,2$. Then, it follows that 
\begin{align*}
&\mathbb{E}\big(\textnormal{tr}({\Lambda}_1{Z}_i{Z}_i^\prime{\Lambda}_2\circ {\Lambda}_1{Z}_i{Z}_i^\prime{\Lambda}_2)\big)\\
=&\sum_{k=1}^{d}\mathbb{E}\left(\sum_{l,t=1}^{d}\lambda_{1,kl}\lambda_{2,kt}Z_{il}Z_{it}\right)^2\\
=&\sum_{k=1}^{d}\sum_{l_1,t_1=1}^{d}\sum_{l_2,t_2=1}^{d}\lambda_{1,kl_1}\lambda_{2,kt_1}\lambda_{1,kl_2}\lambda_{2,kt_2}\mathbb{E}(Z_{il_1}Z_{it_1}Z_{il_2}Z_{it_2}).
\end{align*}
Again, using (\ref{Lemma1 2}), the expression is further simplified as
\begin{align}
\begin{split}
&\mathbb{E}\big(\textnormal{tr}({\Lambda}_1{Z}_i{Z}_i^\prime{\Lambda}_2\circ {\Lambda}_1{Z}_i{Z}_i^\prime{\Lambda}_2)\big)\\
=&\sum_{k=1}^{d}\sum_{l_1,t_1=1}^{d}\sum_{l_2,t_2=1}^{d}\lambda_{1,kl_1}\lambda_{2,kt_1}\lambda_{1,kl_2}\lambda_{2,kt_2}\mathbb{E}(Z_{il_1}Z_{it_1}Z_{il_2}Z_{it_2})\\
=&\sum_{k=1}^{d}\left(2\sum_{l,t=1}^{d}\lambda_{1,kl}\lambda_{2,kl}\lambda_{1,kt}\lambda_{2,kt}+\sum_{l,t=1}^{d}\lambda_{1,kl}^2\lambda_{2,kt}^2+\Delta\sum_{l=1}^{d}\lambda_{1,kl}^2\lambda_{2,kl}^2\right)\\
=&2\textnormal{tr}({\Lambda}_3\circ {\Lambda}_3)+\textnormal{tr}({\Lambda}_1^2\circ {\Lambda}_2^2)+\Delta\textnormal{tr}(({\Lambda}_1\circ {\Lambda}_1)({\Lambda}_2\circ {\Lambda}_2))
\end{split}
\label{Lemma4 3}
\end{align}
Gathering (\ref{Lemma4 2})--(\ref{Lemma4 3}), we obtain
\begin{align}
\begin{split}
\mathbb{E}(d_{ii})=&2\mathbb{E}(\xi_i^2)+\textnormal{tr}({\Lambda}_1^2)\textnormal{tr}({\Lambda}_2^2)+2\textnormal{tr}({\Lambda}_1^2{\Lambda}_2^2)+\Delta\textnormal{tr}({\Lambda}_1^2\circ {\Lambda}_2^2)+\Delta\Big(2\textnormal{tr}({\Lambda}_3\circ {\Lambda}_3)+\textnormal{tr}({\Lambda}_1^2\circ {\Lambda}_2^2)\Big)\\
&+\Delta^2\textnormal{tr}(({\Lambda}_1\circ {\Lambda}_1)({\Lambda}_2\circ {\Lambda}_2))\\
=&2\mathbb{E}(\xi_i^2)+\textnormal{tr}(({W}{\Sigma}_1)^2)\textnormal{tr}(({W}{\Sigma}_2)^2)+2\textnormal{tr}({W}{\Sigma}_1{W}{\Sigma}{W}{\Sigma}_2{W}{\Sigma}^\prime)+2\Delta\textnormal{tr}({\Lambda}_3\circ{\Lambda}_3)\\
&+2\Delta\textnormal{tr}({\Lambda}_1^2\circ{\Lambda}_2^2)+\Delta^2\textnormal{tr}\big(({\Lambda}_1\circ {\Lambda}_1)({\Lambda}_2\circ {\Lambda}_2)\big).
\end{split}
\label{Lemma4 4}
\end{align}
Note that $\mathbb{E}(h_i)=\mathbb{E}(\xi_i^2)$. Therefore, by \autoref{Lemma3}, (\ref{Lemma4 1}), and (\ref{Lemma4 4}), the desired results for (2) yield 
\begin{align}
\begin{split}
\mathbb{E}(\eta_{ij}^{\star 2})=&\mathbb{E}(\Psi_{ii})\\
=&\mathbb{E}(d_{ii})-2\mathbb{E}(h_i)-\mathbb{E}(\xi_i^{\star 2})+\mathbb{E}(\xi_i^{2})\\
=&\mathbb{E}(d_{ii})-2\mathbb{E}(\xi_i^{2})+Q^2\\
=&\textnormal{tr}(({W}{\Sigma}_1)^2)\textnormal{tr}(({W}{\Sigma}_2)^2)+\textnormal{tr}^2({W}{\Sigma}{W}{\Sigma}^\prime)+2\textnormal{tr}({W}{\Sigma}_1{W}{\Sigma}{W}{\Sigma}_2{W}{\Sigma}^\prime)+2\Delta\textnormal{tr}({\Lambda}_3\circ{\Lambda}_3)\\
&+2\Delta\textnormal{tr}({\Lambda}_1^2\circ{\Lambda}_2^2)+\Delta^2\textnormal{tr}\big(({\Lambda}_1\circ {\Lambda}_1)({\Lambda}_2\circ {\Lambda}_2)\big),
\end{split}
\label{Lemma4 5}
\end{align}
which ends the proof.
\end{proof}

\begin{lemma}
\label{Lemma5}
Under Assumptions 3.1--3.3, for $i,j=1,2,\ldots,n$ and $i\neq j$, the following statements hold.
\begin{align*}
&(1) \ \mathbb{E}(\Psi_{ii})=\textnormal{tr}(({W}{\Sigma}_1)^2)\textnormal{tr}(({W}{\Sigma}_2)^2)+o\Big(\textnormal{tr}(({W}{\Sigma}_1)^2)\textnormal{tr}(({W}{\Sigma}_2)^2)\Big).\\
&(2) \ \mathbb{E}(\Psi_{ij}^2)=o\Big(\textnormal{tr}^2(({W}{\Sigma}_1)^2)\textnormal{tr}^2(({W}{\Sigma}_2)^2)\Big).\\
&(3) \ \mathbb{E}(\Psi_{ii}^2)=O\Big(\textnormal{tr}^2(({W}{\Sigma}_1)^2)\textnormal{tr}^2(({W}{\Sigma}_2)^2)\Big).\\
&(4)\ \mathbb{E}(\eta_{ij}^{\star 4})=O\Big(\textnormal{tr}^2(({W}{\Sigma}_1)^2)\textnormal{tr}^2(({W}{\Sigma}_2)^2)\Big).
\end{align*}
\end{lemma}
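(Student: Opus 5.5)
We prove the four parts in order, writing $V=\textnormal{tr}(({W}{\Sigma}_1)^2)\textnormal{tr}(({W}{\Sigma}_2)^2)=\textnormal{tr}({\Lambda}_1^2)\textnormal{tr}({\Lambda}_2^2)$. Every bound reduces, through \autoref{Lemma1}, \autoref{Lemma3} and \autoref{Lemma4}, to traces of ${\Lambda}_1,{\Lambda}_2,{\Lambda}_3$. These traces are then compared with $V$ using \autoref{Lemma2}, the assumption $\textnormal{tr}({\Lambda}_k^4)=o(\textnormal{tr}^2({\Lambda}_k^2))$ from \autoref{A2}, and \autoref{A3}.

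\textbf{Part (1).} This follows immediately from $\mathbb{E}(\Psi_{ii})=\mathbb{E}(\eta_{ij}^{\star2})$ and \autoref{Lemma4}(2).
\begin{itemize}
\item The term $\textnormal{tr}^2({W}{\Sigma}{W}{\Sigma}')$ is $o(V)$ by \eqref{A3 1}.
\item The term $\textnormal{tr}({W}{\Sigma}_1{W}{\Sigma}{W}{\Sigma}_2{W}{\Sigma}')=\textnormal{tr}({\Lambda}_3{\Lambda}_3')$ is $o(n^{-1}V)=o(V)$ by \eqref{A3 2}.
\item $\textnormal{tr}({\Lambda}_3\circ{\Lambda}_3)\le\textnormal{tr}({\Lambda}_3{\Lambda}_3')$ by \autoref{Lemma2}(1).
\item The two $\Delta$-terms are at most $\sqrt{\textnormal{tr}({\Lambda}_1^4)\textnormal{tr}({\Lambda}_2^4)}=o(V)$ by \autoref{Lemma2}(2) and \autoref{A2}.
\end{itemize}

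\textbf{Part (2).} Write $\Psi_{ij}=d_{ij}-h_i-h_j-\xi_i^\star\xi_j^\star+\mathbb{E}(\xi^2)$ as in \autoref{Lemma3}(3), and use $\mathbb{E}(\Psi_{ij}^2)\le C\sum(\text{second moments of the pieces})$.

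The dominant piece of $d_{ij}$ is $U_{ij}:={Z}_i'{\Lambda}_1^2{Z}_j{Z}_j'{\Lambda}_2^2{Z}_i$. Conditioning on ${Z}_i$ and applying \autoref{Lemma1}(1) to the quadratic form in ${Z}_j$ gives
\[
\mathbb{E}(U_{ij}^2)=\mathbb{E}\Big[\big({Z}_i'{\Lambda}_1^2{\Lambda}_2^2{Z}_i\big)^2+\cdots\Big].
\]
A second application of \autoref{Lemma1}(1) in ${Z}_i$ produces
\begin{itemize}
\item $\textnormal{tr}^2({\Lambda}_1^2{\Lambda}_2^2)$, which is $o(V^2)$ by \eqref{A3 2};
\item $\textnormal{tr}({\Lambda}_1^4{\Lambda}_2^4)$ and $\textnormal{tr}({\Lambda}_1^2{\Lambda}_2^2{\Lambda}_1^2{\Lambda}_2^2)$, each bounded by $\textnormal{tr}({\Lambda}_1^4)\textnormal{tr}({\Lambda}_2^4)=o(V^2)$ via Cauchy--Schwarz;
\item Hadamard terms, controlled as in \autoref{Lemma2}.
\end{itemize}

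The other pieces are handled the same way.
\begin{itemize}
\item The ${\Lambda}_3$-pieces of $d_{ij}$ and of $h_i$ involve only ${\Lambda}_3$ and $\textnormal{tr}({\Lambda}_3)=Q$. Their second moments are bounded by $\textnormal{tr}^2({\Lambda}_3{\Lambda}_3')$, by $Q^2\textnormal{tr}({\Lambda}_3{\Lambda}_3')$, or by $Q^4$, using \autoref{Lemma2}(3) and \autoref{Lemma1}(3).
\item $Q^2=o(V)$ by \eqref{A3 1}, and $\textnormal{tr}({\Lambda}_3{\Lambda}_3')=o(V)$ by \eqref{A3 2}.
\item The $\Delta$ Hadamard term in $d_{ij}$ is bounded by Cauchy--Schwarz on its entries, giving the same orders.
\end{itemize}

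\textbf{Parts (3) and (4).} The same expansion applies with $t=i$. Now the leading piece ${Z}_i'{\Lambda}_1^2{Z}_i\,{Z}_i'{\Lambda}_2^2{Z}_i$ is only $O(V)$ in $L^2$. Cauchy--Schwarz together with \autoref{Lemma1}(2)--(3) gives
\[
\mathbb{E}\big[({Z}'{\Lambda}_k^2{Z})^4\big]\le K\,\textnormal{tr}^4({\Lambda}_k^2)
\]
under 8th moments. This yields $\mathbb{E}(\Psi_{ii}^2)=O(V^2)$.

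For (4), write $\eta_{ij}=({Z}_i'{\Lambda}_1{Z}_j)({Z}_j'{\Lambda}_2{Z}_i)$. By Cauchy--Schwarz,
\[
\mathbb{E}(\eta_{ij}^4)\le\sqrt{\mathbb{E}({Z}_i'{\Lambda}_1{Z}_j)^8\,\mathbb{E}({Z}_i'{\Lambda}_2{Z}_j)^8}.
\]
Conditioning on ${Z}_j$, the form ${Z}_i'{\Lambda}_1{Z}_j$ is linear in ${Z}_i$ with coefficient vector ${\Lambda}_1{Z}_j$, so its 8th moment is at most $C({Z}_j'{\Lambda}_1^2{Z}_j)^4$. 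Taking expectation over ${Z}_j$ gives at most $C\,\textnormal{tr}^4({\Lambda}_1^2)$. Hence $\mathbb{E}(\eta_{ij}^4)=O(V^2)$. The centring terms $\xi_i,\xi_j,Q$ are of smaller order by the bounds above.

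\textbf{Main obstacle.} The hardest step is (2): every cross term must be shown to be genuinely $o(V^2)$, not merely $O$. This hinges on the mixed trace $\textnormal{tr}({\Lambda}_1^2{\Lambda}_2^2)$. My plan is to bound it by $\sqrt{\textnormal{tr}({\Lambda}_1^4)\textnormal{tr}({\Lambda}_2^4)}$ and then invoke \autoref{A2}. For the mixed ${Z}_i,{Z}_j$ 8th-moment Hadamard terms, I would use entrywise Cauchy--Schwarz as in \eqref{Cauchy inequality}.
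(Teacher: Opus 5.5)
Your proposal is correct and matches the paper's proof in structure. Part (1) comes from \autoref{Lemma4}(2) together with \autoref{Lemma2} and Assumptions \ref{A2}--\ref{A3}. Part (2) splits $\Psi_{ij}$ via \autoref{Lemma3}(3) and shows that every piece has second moment $o(V^2)$, using $Q^2=o(V)$, $\textnormal{tr}(\Lambda_3\Lambda_3')=o(V)$ and $\textnormal{tr}(\Lambda_1^4)\textnormal{tr}(\Lambda_2^4)=o(V^2)$. Parts (3)--(4) rest on the bound $\mathbb{E}[(Z'\Lambda_k^2Z)^4]=O(\textnormal{tr}^4(\Lambda_k^2))$.

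The differences from the paper are only local, and both routes rest on the same \autoref{Lemma1} estimates:
\begin{itemize}
\item You compute $\mathbb{E}(U_{ij}^2)$ exactly by applying \autoref{Lemma1}(1) twice. The paper instead uses Cauchy--Schwarz with the bound $\mathbb{E}[(Z_i'MZ_j)^4]\le K^2\textnormal{tr}^2(MM')$.
\item For (4), you bound $\mathbb{E}(\eta_{ij}^4)$ through conditional eighth moments of the two bilinear factors. The paper instead expands $\eta_{ij}^2$ through $C_{1ij}$ and $C_{2ij}$.
\end{itemize}
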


\begin{proof}[\textbf{Proof}]
Note that $\textnormal{tr}({\Lambda}_3{\Lambda}_3^\prime)=\textnormal{tr}({W}{\Sigma}_1{W}{\Sigma}{W}{\Sigma}_2{\Sigma}^\prime)$, $\textnormal{tr}({\Lambda}_i^4)=\textnormal{tr}(({W}{\Sigma}_i)^4)$, $i=1,2$. Then, under Assumptions 3.2--3.3, (1) directly follows by gathering (\ref{Lemma4 5}) and \autoref{Lemma2}.

To prove (2), note that $\mathbb{E}(\Psi_{ij}^2)$ is bounded by
\begin{align}
\begin{split}
\mathbb{E}(\Psi_{ij}^2)\leqslant &5\mathbb{E}(d_{ij}^2)+10\mathbb{E}(h_i^2)+5\mathbb{E}^2(\xi_i^{\star 2})+5\mathbb{E}^2(\xi_i^{2})\\
\leqslant &5\mathbb{E}(d_{ij}^2)+10\mathbb{E}(h_i^2)+10\mathbb{E}^2(\xi_i^{2}).
\end{split}
\label{Lemma5 0}
\end{align}
Note that $\textnormal{tr}(({W}{\Sigma}{W}{\Sigma}^\prime)^2)=\textnormal{tr}({\Lambda}_3^2)\leqslant \textnormal{tr}({\Lambda}_3{\Lambda}_3^\prime)=\textnormal{tr}({W}{\Sigma}_1{W}{\Sigma}{W}{\Sigma}_2{W}{\Sigma}^\prime)$ by (\ref{Cauchy inequality}). Therefore, under Assumption~3.3, gathering (\ref{Lemma4 1}) and \autoref{Lemma2} yields that 
\begin{align}
\begin{split}
\mathbb{E}^2(\xi_i^{2})\leqslant &\Big(\textnormal{tr}^2({W}{\Sigma}{W}{\Sigma}^\prime)+(2+\Delta)\textnormal{tr}({W}{\Sigma}_1{W}{\Sigma}{W}{\Sigma}_2{W}{\Sigma}^\prime)\Big)^2\\
=&o\Big(\textnormal{tr}^2(({W}{\Sigma}_1)^2)\textnormal{tr}^2(({W}{\Sigma}_2)^2)\Big).
\end{split}
\label{Lemma5 1}
\end{align}

By \autoref{Lemma3}, $\mathbb{E}(d_{ij}^2)$ can be bounded by
\begin{align}
\begin{split}
\mathbb{E}(d_{ij}^2)\leqslant& 4\mathbb{E}^2\big[({Z}_i^\prime{\Lambda}_3{Z}_i)^2\big]+4\mathbb{E}\big[({Z}_i^\prime{\Lambda}_3{Z}_j{Z}_j^\prime{\Lambda}_3{Z}_i)^2\big]+4\mathbb{E}\big[({Z}_i^\prime{\Lambda}_1^2{Z}_j{Z}_j^\prime{\Lambda}_2^2{Z}_i)^2\big]\\
&+4\Delta^2\mathbb{E}\big[\textnormal{tr}^2({\Lambda}_1{Z}_i{Z}_i^\prime{\Lambda}_2\circ {\Lambda}_1{Z}_j{Z}_j^\prime{\Lambda}_2)\big]\\
=&4\mathbb{E}^2(\xi_i^2)+4\mathbb{E}\big[({Z}_i^\prime{\Lambda}_3{Z}_j{Z}_j^\prime{\Lambda}_3{Z}_i)^2\big]+4\mathbb{E}\big[({Z}_i^\prime{\Lambda}_1^2{Z}_j{Z}_j^\prime{\Lambda}_2^2{Z}_i)^2\big]\\
&+4\Delta^2\mathbb{E}\big[\textnormal{tr}^2({\Lambda}_1{Z}_i{Z}_i^\prime{\Lambda}_2\circ {\Lambda}_1{Z}_j{Z}_j^\prime{\Lambda}_2)\big].
\end{split}
\label{Lemma5 2}
\end{align}
To calculate the bound of $\mathbb{E}(d_{ij}^2)$, we first show that for any $d\times d$ matrix ${M}$ and $i\neq j$, 
\begin{align}
\mathbb{E}\left[({Z}_i{M}{Z}_j)^4\right]\leqslant K^2\textnormal{tr}^2({M}{M}^\prime),
\label{bound power 4}
\end{align}
where the constant $K$ is given in \autoref{Lemma1}. In fact, using the Law of Iterated Expectations and \autoref{Lemma1}(3), (\ref{bound power 4}) can be proved straightforward:
\begin{align*}
\mathbb{E}\left[({Z}_i{M}{Z}_j)^4\right]=\mathbb{E}\left[({Z}_i^\prime({M}{Z}_j{Z}_j^\prime{M}^\prime){Z}_i)^2\right]\leqslant K\mathbb{E}\left[({Z}_j^\prime{M}^\prime{M}{Z}_j)^2\right]\leqslant K^2\textnormal{tr}^2({M}{M}^\prime).
\end{align*}
Then, by Cauchy--Schwarz inequality and (\ref{bound power 4}), it holds that
\begin{align}
\begin{split}
\mathbb{E}\big[({Z}_i^\prime{\Lambda}_3{Z}_j{Z}_j^\prime{\Lambda}_3{Z}_i)^2\big]\leqslant &\mathbb{E}\big[({Z}_i^\prime{\Lambda}_3{Z}_j)^4\big]\leqslant K^2\textnormal{tr}^2({\Lambda}_3{\Lambda}_3^\prime),\\
\mathbb{E}\left[({Z}_i^\prime{\Lambda}_1^2{Z}_j{Z}_j^\prime{\Lambda}_2^2{Z}_i)^2\right]\leqslant &\sqrt{\mathbb{E}\left[({Z}_i^\prime{\Lambda}_1^2{Z}_j)^4\right]\mathbb{E}\left[({Z}_i^\prime{\Lambda}_2^2{Z}_j)^4\right]}\leqslant K^2\textnormal{tr}({\Lambda}_1^4)\textnormal{tr}({\Lambda}_2^4).
\end{split}
\label{bound dij 23_1}
\end{align}
Since $\textnormal{tr}({\Lambda}_3{\Lambda}_3^\prime)=\textnormal{tr}({W}{\Sigma}_1{W}{\Sigma}{W}{\Sigma}_2{\Sigma}^\prime)$, $\textnormal{tr}({\Lambda}_i^4)=\textnormal{tr}(({W}{\Sigma}_i)^4)$, $i=1,2$, (\ref{bound dij 23_1}) implies that
\begin{align}
\begin{split}
&\mathbb{E}\big[({Z}_i^\prime{\Lambda}_3{Z}_j{Z}_j^\prime{\Lambda}_3{Z}_i)^2\big]=o\Big(\textnormal{tr}^2(({W}{\Sigma}_1)^2)\textnormal{tr}^2(({W}{\Sigma}_2)^2)\Big),\\
&\mathbb{E}\left[({Z}_i^\prime{\Lambda}_1^2{Z}_j{Z}_j^\prime{\Lambda}_2^2{Z}_i)^2\right]=o\Big(\textnormal{tr}^2(({W}{\Sigma}_1)^2)\textnormal{tr}^2(({W}{\Sigma}_2)^2)\Big).
\end{split}
\label{bound dij 23_2}
\end{align}
For the last term in (\ref{Lemma5 2}), under Assumptions 3.2--3.3, using the Cauchy--Schwarz inequality, (\ref{Lemma4 3}), and \autoref{Lemma2}, we obtain that
\begin{align}
\begin{split}
&\mathbb{E}\big[\textnormal{tr}^2({\Lambda}_1{Z}_i{Z}_i^\prime{\Lambda}_2\circ {\Lambda}_1{Z}_j{Z}_j^\prime{\Lambda}_2)\big]\\\leqslant & \mathbb{E}^2\big[\textnormal{tr}({\Lambda}_1{Z}_i{Z}_i^\prime{\Lambda}_2\circ {\Lambda}_1{Z}_i{Z}_i^\prime{\Lambda}_2)\big]\\
=&\Big(2\textnormal{tr}({\Lambda}_3\circ {\Lambda}_3)+\textnormal{tr}({\Lambda}_1^2\circ {\Lambda}_2^2)+\Delta\textnormal{tr}(({\Lambda}_1\circ {\Lambda}_1)({\Lambda}_2\circ {\Lambda}_2))\Big)^2\\
\leqslant &\Big(2\textnormal{tr}({\Lambda}_3{\Lambda}_3^\prime)+(1+\Delta)\sqrt{\textnormal{tr}({\Lambda}_1^4)\textnormal{tr}({\Lambda}_2^4)}\Big)^2\\
=&o\Big(\textnormal{tr}^2(({W}{\Sigma}_1)^2)\textnormal{tr}^2(({W}{\Sigma}_2)^2)\Big).
\end{split}
\label{bound dij 4}
\end{align}
Gathering (\ref{Lemma5 1}), (\ref{Lemma5 2}), (\ref{bound dij 23_2}), and (\ref{bound dij 4}) implies that
\begin{align}
\begin{split}
\mathbb{E}(d_{ij}^2)\leqslant& 4\mathbb{E}^2\big[(\xi_i^2)\big]+4\mathbb{E}\big[({Z}_i^\prime{\Lambda}_3{Z}_j{Z}_j^\prime{\Lambda}_3{Z}_i)^2\big]+4\mathbb{E}\big[({Z}_i^\prime{\Lambda}_1^2{Z}_j{Z}_j^\prime{\Lambda}_2^2{Z}_i)^2\big]\\
&+4\Delta^2\mathbb{E}\big[\textnormal{tr}^2({\Lambda}_1{Z}_i{Z}_i^\prime{\Lambda}_2\circ {\Lambda}_1{Z}_j{Z}_j^\prime{\Lambda}_2)\big]\\
=&o\Big(\textnormal{tr}^2(({W}{\Sigma}_1)^2)\textnormal{tr}^2(({W}{\Sigma}_2)^2)\Big).
\end{split}
\label{bound dij}
\end{align}

For $\mathbb{E}(h_i^2)$, note that it can be bounded by
\begin{align}
\begin{split}
\mathbb{E}(h_i^2)\leqslant &4\Big(\textnormal{tr}^2({\Lambda}_3)\mathbb{E}(\xi_i^2)+\mathbb{E}\big[({Z}_i^\prime{\Lambda}_3^2{Z}_i)^2\big]+\mathbb{E}\big[({Z}_i^\prime{\Lambda}_1^2{\Lambda}_2^2{Z}_i)^2\big]+\Delta^2\mathbb{E}\big[\textnormal{tr}^2({\Lambda}_1{Z}_i{Z}_i^\prime{\Lambda}_2\circ {\Lambda}_3)\big]\Big).
\end{split}
\label{Lemma5 21}
\end{align}
Since $\textnormal{tr}({\Lambda}_3)=\textnormal{tr}({W}{\Sigma}{W}{\Sigma}^\prime)$, by (\ref{Lemma5 1}), we get
\begin{align}
\textnormal{tr}^2({\Lambda}_3)\mathbb{E}(\xi_i^2)=o\Big(\textnormal{tr}^2(({W}{\Sigma}_1)^2)\textnormal{tr}^2(({W}{\Sigma}_2)^2)\Big).
\label{Lemma5 22}
\end{align}

Next, we establish that, for any $d\times d$ matrices ${M}_1$ and ${M}_2$, 
\begin{align}
\mathbb{E}({Z}_i^\prime{M}_1{M}_2{Z}_i)^2\leqslant K^2\textnormal{tr}({M}_1{M}_1^\prime)\textnormal{tr}({M}_2{M}_2^\prime),
\label{bound M1M2}
\end{align}
where the constant $K$ is given in \autoref{Lemma1}. Following similar techniques when dealing with (\ref{bound power 4}), (\ref{bound M1M2}) can be proved by using Cauchy--Schwarz inequality and \autoref{Lemma1}(3):
\begin{align*}
\mathbb{E}({Z}_i^\prime{M}_1{M}_2{Z}_i)^2\leqslant &\mathbb{E}({Z}_i^\prime{M}_1{M}_1^\prime{Z}_i{Z}_i^\prime{M}_2^\prime {M}_2{Z}_i)\\
\leqslant &\sqrt{\mathbb{E}\left[({Z}_i^\prime{M}_1{M}_1^\prime{Z}_i)^2\right]\mathbb{E}\left[({Z}_i^\prime{M}_2^\prime{M}_2{Z}_i)^2\right]}\\
\leqslant &K^2\textnormal{tr}({M}_1{M}_1^\prime)\textnormal{tr}({M}_2{M}_2^\prime).
\end{align*}
Note that $\textnormal{tr}({\Lambda}_3{\Lambda}_3^\prime)=\textnormal{tr}({W}{\Sigma}_1{W}{\Sigma}{W}{\Sigma}_2{\Sigma}^\prime)$, $\textnormal{tr}({\Lambda}_i^4)=\textnormal{tr}(({W}{\Sigma}_i)^4)$, $i=1,2$. Under Assumptions 3.2--3.3, it follows that
\begin{align}
\begin{split}
&\mathbb{E}\big[({Z}_i^\prime{\Lambda}_3^2{Z}_i)^2\big]\leqslant K^2\textnormal{tr}^2({\Lambda}_3{\Lambda}_3^\prime)=o\Big(\textnormal{tr}^2(({W}{\Sigma}_1)^2)\textnormal{tr}^2(({W}{\Sigma}_2)^2)\Big),\\
&\mathbb{E}\big[({Z}_i^\prime{\Lambda}_1^2{\Lambda}_2^2{Z}_i)^2\big]\leqslant K^2\textnormal{tr}({\Lambda}_1^4)\textnormal{tr}({\Lambda}_2^4)=o\Big(\textnormal{tr}^2(({W}{\Sigma}_1)^2)\textnormal{tr}^2(({W}{\Sigma}_2)^2)\Big).
\end{split}
\label{bound hi 23}
\end{align}

By Cauchy--Schwarz inequality, (\ref{Lemma4 3}), and \autoref{Lemma2}, we obtain
\begin{align}
\begin{split}
&\mathbb{E}\big[\textnormal{tr}^2({\Lambda}_1{Z}_i{Z}_i^\prime{\Lambda}_2\circ {\Lambda}_3)\big]\\
\leqslant&\textnormal{tr}(\Lambda_3\circ \Lambda_3)\mathbb{E}\big[\textnormal{tr}({\Lambda}_1{Z}_i{Z}_i^\prime{\Lambda}_2\circ {\Lambda}_1{Z}_i{Z}_i^\prime{\Lambda}_2) \big]\\
=&\textnormal{tr}(\Lambda_3\circ \Lambda_3)\Big(2\textnormal{tr}({\Lambda}_3\circ {\Lambda}_3)+\textnormal{tr}({\Lambda}_1^2\circ {\Lambda}_2^2)+\Delta\textnormal{tr}(({\Lambda}_1\circ {\Lambda}_1)({\Lambda}_2\circ {\Lambda}_2))\Big)\\
\leqslant &\textnormal{tr}({\Lambda}_3{\Lambda}_3^\prime)\Big(2\textnormal{tr}({\Lambda}_3{\Lambda}_3^\prime)+(1+\Delta)\sqrt{\textnormal{tr}({\Lambda}_1^4)\textnormal{tr}({\Lambda}_2^4)}\Big)\\
=&o\Big(\textnormal{tr}^2(({W}{\Sigma}_1)^2)\textnormal{tr}^2(({W}{\Sigma}_2)^2)\Big).
\end{split}
\label{Lemma5 25}
\end{align}
Combining (\ref{Lemma5 21}), (\ref{Lemma5 22}), (\ref{bound hi 23}), and (\ref{Lemma5 25}), we obtain
\begin{align}
\begin{split}
\mathbb{E}(h_i^2)\leqslant& 4\Big(\textnormal{tr}^2({\Lambda}_3)\mathbb{E}(\xi_i^2)+\mathbb{E}\big[({Z}_i^\prime{\Lambda}_3^2{Z}_i)^2\big]+\mathbb{E}\big[({Z}_i^\prime{\Lambda}_1^2{\Lambda}_2^2{Z}_i)^2\big]+\Delta^2\mathbb{E}\big[\textnormal{tr}^2({\Lambda}_1{Z}_i{Z}_i^\prime{\Lambda}_2\circ {\Lambda}_3)\big]\Big)\\
=&o\Big(\textnormal{tr}^2(({W}{\Sigma}_1)^2)\textnormal{tr}^2(({W}{\Sigma}_2)^2)\Big).
\end{split}
\label{Lemma5 26}
\end{align}
Then, we prove (2) by gathering (\ref{Lemma5 0}), (\ref{Lemma5 1}), (\ref{bound dij}), and (\ref{Lemma5 26}).

Before showing (3), we first establish that, under Assumptions 3.1--3.2, 
\begin{align}
\mathbb{E}({Z}_j^\prime{\Lambda}_i^2{Z}_j)^4=O(\textnormal{tr}^4({\Lambda}_i^2)), \ \text{for $i=1,2$}.
\label{haodingxi}
\end{align}
Note that $\mathbb{E}({Z}_j^\prime{\Lambda}_i^2{Z}_j)=\textnormal{tr}({\Lambda}_i^2)$, we have 
\begin{align}
\begin{split}
\mathbb{E}({Z}_j^\prime{\Lambda}_i^2{Z}_j)^4=&\mathbb{E}\Big[\Big({Z}_j^\prime{\Lambda}_i^2{Z}_j-\textnormal{tr}({\Lambda}_i^2)\Big)+\textnormal{tr}({\Lambda}_i^2)\Big]^4\\
=&\mathbb{E}\Big[\Big({Z}_j^\prime{\Lambda}_i^2{Z}_j-\textnormal{tr}({\Lambda}_i^2)\Big)\Big]^4+4\textnormal{tr}({\Lambda}_i^2)\mathbb{E}\Big[\Big({Z}_j^\prime{\Lambda}_i^2{Z}_j-\textnormal{tr}({\Lambda}_i^2)\Big)\Big]^3\\
&+6\textnormal{tr}^2({\Lambda}_i^2)\mathbb{E}\Big[\Big({Z}_j^\prime{\Lambda}_i^2{Z}_j-\textnormal{tr}({\Lambda}_i^2)\Big)\Big]^2+\textnormal{tr}^4({\Lambda}_i^2).
\end{split}
\label{Lemma5 31}
\end{align}
Under Assumption~3.2, directly applying \autoref{Lemma1}(2), we get 
\begin{align}
\mathbb{E}\Big[\Big({Z}_j^\prime{\Lambda}_i^2{Z}_j-\textnormal{tr}({\Lambda}_i^2)\Big)\Big]^4=O\Big(\textnormal{tr}^2({\Lambda}_i^4)\Big)=o\Big(\textnormal{tr}^4({\Lambda}_i^2)\Big).
\label{Lemma5 32}
\end{align}
Next, by \autoref{Lemma1}(3), we obtain
\begin{align}
\begin{split}
\mathbb{E}\Big[\Big({Z}_j^\prime{\Lambda}_i^2{Z}_j-\textnormal{tr}({\Lambda}_i^2)\Big)\Big]^2\leqslant\mathbb{E}\big[({Z}_j^\prime{\Lambda}_i^2{Z}_j)^2\big]=O\big(\textnormal{tr}^2({\Lambda}_i^2)\big).
\end{split}
\label{Lemma5 33}
\end{align}
Then, gathering (\ref{Lemma5 32}), (\ref{Lemma5 33}), and Cauchy--Schwarz inequality, we get
\begin{align}
\begin{split}
\Bigg|\mathbb{E}\Big[\Big({Z}_j^\prime{\Lambda}_i^2{Z}_j-\textnormal{tr}({\Lambda}_i^2)\Big)\Big]^3\Bigg|\leqslant &\sqrt{\mathbb{E}\Big[\Big({Z}_j^\prime{\Lambda}_i^2{Z}_j-\textnormal{tr}({\Lambda}_i^2)\Big)\Big]^2\mathbb{E}\Big[\Big({Z}_j^\prime{\Lambda}_i^2{Z}_j-\textnormal{tr}({\Lambda}_i^2)\Big)\Big]^4}\\
=&O\big(\textnormal{tr}^3({\Lambda}_i^2)\big).
\end{split}
\label{Lemma5 34}
\end{align}
Therefore, gathering (\ref{Lemma5 31})--(\ref{Lemma5 34}), the results of (\ref{haodingxi}) yield. 

Note that $\mathbb{E}(\Psi_{ii}^2)$ can be bounded by 
\begin{align}
\mathbb{E}(\Psi_{ii}^2)\leqslant 5\mathbb{E}(d_{ii}^2)+10\mathbb{E}(h_i^2)+5\mathbb{E}(\xi_i^{\star 4})+5\mathbb{E}^2(\xi_i^2).
\label{Lemma5 51}
\end{align}
For $\mathbb{E}(d_{ii}^2)$, note that
\begin{align*}
\mathbb{E}(d_{ii}^2)\leqslant 6\Big(\mathbb{E}[({Z}_i^\prime{\Lambda}_3{Z}_i)^4]+\mathbb{E}[({Z}_i^\prime{\Lambda}_1^2{Z}_i)^2({Z}_i^\prime{\Lambda}_2^2{Z}_i)^2]+\Delta^2\mathbb{E}[\textnormal{tr}^2({\Lambda}_1{Z}_i{Z}_i^\prime{\Lambda}_2\circ {\Lambda}_1{Z}_i{Z}_i^\prime{\Lambda}_2)]\Big).
\end{align*}
Using Cauchy--Schwarz inequality and (\ref{Cauchy inequality}), we get
\begin{align}
\begin{split}
({Z}_i^\prime{\Lambda}_3{Z}_i)^4=&\big[\big(({{\Lambda}_1{Z}_i})^\prime({\Lambda}_2{Z}_i)\big)^2\big]^2\leqslant \big[({{\Lambda}_1{Z}_i})^\prime({\Lambda}_1{Z}_i)({{\Lambda}_2{Z}_i})^\prime({\Lambda}_2{Z}_i)\big]^2=({Z}_i^\prime{\Lambda}_1^2{Z}_i)^2({Z}_i^\prime{\Lambda}_2^2{Z}_i)^2
\end{split}
\label{Lemma5 55}
\end{align}
and 
\begin{align*}
\begin{split}
\textnormal{tr}^2({\Lambda}_1{Z}_i{Z}_i^\prime{\Lambda}_2\circ {\Lambda}_1{Z}_i{Z}_i^\prime{\Lambda}_2)\leqslant \textnormal{tr}^2({\Lambda}_1{Z}_i{Z}_i^\prime{\Lambda}_2^2{Z}_i{Z}_i^\prime{\Lambda}_1)=({Z}_i^\prime{\Lambda}_1^2{Z}_i)^2({Z}_i^\prime{\Lambda}_2^2{Z}_i)^2.
\end{split}
\end{align*}
Thus, the order of $\mathbb{E}(d_{ii}^2)$ is 
\begin{align}
\mathbb{E}(d_{ii}^2)=O\Big(\mathbb{E}[({Z}_i^\prime{\Lambda}_1^2{Z}_i)^2({Z}_i^\prime{\Lambda}_2^2{Z}_i)^2]\Big).
\label{Lemma5 52}
\end{align}
By Cauchy--Schwarz inequality and (\ref{haodingxi}), we have
\begin{align}
\begin{split}
\mathbb{E}[({Z}_i^\prime{\Lambda}_1^2{Z}_i)^2({Z}_i^\prime{\Lambda}_2^2{Z}_i)^2]\leqslant& \sqrt{\mathbb{E}[({Z}_i^\prime{\Lambda}_1^2{Z}_i)^4]\mathbb{E}[({Z}_i^\prime{\Lambda}_2^2{Z}_i)^4]}\\
=&O\big(\textnormal{tr}^2({\Lambda}_1^2)\textnormal{tr}^2({\Lambda}_2^2)\big)\\
=&O\Big(\textnormal{tr}^2(({W}{\Sigma}_1)^2)\textnormal{tr}^2(({W}{\Sigma}_2)^2)\Big).
\end{split}
\label{Lemma5 53}
\end{align}
Gathering (\ref{Lemma5 52})--(\ref{Lemma5 53}), we have
\begin{align}
\begin{split}
\mathbb{E}(d_{ii}^2)=&O\Big(\mathbb{E}[({Z}_i^\prime{\Lambda}_1^2{Z}_i)^2({Z}_i^\prime{\Lambda}_2^2{Z}_i)^2]\Big)=O\Big(\textnormal{tr}^2(({W}{\Sigma}_1)^2)\textnormal{tr}^2(({W}{\Sigma}_2)^2)\Big).
\end{split}
\label{Lemma5 54}
\end{align}
For $\mathbb{E}(\xi_i^{\star 4})$, notice that
\begin{align}
\begin{split}
\mathbb{E}(\xi_i^{\star 4})=\mathbb{E}(\xi_i^4)-4Q\mathbb{E}(\xi_i^3)+6Q^2\mathbb{E}(\xi_i^2)-3Q^4.
\label{Lemma5 61}
\end{split}
\end{align}
Note that
\begin{align}
\begin{split}
&\mathbb{E}(\xi_i^{4})=O\Big(\textnormal{tr}^2(({W}{\Sigma}_1)^2)\textnormal{tr}^2(({W}{\Sigma}_2)^2)\Big),\\
&\mathbb{E}(\xi_i^{2})=o\Big(\textnormal{tr}(({W}{\Sigma}_1)^2)\textnormal{tr}(({W}{\Sigma}_2)^2)\Big)
\end{split}
\label{Lemma5 62}
\end{align}
by (\ref{Lemma5 55}), (\ref{Lemma5 53}), and (\ref{Lemma5 1}). In addition, using Cauchy--Schwarz inequality, we get 
\begin{align}
|\mathbb{E}(\xi_i^{3})|\leqslant \sqrt{\mathbb{E}(\xi_i^{2})\mathbb{E}(\xi_i^{4})}=O\Big(\textnormal{tr}^{\frac{3}{2}}(({W}{\Sigma}_1)^2)\textnormal{tr}^{\frac{3}{2}}(({W}{\Sigma}_2)^2)\Big).
\label{Lemma5 63}
\end{align}
Under Assumption~3.3, $Q^2=o\Big(\textnormal{tr}(({W}{\Sigma}_1)^2)\textnormal{tr}(({W}{\Sigma}_2)^2)\Big)$. Then, gathering (\ref{Lemma5 61})--(\ref{Lemma5 63}), we derive that
\begin{align}
\begin{split}
\mathbb{E}(\xi_i^{\star 4})=&\mathbb{E}(\xi_i^4)-4Q\mathbb{E}(\xi_i^3)+6Q^2\mathbb{E}(\xi_i^2)-3Q^4=O\Big(\textnormal{tr}^2(({W}{\Sigma}_1)^2)\textnormal{tr}^2(({W}{\Sigma}_2)^2)\Big).
\end{split}
\label{Lemma5 64}
\end{align}
Finally, gathering (\ref{Lemma5 1}), (\ref{Lemma5 26}), (\ref{Lemma5 51}), (\ref{Lemma5 54}), and (\ref{Lemma5 64}) completes our proof of (3).

For (4), note that $\eta_{ij}^{\star 2}\leqslant 4(\eta_{ij}^2+\xi_i^2+\xi_j^2+Q^2)$. Thus, we have
\begin{align*}
\begin{split}
\eta_{ij}^{\star 4}\leqslant& 16(\eta_{ij}^2+\xi_i^2+\xi_j^2+Q^2)^2\leqslant 64(\eta_{ij}^4+\xi_i^4+\xi_j^4+Q^4),
\end{split}
\end{align*}
which implies that
\begin{align}
\mathbb{E}(\eta_{ij}^{\star 4})\leqslant 64(\mathbb{E}(\eta_{ij}^4)+2\mathbb{E}(\xi_i^4)+Q^4).
\label{Lemma5 71}
\end{align}

Let $C_{1ij}={Z}_i^\prime{\Lambda}_1{Z}_j{Z}_j^\prime{\Lambda}_1{Z}_i-{Z}_j^\prime{\Lambda}_1^2{Z}_j$ and $C_{2ij}={Z}_i^\prime{\Lambda}_2{Z}_j{Z}_j^\prime{\Lambda}_2{Z}_i-{Z}_j^\prime{\Lambda}_2^2{Z}_j$. For $\mathbb{E}(\eta_{ij}^4)$, note that 
\begin{align*}
\eta_{ij}^2=&({Z}_i^\prime{\Lambda}_1{Z}_j{Z}_j^\prime{\Lambda}_2{Z}_i)^2\\
=&({Z}_i^\prime{\Lambda}_1{Z}_j{Z}_j^\prime{\Lambda}_1{Z}_i)({Z}_i^\prime{\Lambda}_2{Z}_j{Z}_j^\prime{\Lambda}_2{Z}_i)\\
=&(C_{1ij}+{Z}_j^\prime{\Lambda}_1^2{Z}_j)(C_{2ij}+{Z}_j^\prime{\Lambda}_2^2{Z}_j)\\
=&C_{1ij}C_{2ij}+C_{1ij}{Z}_j^\prime{\Lambda}_2^2{Z}_j+C_{2ij}{Z}_j^\prime{\Lambda}_1^2{Z}_j+{Z}_j^\prime{\Lambda}_1^2{Z}_j{Z}_j^\prime{\Lambda}_2^2{Z}_j.
\end{align*}
Therefore, $\mathbb{E}(\eta_{ij}^4)$ is bounded by
\begin{align}
\begin{split}
\mathbb{E}(\eta_{ij}^4)\leqslant& 4\Big(\mathbb{E}(C_{1ij}^2C_{2ij}^2)+\mathbb{E}\big(C_{1ij}^2({Z}_j^\prime{\Lambda}_2^2{Z}_j)^2\big)+\mathbb{E}\big(C_{2ij}^2({Z}_j^\prime{\Lambda}_1^2{Z}_j)^2\big)\\
&+\mathbb{E}\big(({Z}_j^\prime{\Lambda}_1^2{Z}_j)^2({Z}_j^\prime{\Lambda}_2^2{Z}_j)^2\big)\Big).
\end{split}
\label{Lemma5 72}
\end{align}
Using Cauchy--Schwarz inequality and \autoref{Lemma1}(2), we have
\begin{align}
\begin{split}
\mathbb{E}(C_{1ij}^2C_{2ij}^2)=&\mathbb{E}[\mathbb{E}(C_{1ij}^2C_{2ij}^2|{Z}_j)]\\
\leqslant& \mathbb{E}\left[\sqrt{\mathbb{E}(C_{1ij}^4|{Z}_j)\mathbb{E}(C_{2ij}^4|{Z}_j)}\right]\\
=&O\Big(\mathbb{E}\big[\textnormal{tr}(({\Lambda}_1{Z}_j{Z}_j^\prime{\Lambda}_1)^2)\textnormal{tr}(({\Lambda}_2{Z}_j{Z}_j^\prime{\Lambda}_2)^2)\big]\Big)\\
=&O\Big(\mathbb{E}[({Z}_j^\prime{\Lambda}_1^2{Z}_j)^2({Z}_j^\prime{\Lambda}_2^2{Z}_j)^2]\Big).
\end{split}
\label{Lemma5 73}
\end{align}
By Lyapunov inequality and \autoref{Lemma1}(2), we get
\begin{align}
\begin{split}
\mathbb{E}\big(C_{1ij}^2({Z}_j^\prime{\Lambda}_2^2{Z}_j)^2\big)=&\mathbb{E}\big[({Z}_j^\prime{\Lambda}_2^2{Z}_j)^2\mathbb{E}\big(C_{1ij}^2|{Z}_j\big)\big]\\
\leqslant& \mathbb{E}\left[({Z}_j^\prime{\Lambda}_2^2{Z}_j)^2\sqrt{\mathbb{E}\big(C_{1ij}^4|{Z}_j\big)}\right]\\
=&O\Big(\mathbb{E}\big[({Z}_j^\prime{\Lambda}_2^2{Z}_j)^2\textnormal{tr}(({\Lambda}_1{Z}_j{Z}_j^\prime{\Lambda}_1)^2))\big]\Big)\\
=&O\Big(\mathbb{E}[({Z}_j^\prime{\Lambda}_1^2{Z}_j)^2({Z}_j^\prime{\Lambda}_2^2{Z}_j)^2]\Big)
\end{split}
\label{Lemma5 74}
\end{align}
and similarly, 
\begin{align}
\mathbb{E}\big(C_{2ij}^2({Z}_j^\prime{\Lambda}_1^2{Z}_j)^2\big)=&O\Big(\mathbb{E}[({Z}_j^\prime{\Lambda}_1^2{Z}_j)^2({Z}_j^\prime{\Lambda}_2^2{Z}_j)^2]\Big).
\label{Lemma5 75}
\end{align}
Gathering (\ref{Lemma5 72})--(\ref{Lemma5 75}) and (\ref{Lemma5 53}), we get
\begin{align}
\begin{split}
\mathbb{E}(\eta_{ij}^4)=&O\Big(\mathbb{E}[({Z}_j^\prime{\Lambda}_1^2{Z}_j)^2({Z}_j^\prime{\Lambda}_2^2{Z}_j)^2]\Big)=O\Big(\textnormal{tr}^2(({W}{\Sigma}_1)^2)\textnormal{tr}^2(({W}{\Sigma}_2)^2)\Big).
\end{split}
\label{Lemma5 76}
\end{align}
Finally, under Assumption~3.3, we prove (4) by gathering (\ref{Lemma5 62}), (\ref{Lemma5 71}), and (\ref{Lemma5 76}).
\end{proof}

\begin{lemma}
\label{Lemma6}
Under Assumptions 3.1--3.3, as $n,p\rightarrow \infty$, 
\begin{align*}
\frac{1}{n(n-1)}\sum_{i\neq j}^n\frac{\eta_{ij}^{\star}}{\sigma_n}\stackrel{d}\longrightarrow \mathcal{N}(0,1).
\end{align*}
\end{lemma}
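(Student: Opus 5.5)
The plan is to write the degenerate $U$-statistic as a sum of martingale differences and apply the martingale central limit theorem \citep[Corollary 3.1]{hall2014martingale}. First, $\eta_{ij}^\star$ is symmetric in $(i,j)$ because $\eta_{ij}=\eta_{ji}$. It is also degenerate: $\mathbb{E}(\eta_{ij}\mid {Z}_i)={Z}_i^\prime{\Lambda}_1{\Lambda}_2{Z}_i=\xi_i$, so $\mathbb{E}(\eta_{ij}^\star\mid{Z}_i)=\xi_i-\xi_i-Q+Q=0$, and likewise given ${Z}_j$.

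Set $\mathcal{F}_k=\sigma({Z}_1,\ldots,{Z}_k)$ and
\begin{align*}
S_n=\frac{1}{n(n-1)}\sum_{i\neq j}^n\frac{\eta_{ij}^\star}{\sigma_n}=\sum_{j=2}^n D_{nj},\qquad D_{nj}=\frac{2}{n(n-1)\sigma_n}\sum_{i=1}^{j-1}\eta_{ij}^\star .
\end{align*}
By degeneracy, $\mathbb{E}(D_{nj}\mid\mathcal{F}_{j-1})=0$, so $\{D_{nj},\mathcal{F}_j\}$ is a martingale difference array. It then suffices to prove two conditions:

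(a) the conditional variance $\sum_{j}\mathbb{E}(D_{nj}^2\mid\mathcal{F}_{j-1})\stackrel{p}{\longrightarrow}1$;

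(b) the Lyapunov condition $\sum_j\mathbb{E}(D_{nj}^4)\to0$, which implies the conditional Lindeberg condition.

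For (a), the definition of $\Psi_{it}$ gives
\begin{align*}
\sum_{j=2}^n\mathbb{E}(D_{nj}^2\mid\mathcal{F}_{j-1})=\frac{4}{n^2(n-1)^2\sigma_n^2}\sum_{j=2}^n\Big(\sum_{i=1}^{j-1}\Psi_{ii}+2\sum_{1\le s<t\le j-1}\Psi_{st}\Big).
\end{align*}
The diagonal part $\frac{4}{n^2(n-1)^2\sigma_n^2}\sum_{i=1}^{n-1}(n-i)\Psi_{ii}$ has mean $\frac{2}{n(n-1)\sigma_n^2}\mathbb{E}(\Psi_{11})$. By \autoref{Lemma5}(1) and the definition of $\sigma_n^2$, this mean is $1+o(1)$.

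The variance of the diagonal part is $O\big(n^{-4}\sigma_n^{-4}\cdot n\cdot n^2\,\mathbb{E}(\Psi_{11}^2)\big)$, since the $\Psi_{ii}$ are independent. By \autoref{Lemma5}(3) this is $O(n^{-1})\to0$.

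For the off-diagonal part, note that $\mathbb{E}(\Psi_{st}\mid{Z}_s)=0$ for $s\neq t$, again by degeneracy. Hence the products $\Psi_{st}\Psi_{s't'}$ are uncorrelated unless $\{s,t\}=\{s',t'\}$. The second moment is therefore $O\big(n^{-8}\sigma_n^{-4}\cdot n^2\cdot n^2\,\mathbb{E}(\Psi_{12}^2)\big)=o(1)$ by \autoref{Lemma5}(2). Chebyshev's inequality then gives (a).

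For (b), expand $\mathbb{E}(D_{nj}^4)$. Degeneracy kills every term in which some index $i$ appears exactly once among the four factors. What remains is at most
\begin{align*}
(j-1)\,\mathbb{E}(\eta_{12}^{\star4})+3(j-1)^2\,\mathbb{E}(\eta_{1j}^{\star2}\eta_{2j}^{\star2}).
\end{align*}
By Cauchy--Schwarz, $\mathbb{E}(\eta_{1j}^{\star2}\eta_{2j}^{\star2})\le\mathbb{E}(\eta_{12}^{\star4})$. \autoref{Lemma5}(4) bounds the latter by $O(\textnormal{tr}^2((W\Sigma_1)^2)\textnormal{tr}^2((W\Sigma_2)^2))=O(n^4\sigma_n^4)$. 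Summing over $j$ yields
\begin{align*}
\sum_j\mathbb{E}(D_{nj}^4)=O\big(n^{-8}\sigma_n^{-4}\,n^3\,n^4\sigma_n^4\big)=O(n^{-1})\to0.
\end{align*}

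The main obstacle is the off-diagonal bookkeeping in (a). Degeneracy of $\Psi_{st}$ holds because $\mathbb{E}(\eta_{tj}^\star\mid{Z}_j)=0$. One must also check that the triple sums, with weights $(n-\max(s,t))$, collapse to the stated order, and that the bounds of \autoref{Lemma5} apply uniformly to all index configurations. All of this goes through under Assumptions 3.1--3.3, with Assumption 3.3 ensuring that the $\Lambda_3$-terms are negligible in \autoref{Lemma5}(1)--(2).
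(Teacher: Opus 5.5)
Your proposal is correct and follows essentially the same route as the paper's proof. Both use the same martingale-difference representation (your $D_{nj}$ is the paper's $2W_j/\sigma_n$), show that the conditional variance converges through its mean (Lemma~\ref{Lemma5}(1)) and a variance bound from Lemma~\ref{Lemma5}(2)--(3), and verify conditional Lindeberg through a fourth-moment bound before invoking Corollary 3.1 of Hall and Heyde (1980).

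The differences are minor. You bound $\mathbb{E}(\eta_{1j}^{\star 2}\eta_{2j}^{\star 2})$ by Cauchy--Schwarz and Lemma~\ref{Lemma5}(4), whereas the paper identifies this term with $\mathbb{E}(\Psi_{jj}^2)$ and uses Lemma~\ref{Lemma5}(3). Also, the prefactor in your diagonal-variance bound should be $n^{-8}\sigma_n^{-4}$ rather than $n^{-4}\sigma_n^{-4}$; your stated $O(n^{-1})$ is correct once that is fixed.
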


\begin{proof}[\textbf{Proof}]
for notational brevity, let ${\omega}_i=({X}_i,{Y}_i)$, $i=1,2,\ldots,n$. For $i\neq j$, define $\phi_{ij}=\eta_{ij}^{\star}/(n(n-1))$.
Next, let $W_j$ denote $\sum_{i=1}^{j-1}\phi_{ij}$ and $\mathcal{S}_l=\sum_{j=2}^lW_j$. Let $\mathcal{F}_l$ be the $\sigma$-algebra generated by $\{{\omega}_1,{\omega}_2,\ldots,{\omega}_l\}$. Then, it is obvious to note that $\mathcal{F}_{l}\subset\mathcal{F}_{l+1}$ for $2\leqslant l \leqslant n$, $\mathbb{E}(\mathcal{S}_l)=0$, and $\mathcal{S}_l$ is square-integrable. In addition, for any $l_1>l$,
\begin{align*}
\mathbb{E}(\mathcal{S}_{l_1}|\mathcal{F}_l)=&\mathcal{S}_l+\mathbb{E}\left(\sum_{j=l+1}^{l_1}W_j|\mathcal{F}_l\right)\\
=&\mathcal{S}_l+\mathbb{E}\left(\sum_{j=l+1}^{l_1}\sum_{i=1}^{j-1}\phi_{ij}|\mathcal{F}_l\right)\\
=&\mathcal{S}_l+\sum_{j=l+1}^{l_1}\left[\sum_{i=1}^{l}\mathbb{E}(\phi_{ij}|\mathcal{F}_l)+\sum_{i=l+1}^{j-1}\mathbb{E}(\phi_{ij}|\mathcal{F}_l)\right].
\end{align*}
Since $\mathbb{E}(\phi_{ij})=0, \ \mathbb{E}(\phi_{ij}|{\omega}_i)=0$, we obtain
\begin{align*}
\mathbb{E}(\mathcal{S}_{l_1}|\mathcal{F}_l)=\mathcal{S}_l+\sum_{j=l+1}^{l_1}\left[\sum_{i=1}^{l}\mathbb{E}(\phi_{ij}|\mathcal{F}_l)+\sum_{i=l+1}^{j-1}\mathbb{E}(\phi_{ij}|\mathcal{F}_l)\right]=\mathcal{S}_l,
\end{align*}
which implies that $\{\mathcal{S}_l,\mathcal{F}_l\}_{r=2}^{n}$ is a zero-mean and square-integrable martingale sequence. In order to apply the Central Limit Theorem for martingale difference sequences, we first establish that under Assumptions 3.1--3.3, as $n,p\rightarrow \infty$,
\begin{align}
\frac{1}{\sigma_{n}^2}\sum_{j=2}^{n}\mathbb{E}\left(\left.W_j^2\right|\mathcal{F}_{j-1}\right)\stackrel{p}\longrightarrow\frac{1}{4}.
\label{condition1}
\end{align}

By the Law of Iterated Expectations, we have
\begin{align*}
\mathbb{E}\left[\sum_{j=2}^{n}\mathbb{E}\left(\left.W_j^2\right|\mathcal{F}_{j-1}\right)\right]=&\sum_{j=2}^{n}\mathbb{E}\left(W_j^2\right)=\sum_{j=2}^{n}\sum_{i,t=1}^{j-1}\mathbb{E}(\phi_{ij}\phi_{tj}).
\end{align*}
Note that if $i\neq t$, $\mathbb{E}(\phi_{ij}\phi_{tj})=0$. Therefore, by \autoref{Lemma5}, we obtain 
\begin{equation*}
\begin{split}
\mathbb{E}\left[\sum_{j=2}^{n}\mathbb{E}\left(\left.W_j^2\right|\mathcal{F}_{j-1}\right)\right]=&\frac{1}{n^2(n-1)^2}\sum_{j=2}^{n}\sum_{i=1}^{j-1}\mathbb{E}(\eta_{ij}^{\star 2})\\
=&\frac{1}{n^2(n-1)^2}\sum_{j=2}^{n}\sum_{i=1}^{j-1}\mathbb{E}(\Psi_{ii})\\
=&\frac{\textnormal{tr}(({W}{\Sigma}_1)^2)\textnormal{tr}(({W}{\Sigma}_2)^2)}{2n(n-1)}(1+o(1))\\
=&\frac{\sigma_n^2}{4}(1+o(1)),
\end{split}
\end{equation*}
which further implies 
\begin{align}
\mathbb{E}\left[\frac{1}{\sigma_{n}^2}\sum_{j=2}^{n}\mathbb{E}\left(\left.W_j^2\right|\mathcal{F}_{j-1}\right)\right]=\frac{1}{4}(1+o(1))\rightarrow \frac{1}{4}.
\label{Lemma6 11}
\end{align}

Next, note that the variance of $\sigma_n^{-2}\sum_{j=2}^{n}\mathbb{E}\left(\left.W_j^2\right|\mathcal{F}_{j-1}\right)$ is 
\begin{align}
\begin{split}
&\textnormal{Var}\left[\frac{1}{\sigma_{n}^2}\sum_{j=2}^{n}\mathbb{E}\left(\left.W_j^2\right|\mathcal{F}_{j-1}\right)\right]\\
=&\frac{1}{\sigma_n^4n^4(n-1)^4}\textnormal{Var}\left(\sum_{j=2}^n\sum_{i,t=1}^{j-1}\Psi_{it}\right)\\
=&\frac{1}{\sigma_n^4n^4(n-1)^4}\left[\sum_{j=2}^n\textnormal{Var}\left(\sum_{i,t=1}^{j-1}\Psi_{it}\right)\right.\left.+2\sum_{2\leqslant j_1<j\leqslant n}\textnormal{Cov}\left(\sum_{i,t=1}^{j-1}\Psi_{it},\sum_{i_1,t_1=1}^{j_1-1}\Psi_{i_1t_1}\right)\right].
\end{split}
\label{Lemma6 12}
\end{align}
By direct calculations, we obtain
\begin{align*}
&\sum_{2\leqslant j_1<j\leqslant n}\textnormal{Cov}\left(\sum_{i,t=1}^{j-1}\Psi_{it},\sum_{i_1,t_1=1}^{j_1-1}\Psi_{i_1t_1}\right)\\
&=\sum_{2\leqslant j_1<j\leqslant n} \mathbb{E}\left[\left(\sum_{i\neq t}^{j-1}\Psi_{it}+\sum_{i=1}^{j-1}\big(\Psi_{ii}-\mathbb{E}(\Psi_{ii})\big)\right)\right.\left.\cdot\left(\sum_{i_1\neq t_1}^{j_1-1}\Psi_{i_1t_1}+\sum_{i_1=1}^{j_1-1}\big(\Psi_{i_1i_1}-\mathbb{E}(\Psi_{i_1i_1})\big)\right)\right]\\
=&\sum_{2\leqslant j_1<j\leqslant n}\left(\sum_{i\neq t}^{j-1}\sum_{i_1\neq t_1}^{j_1-1}\mathbb{E}(\Psi_{it}\Psi_{i_1t_1})+\right.+\sum_{i=1}^{j-1}\sum_{i_1\neq t_1}^{j_1-1}\mathbb{E}\big[\big(\Psi_{ii}-\mathbb{E}(\Psi_{ii})\big)\Psi_{i_1t_1}\big]\\
&+\sum_{i\neq t}^{j-1}\sum_{i_1=1}^{j_1-1}\mathbb{E}\big[\big(\Psi_{i_1i_1}-\mathbb{E}(\Psi_{i_1i_1})\big)\Psi_{it}\big]\left.+\sum_{i=1}^{j-1}\sum_{i_1=1}^{j_1-1}\mathbb{E}\big[\big(\Psi_{ii}-\mathbb{E}(\Psi_{ii})\big)\big(\Psi_{i_1i_1}-\mathbb{E}(\Psi_{i_1i_1})\big)\big]\right).
\end{align*}
Recall that $\Psi_{it}=\Psi_{ti}$, and for $i\neq t$, $\mathbb{E}(\Psi_{it}|{Z}_i)=0$. Thus, by \autoref{Lemma5}, the above expression can be further simplified as 
\begin{align}
\begin{split}
\sum_{2\leqslant j_1<j\leqslant n}\textnormal{Cov}\left(\sum_{i,t=1}^{j-1}\Psi_{it},\sum_{i_1,t_1=1}^{j_1-1}\Psi_{i_1t_1}\right)=&
\sum_{2\leqslant j_1<j\leqslant n}\left(2\sum_{i\neq t}^{j_1-1}\mathbb{E}(\Psi_{it}^2)+\sum_{i=1}^{j_1-1}\big(\mathbb{E}(\Psi_{ii}^2)-\mathbb{E}^2(\Psi_{ii})\big)\right)\\
=&O\big(n^4\mathbb{E}(\Psi_{it}^2)+n^{3}\big(\mathbb{E}(\Psi_{ii}^2)-\mathbb{E}^2(\Psi_{ii})\big)\big)\\
=&o(n^8\sigma_n^4)+O(n^7\sigma_n^4).
\end{split}
\label{Lemma6 13}
\end{align}
Similarly, we have
\begin{align}
\begin{split}
\sum_{j=2}^n\textnormal{Var}\left(\sum_{i,t=1}^{j-1}\Psi_{it}\right)=&\sum_{j=2}^n \mathbb{E}\left[\left(\sum_{i\neq t}^{j-1}\Psi_{it}+\sum_{i=1}^{j-1}\big(\Psi_{ii}-\mathbb{E}(\Psi_{ii})\big)\right)^2\right]\\
=&\sum_{j=2}^n\left(\sum_{i\neq t}^{j-1}\sum_{i_1\neq t_1}^{j-1}\mathbb{E}(\Psi_{it}\Psi_{i_1t_1})\right.+2\sum_{i\neq t}^{j-1}\sum_{i_1=1}^{j-1}\mathbb{E}\big[\big(\Psi_{i_1i_1}-\mathbb{E}(\Psi_{i_1i_1})\big)\Psi_{it}\big]\\
&\left.+\sum_{i=1}^{j-1}\sum_{i_1=1}^{j-1}\mathbb{E}\big[\big(\Psi_{ii}-\mathbb{E}(\Psi_{ii})\big)\big(\Psi_{i_1i_1}-\mathbb{E}(\Psi_{i_1i_1})\big)\big]\right)\\
=&\sum_{j=2}^n\left(2\sum_{i\neq t}^{j-1}\mathbb{E}(\Psi_{it}^2)+\sum_{i=1}^{j-1}\big(\mathbb{E}(\Psi_{ii}^2)-\mathbb{E}^2(\Psi_{ii})\big)\right)\\
=&O\big(n^3\mathbb{E}(\Psi_{it}^2)+n^{2}\big(\mathbb{E}(\Psi_{ii}^2)-\mathbb{E}^2(\Psi_{ii})\big)\big)\\
=&O(n^7\sigma_n^4+n^6\sigma_n^4).
\end{split}
\label{Lemma6 14}
\end{align}
Gathering (\ref{Lemma6 12})--(\ref{Lemma6 14}), we get
\begin{align}
\begin{split}
\textnormal{Var}\left[\frac{1}{\sigma_{n}^2}\sum_{j=2}^{n}\mathbb{E}\left(\left.W_j^2\right|\mathcal{F}_{j-1}\right)\right]=&\frac{1}{\sigma_n^4n^4(n-1)^4}\left[\sum_{j=2}^n\textnormal{Var}\left(\sum_{i,t=1}^{j-1}\Psi_{it}\right)\right.\\
&\left.+2\sum_{2\leqslant j_1<j\leqslant n}\textnormal{Cov}\left(\sum_{i,t=1}^{j-1}\Psi_{it},\sum_{i_1,t_1=1}^{j_1-1}\Psi_{i_1t_1}\right)\right]\\
=&\frac{1}{\sigma_n^4n^4(n-1)^4}\Big(o(n^8\sigma_n^4)+O(n^7\sigma_n^4)\Big)\\
\rightarrow& 0.
\end{split}
\label{Lemma6 15}
\end{align}
By (\ref{Lemma6 11}) and (\ref{Lemma6 15}), the results of (\ref{condition1}) yield. 

Next, we establish that under Assumptions 3.1--3.3, $\forall \varepsilon>0$, as $n,p\rightarrow\infty$,
\begin{align}
\frac{1}{\sigma_{n}^2}\sum_{j=2}^{n}\mathbb{E}\left[\left.W_j^2\mathbf{1}(\left|W_j\right|>\varepsilon\sigma_{n})\right|\mathcal{F}_{j-1}\right]\stackrel{p}\longrightarrow0.
\label{Lemm5 21}
\end{align}

By Markov's inequality, it holds that
\begin{align}
\frac{1}{\sigma_{n}^2}\sum_{j=2}^{n}\mathbb{E}\left[\left.W_j^2\mathbf{1}(\left|W_j\right|>\varepsilon\sigma_{n})\right|\mathcal{F}_{j-1}\right]\leqslant \frac{1}{\varepsilon^2\sigma_{n}^4}\sum_{j=2}^{n}\mathbb{E}\left(\left.W_j^4\right|\mathcal{F}_{j-1}\right).
\label{Lemm5 22}
\end{align}
Note that the sequence $\{\sigma_n^{-4}\sum_{j=2}^{n}\mathbb{E}(\left.W_j^4\right|\mathcal{F}_{j-1})\}_{n=2}^\infty$ is nonnegative. Therefore, to show that $$\sigma_n^{-4}\sum_{j=2}^{n}\mathbb{E}(\left.W_j^4\right|\mathcal{F}_{j-1})=o_p(1),$$
we only need to prove that its expectation converges to 0 as $n,p\rightarrow\infty$.

By direct calculations and \autoref{Lemma5}, we obtain
\begin{align}
\begin{split}
\mathbb{E}\left[\frac{1}{\sigma_n^4}\sum_{j=2}^{n}\mathbb{E}\left(\left.W_j^4\right|\mathcal{F}_{j-1}\right)\right]=&\frac{1}{\sigma_n^4}\sum_{j=2}^{n}\mathbb{E}(W_j^4)\\
=&\frac{1}{\sigma_n^4n^4(n-1)^4}\sum_{j=2}^{n}\sum_{i_1,i_2,i_3,i_4=1}^{j-1}\mathbb{E}(\eta_{i_1j}^{\star}\eta_{i_2j}^{\star}\eta_{i_3j}^{\star}\eta_{i_4j}^{\star})\\
=&\frac{1}{\sigma_n^4n^4(n-1)^4}\left(3\sum_{j=2}^{n}\sum_{i\neq t}^{j-1}\mathbb{E}(\eta_{ij}^{\star 2}\eta_{tj}^{\star 2})+\sum_{j=2}^{n}\sum_{i=1}^{j-1}\mathbb{E}(\eta_{ij}^{\star 4})\right)\\
=&\frac{1}{\sigma_n^4n^4(n-1)^4}\left(3\sum_{j=2}^{n}\sum_{i\neq t}^{j-1}\mathbb{E}(\Psi_{jj}^2)+\sum_{j=2}^{n}\sum_{i=1}^{j-1}\mathbb{E}(\eta_{ij}^{\star 4})\right)\\
=&\frac{1}{\sigma_n^4n^4(n-1)^4}\Big(O(n^7\sigma_n^4)+O(n^6\sigma_n^4)\Big)\\
\rightarrow& 0.
\end{split}
\label{Lemma6 23}
\end{align}
Then, (\ref{Lemm5 21}) yields by gathering (\ref{Lemm5 22})--(\ref{Lemma6 23}).

With  (\ref{condition1}) and (\ref{Lemm5 21}) in hand, by Corollary 3.1 of Hall and Heyde (1980), we have
\begin{align*}
\frac{1}{n(n-1)}\sum_{i\neq j}^n\frac{\eta_{ij}^{\star}}{\sigma_n}=\frac{2}{\sigma_{n}}\sum_{j=2}^{n}W_j\stackrel{d}\longrightarrow\mathcal{N}(0,1),
\end{align*}
which completes our proof.
\end{proof}

\newpage

\section{Proof of main results}
\label{Proof of main results}
\setcounter{equation}{0}

\begin{proof}[\textbf{Proof of Theorem~2.1}]
It can be easily verified that
\begin{align*}
\int_{\mathbb{R}^p}{\xi}{\xi}^\prime w({\xi})d{\xi}={B}+{a}{a}^\prime={W}.
\end{align*}
Then, with direct calculations, we obtain
\begin{align*}
Q=&\int_{\mathbb{R}^p}\int_{\mathbb{R}^p} ({\xi}^\prime {\Sigma}{\eta})^2 w({\xi})w({\eta})d{\xi}d{\eta}\\
=&\int_{\mathbb{R}^p}\int_{\mathbb{R}^p} ({\xi}^\prime {\Sigma}{\eta}{\eta}^\prime{\Sigma}^\prime{\xi}) w({\xi})w({\eta})d{\xi}d{\eta}\\
=&\textnormal{tr}\left[\left(\int_{\mathbb{R}^p}{\xi}{\xi}^\prime {\Sigma}w({\xi})d{\xi}\right)\left(\int_{\mathbb{R}^p}{\eta}{\eta}^\prime {\Sigma}^\prime w({\eta})d{\eta}\right)\right]\\
=&\textnormal{tr}({W}{\Sigma}{W}{\Sigma}^\prime),
\end{align*}
which completes our proof.  
\end{proof}

\begin{proof}[\textbf{Proof of Lemma~3.1}]
By the standard Hoeffding decomposition (Lee, 1990, p.26), we have
\begin{align}
\begin{split}
\tilde{Q}_{n1}=&Q+\frac{2}{n}\sum_{i=1}^n\xi_i^\star+\frac{1}{n(n-1)}\sum_{i\neq j}^n\eta_{ij}^\star.
\end{split}
\label{decomposition}
\end{align}
Then, using the variance formula for Hoeffding decomposition (Lee, 1990, p.30), we get
\begin{align}
\begin{split}
\textnormal{Var}(\tilde{Q}_{n1})=&\frac{4}{n}\mathbb{E}(\xi_i^{\star 2})+\frac{2}{n(n-1)}\mathbb{E}(\eta_{ij}^{\star 2}).
\end{split}
\label{variance_Q_n1}
\end{align}
Gathering (\ref{variance_Q_n1}) and \autoref{Lemma4}, the results of $\textnormal{Var}(\tilde{Q}_{n1})$ directly follows.

Similarly, for $\tilde{Q}_{n2}$, the Hoeffding decomposition (Lee, 1990, p.30) implies that
\begin{align*}
\tilde{Q}_{n2}=\frac{1}{n(n-1)}\sum_{i\neq j}^n{Z}_i{\Lambda}_3{Z}_j+\frac{1}{n(n-1)(n-2)}\sum_{i\neq j\neq k}^n({Z}_i{\Lambda}_1{Z}_k{Z}_k^\prime{\Lambda}_2{Z}_j-{Z}_i{\Lambda}_3{Z}_j).
\end{align*}
Using the variance formula for Hoeffding decomposition (Lee, 1990, p.30), it yields that
\begin{align}
\textnormal{Var}(\tilde{Q}_{n2})=O\big(n^{-2}\mathbb{E}[({Z}_i{\Lambda}_3{Z}_j)^2]+n^{-3}\mathbb{E}[({Z}_i{\Lambda}_1{Z}_k{Z}_k^\prime{\Lambda}_2{Z}_j)^2]\big).
\label{var Q n2}
\end{align}
With straightforward calculations, we get
\begin{align}
\mathbb{E}[({Z}_i{\Lambda}_3{Z}_j)^2]=\textnormal{tr}({\Lambda}_3{\Lambda}_3^\prime)=\textnormal{tr}({W}{\Sigma}_1{W}{\Sigma}{W}{\Sigma}_2{\Sigma}^\prime).
\label{var Q n2 1}
\end{align}
By \autoref{Lemma1}(1), we have
\begin{align}
\begin{split}
\mathbb{E}[({Z}_i{\Lambda}_1{Z}_k{Z}_k^\prime{\Lambda}_2{Z}_j)^2]=&\mathbb{E}[\mathbb{E}({Z}_k^\prime{\Lambda}_1{Z}_i{Z}_i^\prime{\Lambda}_1{Z}_k{Z}_k^\prime{\Lambda}_2{Z}_j{Z}_j^\prime{\Lambda}_2{Z}_k|{Z}_k)]\\
=&\mathbb{E}({Z}_k^\prime{\Lambda}_1^2{Z}_k{Z}_k^\prime{\Lambda}_2^2{Z}_k)\\
=&\textnormal{tr}({\Lambda}_1^2)\textnormal{tr}({\Lambda}_2^2)+2\textnormal{tr}({\Lambda}_1^2{\Lambda}_2^2)+\Delta\textnormal{tr}({\Lambda}_1^2\circ {\Lambda}_2^2).
\end{split}
\label{var Q n2 2}
\end{align}
Note that $\textnormal{tr}({\Lambda}_1^2{\Lambda}_2^2)=\textnormal{tr}({\Lambda}_3{\Lambda}_3^\prime)$. Then, gathering (\ref{var Q n2})--(\ref{var Q n2 2}) yields the variance expression of $\tilde{Q}_{n2}$. 

Finally, note that $\tilde{Q}_{n3}$ is a $U$-statistic with third-order degeneracy. Therefore, using the variance formula for Hoeffding decomposition (Lee, 1990, p.30), we get
\begin{align}
\textnormal{Var}(\tilde{Q}_{n3})=O\big(n^{-4}\mathbb{E}({Z}_i^\prime{\Lambda}_1{Z}_j{Z}_k^\prime{\Lambda}_2{Z}_l)^2\big).
\label{var Q n3}
\end{align}
With direct calculations, we obtain
\begin{align}
\begin{split}
\mathbb{E}({Z}_i^\prime{\Lambda}_1{Z}_j{Z}_k^\prime{\Lambda}_2{Z}_l)^2=&\mathbb{E}({Z}_i^\prime{\Lambda}_1{Z}_j{Z}_j^\prime{\Lambda}_1{Z}_i)\mathbb{E}({Z}_k^\prime{\Lambda}_2{Z}_l{Z}_l^\prime{\Lambda}_2{Z}_k)\\
=&\textnormal{tr}({\Lambda}_1^2)\textnormal{tr}({\Lambda}_2^2)\\
=&\textnormal{tr}(({W}{\Sigma}_1)^2)\textnormal{tr}(({W}{\Sigma}_2)^2).
\end{split}
\label{var Q n3 1}
\end{align}
Then, the variance expression of $\tilde{Q}_{n3}$ yields by gathering (\ref{var Q n3})--(\ref{var Q n3 1}).
\end{proof}

\begin{proof}[\textbf{Proof of Theorem~3.1}]
Note that under $\mathcal{H}_0$, $Q=\xi_i^\star=0$, therefore (\ref{decomposition}) implies that
\begin{align}
\tilde{Q}_{n1}=\frac{1}{n(n-1)}\sum_{i\neq j}^n\eta_{ij}^\star.
\end{align}
In addition, for $i=2,3$, straightforward calculations show that $\mathbb{E}(\tilde{Q}_{ni})=0$, and $\textnormal{Var}(\tilde{Q}_{ni})=o(\sigma_n^2)$ can be derived by (3.3). Therefore, we have $\tilde{Q}_{ni}/\sigma_n=o_p(1)$ for $i=2,3$. Note that Assumption~3.3 is trivially satisfied under $\mathcal{H}_0$. Then, gathering \autoref{Lemma6} and the analysis above, the desired results yield:
\begin{align*}
\frac{Q}{\sigma_n}=\frac{\tilde{Q}_{n1}}{\sigma_n}+\frac{-2\tilde{Q}_{n2}+\tilde{Q}_{n3}}{\sigma_n}=\frac{1}{n(n-1)}\sum_{i\neq j}^n\eta_{ij}^\star+o_p(1)\stackrel{d}\longrightarrow \mathcal{N}(0,1), 
\end{align*}
where the last step follows using Slutsky's theorem.    
\end{proof}

\begin{proof}[\textbf{Proof of Lemma~3.2}]
To show the ratio consistency, by Slutsky's theorem, we only need to verify $\hat{V}_x/V_x\stackrel{p}\longrightarrow 1, \ \hat{V}_y/V_y\stackrel{p}\longrightarrow 1.$

Following $\hat{Q}_n$, note that $\hat{V}_x$ can be centralized as $\hat{V}_x=\tilde{V}_{x1}-2\tilde{V}_{x2}+\tilde{V}_{x3}$,
where 
\begin{align*}
&\tilde{V}_{x1}=\frac{1}{n(n-1)}\sum_{i\neq j} ({A}_i^\prime{W}{A}_j)^2,\ \tilde{V}_{x2}=\frac{1}{n(n-1)(n-2)}\sum_{i\neq j\neq k} {A}_i^\prime{W}{A}_j{A}_j^\prime{W}{A}_k,\\
&\tilde{V}_{x3}=\frac{1}{n(n-1)(n-2)(n-3)}\sum_{i\neq j\neq k\neq l} {A}_i^\prime{W}{A}_j{A}_k^\prime{W}{A}_l.
\end{align*}
It is obvious to note that 
\begin{align*}
&\mathbb{E}(\tilde{V}_{x1})=\mathbb{E}({Z}_i{\Lambda}_1{Z}_j{Z}_j^\prime{\Lambda}_1{Z}_i)=\textnormal{tr}({\Lambda}_1^2)=V_x,\,\,\,\mathbb{E}(\tilde{V}_{x2})=\mathbb{E}(\tilde{V}_{x3})=0.
\end{align*}
Therefore, $\hat{V}_x$ is an unbiased estimator for $V_x$. Note that each $\tilde{V}_{xi}$ can be derived simply by replacing ${\Lambda}_2$ with ${\Lambda}_1$ in $\tilde{Q}_{ni}$, $i=1,2,3$. Thus, the variance of each $\tilde{V}_{xi}$ can be calculated by replacing ${\Sigma}$ and ${\Sigma}_2$ with ${\Sigma}_1$ in (Lemma~3.1). As a result, we get
\begin{flalign}
\begin{split}
\textnormal{Var}(\tilde{V}_{x1})=&O\Big(n^{-1}\textnormal{tr}\big(({W}{\Sigma}_1)^4+{\Lambda}_1^2\circ {\Lambda}_1^2\big)+n^{-2}\big[\textnormal{tr}^2(({W}{\Sigma}_1)^2)+\textnormal{tr}({\Lambda}_1^2\circ {\Lambda}_1^2+({\Lambda}_1\circ {\Lambda}_1)^2)\big]\Big),\\
\textnormal{Var}(\tilde{V}_{x2})=&O\Big(n^{-2}\textnormal{tr}\big(({W}{\Sigma}_1)^4\big)+n^{-3}\big[\textnormal{tr}^2\big(({W}{\Sigma}_1)^2\big)+\textnormal{tr}({\Lambda}_1^2\circ {\Lambda}_1^2)\big]\Big),\\
\textnormal{Var}(\tilde{V}_{x3})=&O\Big(n^{-4}\textnormal{tr}^2\big(({W}{\Sigma}_1)^2\big)\Big).
\end{split}
\label{var Vx}
\end{flalign}
By (\ref{Cauchy inequality}), we derive that
\begin{align*}
&\textnormal{tr}({\Lambda}_1^2\circ {\Lambda}_1^2)\leqslant \textnormal{tr}({\Lambda}_1^4)=\textnormal{tr}\big(({W}{\Sigma}_1)^4\big),\\
&\textnormal{tr}\big(({\Lambda}_1\circ {\Lambda}_1)^2\big)=\sum_{j,k=1}^{d}\lambda_{1,kj}^2\lambda_{1,jk}^2\leqslant \sum_{j,k,l=1}^{d}\lambda_{1,kj}^2\lambda_{1,lk}^2=\textnormal{tr}({\Lambda}_1^2\circ {\Lambda}_1^2)\leqslant \textnormal{tr}({\Lambda}_1^4)=\textnormal{tr}\big(({W}{\Sigma}_1)^4\big).
\end{align*}
Therefore, with Assumption~3.2, (\ref{var Vx}) can be further simplified as 
\begin{align}
\textnormal{Var}(\tilde{V}_{x1})=O(n^{-1}V_x^2), \ \textnormal{Var}(\tilde{V}_{x2})=O(n^{-2}V_x^2),\ \textnormal{Var}(\tilde{V}_{x3})=O(n^{-4}V_x^2).
\label{var Vx s}
\end{align}
By Cauchy--Schwarz inequality and (\ref{var Vx s}), we obtain
\begin{align*}
\begin{split}
\textnormal{Var}(\hat{V}_x/V_x)=O\Big(V_x^{-2}\big[\textnormal{Var}(\tilde{V}_{x1})+\textnormal{Var}(\tilde{V}_{x2})+\textnormal{Var}(\tilde{V}_{x3})\big]\Big)\rightarrow 0.
\end{split}
\end{align*}
Since $\mathbb{E}(\hat{V}_x/V_x)=1$, we get $\hat{V}_x/V_x\stackrel{p}\longrightarrow 1$. Using similar reasoning, we obtain $\hat{V}_y/V_y\stackrel{p}\longrightarrow 1$. Gathering these two arguments completes our proof.     
\end{proof}

\begin{proof}[\textbf{Proof of Theorem~3.2}]
Taking Theorem~3.1, Lemma~3.2, and Slutsky's theorem, the results of Theorem~3.2 follow directly. 
\end{proof}

\begin{proof}[\textbf{Proof of Theorem~3.3}]
For the Hoeffding decomposition of $\tilde{Q}_{n1}$ in (\ref{decomposition}), note that $\textnormal{tr}({\Lambda}_3\circ {\Lambda}_3)\leqslant \textnormal{tr}({\Lambda}_3{\Lambda}_3^\prime)=\textnormal{tr}({W}{\Sigma}_1{W}{\Sigma}{W}{\Sigma}_2{W}{\Sigma}^\prime)$ by \autoref{Lemma2} and  $\textnormal{tr}(({W}{\Sigma}{W}{\Sigma}^\prime)^2)=\textnormal{tr}({\Lambda}_3^2)\leqslant \textnormal{tr}({\Lambda}_3{\Lambda}_3^\prime)=\textnormal{tr}({W}{\Sigma}_1{W}{\Sigma}{W}{\Sigma}_2{W}{\Sigma}^\prime)$ by (\ref{Cauchy inequality}). Therefore, under Assumption~3.3, we have
\begin{align*}
\textnormal{Var}\left(\frac{2}{n}\sum_{i=1}^n\xi_i^\star\right)=\frac{4}{n}\mathbb{E}(\xi_i^{\star 2})\leqslant\frac{4(2+\Delta)}{n}\textnormal{tr}({W}{\Sigma}_1{W}{\Sigma}{W}{\Sigma}_2{W}{\Sigma}^\prime)=o(\sigma_n^2).
\end{align*}
Since $\mathbb{E}(\xi_i^\star)=0$, we have $2/n\sum_{i=1}^n(\xi_i^\star/\sigma_n)=o_p(1)$. for $i=2,3$, note that $\mathbb{E}(\tilde{Q}_{ni})=0$ and $\textnormal{Var}(\tilde{Q}_{ni})=o(\sigma^2_n)$ remains true by Assumption~3.3 and (3.3). Therefore, by (\ref{decomposition}), it holds that
\begin{align}
\begin{split}
\frac{\hat{Q}_n-Q}{\sigma_n}=&\frac{\tilde{Q}_{n1}-Q}{\sigma_n}+\frac{-2\tilde{Q}_{n2}+\tilde{Q}_{n3}}{\sigma_n}\\
=&\frac{1}{n(n-1)}\sum_{i\neq j}^n\frac{\eta_{ij}^\star}{\sigma_n}+\frac{2}{n}\sum_{i=1}^n\frac{\xi_i^\star}{\sigma_n}+\frac{-2\tilde{Q}_{n2}+\tilde{Q}_{n3}}{\sigma_n}\\
=&\frac{1}{n(n-1)}\sum_{i\neq j}^n\frac{\eta_{ij}^\star}{\sigma_n}+o_p(1)\\
\stackrel{d}\longrightarrow& \mathcal{N}(0,1).
\end{split}
\label{normality}
\end{align}

Next, note that
\begin{align*}
\mathbb{P}\left\{\hat{T}_n>z_{1-\alpha}\right\} = \mathbb{P}\left\{\frac{\hat{Q}_n - Q}{\sigma_n} - \left(\frac{\hat{\sigma}_n}{\sigma_n}-1\right)z_{1-\alpha}>z_{1-\alpha} - G_n(W;\Gamma_1,\Gamma_2)\right\}.
\end{align*}
By (\ref{normality}) and Slutsky's theorem, it follows that
\begin{align*}
\frac{\hat{Q}_n - Q}{\sigma_n} - \left(\frac{\hat{\sigma}_n}{\sigma_n}-1\right)z_{1-\alpha}\stackrel{d}\longrightarrow \mathcal{N}(0,1).
\end{align*}
Thus, by P{\'o}lya's Theorem (P{\'o}lya, 1930), we have
\begin{align*}
\sup_{x\in\mathbb{R}}\left|\mathbb{P}\left\{\frac{\hat{Q}_n - Q}{\sigma_n} - \left(\frac{\hat{\sigma}_n}{\sigma_n}-1\right)z_{1-\alpha}>x\right\} - \Phi(-x)\right|\rightarrow 0,\,\,\,n,p\rightarrow\infty,
\end{align*}
which implies (3.8) in Theorem 3.3.
\end{proof}

\begin{proof}[\textbf{Proof of Theorem~3.4}]
By direct calculations, we obtain
\begin{align*}
\begin{split}
&\textnormal{tr}(\bm{W}\bm{\Sigma}\bm{W}\bm{\Sigma}^\prime)=b^4\textnormal{tr}(\bm{\Sigma}\bm{\Sigma}^\prime)+a^2b^2\mathbf{1}_p^\prime(\bm{\Sigma}\bm{\Sigma}^\prime+\bm{\Sigma}^\prime\bm{\Sigma})\mathbf{1}_p+a^4(\mathbf{1}_p^\prime\bm{\Sigma}\mathbf{1}_p)^2,\\
&\textnormal{tr}((\bm{W}\bm{\Sigma}_i)^2)=b^4\textnormal{tr}(\bm{\Sigma}_i^2)+2a^2b^2\mathbf{1}_p^\prime\bm{\Sigma}_i^2\mathbf{1}_p+a^4(\mathbf{1}_p^\prime\bm{\Sigma}_i\mathbf{1}_p)^2,
\end{split}
\end{align*}
Then, it follows that
\begin{align*}
\frac{\textnormal{tr}^2({W}{\Sigma}{W}{\Sigma}^\prime)}{\textnormal{tr}(({W}{\Sigma}_1)^2)\textnormal{tr}(({W}{\Sigma}_2)^2)}=\frac{\textnormal{tr}^2({\Sigma}{\Sigma}^\prime)}{\textnormal{tr}({\Sigma}_1^2)\textnormal{tr}({\Sigma}_2^2)}\frac{\left(1+r^2\frac{\mathbf{1}_p^\prime({\Sigma}{\Sigma}^\prime+{\Sigma}^\prime{\Sigma})\mathbf{1}_p}{\textnormal{tr}({\Sigma}{\Sigma}^\prime)}+r^4\frac{(\mathbf{1}_p^\prime{\Sigma}\mathbf{1}_p)^2}{\textnormal{tr}({\Sigma}{\Sigma}^\prime)}\right)^2}{\prod_{i=1}^2\left(1+2r^2\frac{\mathbf{1}_p^\prime{\Sigma}_i^2\mathbf{1}_p}{\textnormal{tr}({\Sigma}_i^2)}+r^4\frac{(\mathbf{1}_p^\prime{\Sigma}_i\mathbf{1}_p)^2}{\textnormal{tr}({\Sigma}_i^2)}\right)},
\end{align*}
implying that
\begin{align}
\textnormal{ARE}({W},{I}_p)=&\frac{\left(1+r^2\frac{\mathbf{1}_p^\prime({\Sigma}{\Sigma}^\prime+{\Sigma}^\prime{\Sigma})\mathbf{1}_p}{\textnormal{tr}({\Sigma}{\Sigma}^\prime)}+r^4\frac{(\mathbf{1}_p^\prime{\Sigma}\mathbf{1}_p)^2}{\textnormal{tr}({\Sigma}{\Sigma}^\prime)}\right)^2}{\prod_{i=1}^2\left(1+2r^2\frac{\mathbf{1}_p^\prime{\Sigma}_i^2\mathbf{1}_p}{\textnormal{tr}({\Sigma}_i^2)}+r^4\frac{(\mathbf{1}_p^\prime{\Sigma}_i\mathbf{1}_p)^2}{\textnormal{tr}({\Sigma}_i^2)}\right)}.
\label{ARE}
\end{align}

By the Cauchy--Schwarz inequality, we obtain
\begin{align}
(1_p'\Sigma1_p)^2\leqslant p 1_p'\Sigma'\Sigma1_p,\,\,\, (1_p'\Sigma_i1_p)^2\leqslant p 1_p'\Sigma_i^21_p,\,\,\,i=1,2.
\label{inequality 3.4}
\end{align}
Furthermore, note that $1_p'\Sigma'\Sigma1_p\leqslant p\textnormal{tr}(\Sigma\Sigma')$, $1_p'\Sigma\Sigma'1_p\leqslant p\textnormal{tr}(\Sigma\Sigma')$, and $1_p'\Sigma_i^21_p\leqslant p\textnormal{tr}(\Sigma_i^2)$ for $i=1,2$. Thus, if $pr^2\rightarrow 0$, we get
\begin{align}
\begin{split}
&r^2\frac{\mathbf{1}_p^\prime({\Sigma}{\Sigma}^\prime+{\Sigma}^\prime{\Sigma})\mathbf{1}_p}{\textnormal{tr}({\Sigma}{\Sigma}^\prime)}+r^4\frac{(\mathbf{1}_p^\prime{\Sigma}\mathbf{1}_p)^2}{\textnormal{tr}({\Sigma}{\Sigma}^\prime)}\leqslant 2pr^2 + p^2r^4\rightarrow 0,\\
&2r^2\frac{\mathbf{1}_p^\prime{\Sigma}_i^2\mathbf{1}_p}{\textnormal{tr}({\Sigma}_i^2)}+r^4\frac{(\mathbf{1}_p^\prime{\Sigma}_i\mathbf{1}_p)^2}{\textnormal{tr}({\Sigma}_i^2)}\leqslant 2pr^2 + p^2r^4\rightarrow 0.
\end{split}
\label{dominate}
\end{align}
Gathering (\ref{ARE}) and (\ref{dominate}), we have
\begin{align*}
\lim_{n,p\rightarrow\infty}\textnormal{ARE}({W},{I}_p)=1
\end{align*}
provided that $pr^2\rightarrow 0$.

Next, note that
\begin{align*}
\textnormal{ARE}({W},{I}_p)\geqslant&\frac{\left[1+p^{-1}\left(2r^2+r^4\right)\frac{(\mathbf{1}_p^\prime{\Sigma}\mathbf{1}_p)^2}{\textnormal{tr}({\Sigma}{\Sigma}^\prime)}\right]^2}{\prod_{i=1}^2\left(1+(2r^2+pr^4)\frac{\mathbf{1}_p^\prime{\Sigma}_i^2\mathbf{1}_p}{\textnormal{tr}({\Sigma}_i^2)}\right)}
\end{align*}
by (\ref{ARE}) and (\ref{inequality 3.4}). Thus, under the conditions given in Theorem 3.4 (2), we have 
\begin{align*}
\lim_{n,p\rightarrow\infty}\textnormal{ARE}({W},{I}_p)>1,
\end{align*}
which ends the proof of Theorem 3.4.
\end{proof}

\section*{References}

\noindent Chen, S. X., L.-X. Zhang, and P.-S. Zhong (2010). Tests for high-dimensional covariance matrices.\textit{Journal of the American Statistical Association 105} (490), 810–819.

\noindent Hall, P. and C. C. Heyde (1980). \textit{Martingale limit theory and its application}. Academic Press.

\noindent Lee, A. J. (1990). \textit{U-statistics: Theory and Practice}. Routledge.

\noindent P{\'o}lya, G. (1930). Sur quelques points de la th{\'e}orie des probabilit{\'e}s. \textit{Annales de l'Institut Henri Poincar{\'e}}, \textbf{1}, 117--161.

\end{document}